  \providecommand\BibTeX{{%
    \normalfont B\kern-0.5em{\scshape i\kern-0.25em b}\kern-0.8em\TeX}}}
\newcommand{\rref}[2][]{\prettyref{#2}}
\definecolor{semblue}{rgb}{0,0,0.7}
\definecolor{vgreen}{rgb}{.1,.5,0}%
\definecolor{vdarkgreen}{rgb}{.06,.3,0}%
\definecolor{vred}{rgb}{.7,0,0}%
\definecolor{vblue}{rgb}{.1,.15,.62}%
\definecolor{vgray}{rgb}{.35,.35,.35}
\definecolor{darkishgray}{rgb}{.35,.35,.35}
\definecolor{vvblue}{rgb}{.14,.21,.868}%
\newtheorem{theorem}{Theorem}[section]
\newtheorem{corollary}[theorem]{Corollary}
\newtheorem{lemma}[theorem]{Lemma}
\newtheorem{definition}[theorem]{Definition}
\theoremstyle{definition}
\newtheorem{example}[theorem]{Example}
\newtheorem{remark}[theorem]{Remark}
\newcommand{\R}{\mathbb{R}}
\newcommand{\N}{\mathbb{N}}
\newcommand{\Q}{\mathbb{Q}}
\newcommand{\V}{\mathbb{V}}
\newcommand{\F}{\mathcal{F}}
\newcommand{\G}{\mathcal{G}}
\newcommand{\IQ}{\mathbb{I}\mathbb{Q}}
\newcommand{\A}{\mathcal{A}}
\renewcommand{\L}{\mathcal{L}}
\renewcommand{\S}{\mathbb{S}}
\newcommand{\proj}{\text{Proj}}
\renewcommand{\phi}{\varphi}
\newcommand{\rcf}{\mathcal{T}_{\text{RCF}}}
\newcommand{\theory}{\text{Th}}
\newcommand{\theoryb}{\theory_{\text{B}}}
\newcommand{\theorybco}{\theoryb^{\succeq}}
\newcommand{\eps}{\varepsilon}
\newcommand{\eval}[1]{\left\llbracket #1 \right\rrbracket}
\newcommand{\evalb}[1]{\left\llbracket #1 \right\rrbracket_{D}}
\newcommand{\norm}[1]{\left\lVert#1\right\rVert}
\newcommand{\bebecomes}{\mathrel{::=}}
\newcommand{\alternative}{~|~}
\newcommand{\pert}{\A}
\newcommand{\rels}{\{\geq, >\}}
\newcommand{\lo}[1]{\underline{#1}}
\newcommand{\up}[1]{\overline{#1}}
\newcommand{\pertf}[3][]{
    \ifthenelse{\isempty{#1}}
    {\pert^{\forall}\!\left(#2, #3\right)}
    {\pert^{\forall}_{#1}\left(#2, #3\right)}
}
\newcommand{\perte}[3][]{
    \ifthenelse{\isempty{#1}}
    {\pert^{\exists}\!\left(#2, #3\right)}
    {\pert^{\exists}_{#1}\left(#2, #3\right)}
}
\newcommand{\folr}{\text{FOL}_{\R}}
\newcommand{\term}{\text{Term}}
\newcommand{\termf}{\term_{\text{F}}}
\newcommand{\fml}{\text{Fml}}
\newcommand{\fmlb}{\fml_{\text{B}}}
\newcommand{\fmlo}{\fml^{>}}
\newcommand{\fmlc}{\fml^{\geq}}
\newcommand{\fmlco}{\fml^{\succeq}}
\newcommand{\fmlbo}{\fmlb^{>}}
\newcommand{\fmlbco}{\fmlb^{\succeq}}
\newcommand{\sent}{\text{Sent}}
\newcommand{\sentb}{\sent_{\text{B}}}
\newcommand{\sentco}{\sent^{\succeq}}
\newcommand{\sentbco}{\sentb^{\succeq}}
\newcommand{\jump}[1]{\mathbf{0}^{(#1)}}
\newcommand{\etermA}{e}
\newcommand{\fvarA}{\phi}
\newcommand{\fvarB}{\psi}
\newcommand{\funcsym}{h}
\newcommand{\interpfunc}[2][\funcsym]{{#1}_{{\ll}#2{\gg}}}
\newcommand{\leftrule}{L}
\newcommand{\rightrule}{R}
\newsavebox{\Rval}%
\sbox{\Rval}{$\scriptstyle\mathbb{R}$}
\newsavebox{\backiterateb}%
\sbox{\backiterateb}{$\scriptstyle\overleftarrow{\dibox{{}^*}}$}
\newcommand{\rfvar}{P}
\newcommand{\D}[1]{#1'}
\begin{document}

\renewcommand{\linferPremissSeparation}{\hspace{0.8cm}}

\title{Approximate Axiomatization for Differentially-Defined Functions}
\author{Andr\'e Platzer}
\author{Long Qian}
\email{platzer@kit.edu}
\email{longq@andrew.cmu.edu}
\orcid{0000-0001-7238-5710}
\orcid{0000-0003-1567-3948}
\affiliation{%
  \institution{Karlsruhe Institute of Technology}
  \city{Karlsruhe}
  \country{Germany}
  }
\affiliation{%
\institution{Carnegie Mellon University}
\city{Pittsburgh}
\country{USA}
}

\begin{abstract}
    This article establishes a \emph{complete approximate axiomatization} for the real-closed field $\R$ expanded with all differentially-defined functions, including special functions such as $\sin(x), \cos(x), e^x, \dots$. Every true sentence is provable up to some numerical approximation, and the truth of such approximations \emph{converge} under mild conditions. Such an axiomatization is a fragment of the axiomatization for \emph{differential dynamic logic}, and is therefore a finite extension of the axiomatization of real-closed fields. Furthermore, the numerical approximations approximate \emph{formulas} containing special function symbols by $\folr$ formulas, improving upon earlier decidability results only concerning closed \emph{sentences}. 
\end{abstract}

\begin{CCSXML}
<ccs2012>
<concept>
<concept_id>10003752.10003753.10003765</concept_id>
<concept_desc>Theory of computation~Timed and hybrid models</concept_desc>
<concept_significance>500</concept_significance>
</concept>
<concept>
<concept_id>10003752.10003790.10003793</concept_id>
<concept_desc>Theory of computation~Modal and temporal logics</concept_desc>
<concept_significance>500</concept_significance>
</concept>
<concept>
<concept_id>10003752.10003790.10003806</concept_id>
<concept_desc>Theory of computation~Programming logic</concept_desc>
<concept_significance>500</concept_significance>
</concept>
\end{CCSXML}

\ccsdesc[500]{Mathematics of computing~Ordinary differential equations}
\ccsdesc[500]{Theory of computation~Timed and hybrid models}
\ccsdesc[500]{Theory of computation~Proof Theory}
\ccsdesc[500]{Theory of computation~Modal and temporal logics}
\ccsdesc[500]{Theory of computation~Programming logic}

\keywords{Differential equation axiomatization, quantifier elimination, differential dynamic logic}

\maketitle
\section{Introduction}

Tarski's proof \cite{Tarski_1948} that the theory of real-closed fields $\rcf$ admits quantifier elimination is a seminal result, establishing a complete axiomatization for $\rcf$ and thereby providing \emph{complete symbolic reasoning principles} for first-order formulas of real arithmetic ($\folr$). This has seen many practical implementations \cite{DBLP:conf/cpp/KosaianTP23, DBLP:conf/cade/McLaughlinH05, Mathematica} and applications \cite{DBLP:journals/jsc/Jirstrand97, DBLP:conf/icstcc/VobetawinkelR19}. Extensions beyond polynomial arithmetic including standard special functions such as $\sin(x), \cos(x), e^x, \dots$ would find even more applications, but are also fundamentally challenging. It is well-known that $\R$ expanded with trignometric functions can encode integer arithmetic \cite{DBLP:journals/jsyml/Richardson68, Wang_1974}, so there cannot be any complete, computable axiomatization of the resulting theory by G\"odel's incompleteness theorem \cite{Godel}. 

Nonetheless, this article crucially establishes a complete approximate axiomatization for the bounded\footnote{Quantifiers are restricted to compact intervals.} theory of $\R$ expanded with such special functions, where every true sentence is provable up to numerical approximations. Furthermore, approximate completeness of the standard axiomatization of \emph{differential dynamic logic} ($\dL$ \cite{DBLP:journals/jar/Platzer08, DBLP:conf/cade/GallicchioTMP22}) is established, thereby providing the desired \emph{complete symbolic reasoning principles} with a natural finite extension of $\rcf$. In addition, similar to how the decidability of a theory follows from a (computable) complete axiomatization, the approximate axiomatization established also directly induce \emph{proof-generating} approximate decision procedures, extending upon earlier works on the approximate decidability of extensions of $\R$ \cite{DBLP:conf/lics/GaoAC12, Gao_Kong_Clarke_2013-dreal, Franek_Ratschan_Zgliczynski_2011, Franek_Ratschan_Zgliczynski_2016}.  

Intuitively, the axiomatization is motivated by the observation that for a function $f: \R \to \R$, the use of finite precision arithmetic implies that $f(x)$ is generally \emph{approximated} rather than evaluated \emph{exactly}. As such, one can view $f(x)$ as an interval $[\lo{f}, \up{f}]$ where $\lo{f} \leq f(x)\leq \up{f}$ and $\up{f} - \lo{f} < \eps$ is sufficiently small representing a bound on the error of the approximation. In this setting, what should formulas involving $f(x)$ such as $f(x) > 0 \land f(x) < 1$ mean? One possible interpretation is to require $[\lo{f}, \up{f}] \subset (0, 1)$, i.e. \emph{every} possible value of $f(x)$ satisfies the formula. Dually, another possible interpretation is that there exists \emph{some} $w \in [\lo{f} , \up{f}]$ such that $w$ satisfies the formula. Such inexactness crucially lowers the complexity of the theory of $\R$ expanded with special functions, resulting in an approximate axiomatization. In other words, the fundamental obstruction to achieving a complete axiomatization is the \emph{exactness} of formulas/sentences, which is no longer present once some notion of approximation is allowed. 

To illustrate, suppose a sentence $\phi = \exists x{\in} [0, 1]~f(x)\geq 0$ is given where $f(x)$ is some function. For \emph{any} $\delta \in \Q^+$ representing the approximation error tolerable, a syntactic ``$\delta$-perturbation'' of $\phi$, denoted $\perte{\phi}{\delta}$, can be constructed such that the truth of $\perte{\phi}{\delta}$ is \emph{provable} from $\dL$'s axiomatization. For sentences such as $\phi$ that only use one of $\rels$ as its relation symbol (such sentences are also called \emph{pure sentences}), the truth value of the approximation \emph{converges} to the truth value of $\phi$ as $\delta \rightarrow 0$. Such an approximation intuitively corresponds to replacing the term $f(x)$ by some $\folr$ definable approximation $p : \R \to \R$ with $\norm{f - p}_{[0, 1]} \leq \delta$ and constructing 
\[\perte{\phi}{\delta} = \exists x{\in}[0, 1]~\exists w~(\abs{w - p(x)} \leq \delta \land w \geq 0)\]
in other words, the value $f(x)$ is ``replaced'' by the interval approximation $[p(x) - \delta, p(x) + \delta]$. Moreover, the \emph{approximation procedure is provable}, because the implication $\phi \rightarrow \perte{\phi}{\delta}$ is also provable in $\dL$'s axiomatization.
Furthermore, $\dL$ is \emph{approximately complete}: for any $\delta \in \Q^+$, if $\phi$ is true, then the $\delta$-approximation $\perte{\phi}{\delta}$ is provable, if $\perte{\phi}{\delta}$ is false, then $\neg\phi$ is provable. For emphasis, standard completeness will sometimes be called \emph{exact} completeness to emphasize that completeness holds unconditionally without the presence of perturbations.
A dual construction of universal $\delta$-approximations $\pertf{\phi}{\delta}$ is also possible, corresponding to requiring $\phi$ to hold at \emph{every} possible value in the approximated interval (Definition \ref{def: perturbations}).

The main difficulty in establishing the axiomatizability of such theories comes from the inherently numerical nature of such functions, seemingly contradictory to the deductive reasoning provided by symbolic logic. For example, it is far from obvious how the bound $\norm{p - f} \leq \delta$ can be \emph{deductively proven} for a general Type-Two computable function (Definition \ref{def: computable function}) $f$. Indeed, if $f$ is only known to be a general Type-Two computable function, then properties of $f$ can only be inferred from its definition, an oracle machine that computes arbitrarily accurate numerical approximations of $f$. Considering special functions as general Type-Two computable functions, thus, takes an oracle approach: for every special function considered, a corresponding defining oracle needs to be provided and trusted, and it is unclear how to prove properties of such oracles. Furthermore, this is different from the way many functions are intuitively understood. For example, consider the trignometric function $\sin(x)$ which has multiple equivalent definitions: Taylor series, solution to ODEs, geometrically, etc. But it is unusual to \emph{define} $\sin(x)$ by some oracle machine that computes arbitrarily accurate rational approximations of $\sin(x)$ for each $x$.

This article shows that by, instead, characterizing special functions \emph{symbolically} through their defining differential equations, $\dL$ is capable of proving \emph{all} required numerical properties of \emph{all} such special functions. 
In particular, there is \emph{one} common axiomatization for \emph{all} special functions, rather than: \emph{for each} special function, \emph{there is} a different axiomatization specific to that particular special function.

This article considers the class of \emph{differentially-defined} functions. Intuitively, these are univariate functions that can be defined using polynomial differential equations \cite{DBLP:conf/cade/GallicchioTMP22}, which are very general (in fact universal \cite{Pouly_Bournez_2020}) and include all standard special functions such as $\sin(x), \cos(x), e^x, \dots$. More concretely, a differentially-defined function is the (coordinate-projected) solution to a system of polynomial ODEs. Let $p(x) \in \Q^n[x]$ be a polynomial vector field, $x_0 \in \Q^n$ a rational vector representing initial conditions, then the corresponding initial-value problem (IVP) is
\begin{align*}
    &x' = p(x)\\
    &x(0) = x_0
\end{align*}
where $\Phi(t)$ denotes the corresponding solution to the (IVP) above, and differentially-defined functions are of the form $\Phi_1(t)$, the first component of $\Phi$. 

Despite their generality, differentially-defined functions are pleasantly amenable to deductive reasoning. Recent work \cite{platzer2024axiomatizationcompactinitialvalue} has shown that $\dL$ is surprisingly powerful in proving numerical properties of solutions to initial value problems, essentially most topologically open properties of such functions with at most one quantifier can be proven in $\dL$. Crucially extending upon such prior results which only concern the fragment of $\dL$ with at most one quantifier, this article establishes approximate completeness allowing for arbitrary (bounded) quantifiers by \emph{deductively proving} the inherently numerical properties using symbolic logic in a uniform fashion.

This article further establishes \emph{robustness} (Definition \ref{def: robustness}) properties for the class of pure formulas. A formula $\phi$ is said to be \emph{pure} if, in prenex normal form, it only contains $\succeq$ as a relation symbol where $\succeq~\in \rels$, in which case it is also said to be $\succeq$-pure. Robustness of a formula can be intuitively understood as the property that the validity of the formula remains unchanged by sufficiently small numerical perturbations. Pure formulas are robust as they are topologically tame, always defining open/closed sets. In particular, for robust sentences $\phi$, it is always the case that the truth value of its approximations converge to the truth value of $\phi$ as the level of perturbation $\delta \rightarrow 0$. Hence, it follows (Theorem \ref{thm: atomic formulas are robust}) that $\dL$'s axiomatization is \emph{exactly complete} for all $>$-pure sentences. In particular, if a $>$-pure sentence $\phi$ is true, then it is provable $\vdash_{\dL} \phi$, \emph{exactly without perturbations}. 

While the results presented in this article are theoretically motivated, they also have practical implications. On the level of decidability, approximate axiomatizability directly implies the existence of \emph{proof-producing} approximate decision algorithms. For example, $\delta$-decidability has seen many applications in practice \cite{Gao_Avigad_Clarke_2012, Gao_Kong_Clarke_2013-dreal, Gao_Kong_Chen_Clarke_2014}. The usage of such procedures in safety-critical applications calls for the need to have a higher-level of trust on the results, such as generating verifiable symbolic proofs certifying correctness in some sound logical system. The axiomatization provided by this article directly gives such proof-producing decision algorithms utilizing the sound axiomatization of $\dL$. In particular, the logical system ($\dL$) is \emph{not dependent on the functions considered}, there exists a finite list of axioms that works for \emph{all} differentially-defined functions. In contrast with earlier approaches \cite{DBLP:journals/jar/AkbarpourP10, Gao_Kong_Clarke_2014} where \emph{for every function}, corresponding axioms capturing its numerical properties need to be added and \emph{trusted}, resulting in logical systems where soundness is difficult to verify. 

Furthermore, the results presented by this article provide a way to \emph{provably} approximate \emph{formulas} by quantifier-free $\folr$ formulas, a notion of \emph{approximate quantifier-elimination}. As an example application, consider the problem of searching for Lyapunov functions parametrized by the symbolic variables $\vec{c}$. Necessary conditions for the validity of such Lyapunov functions can be expressed by a formula $\phi(\vec{c})$ involving special functions (Example \ref{example: lyapunov function}). Crucially utilizing the axiomatization established, one can provably compute some quantifier-free $\folr$ formula $\psi(\vec{c})$ such that $\psi(\vec{c}) \rightarrow \phi(\vec{c})$ is (provably) valid. In other words, this proves the correctness of \emph{every} $\vec{c}$ satisfying $\psi(\vec{c})$ at once with a single proof (Example \ref{example: Lyapunov function approx}). This is beyond the capability of earlier works on approximate decidability \cite{DBLP:conf/cav/KongSG18, DBLP:conf/hybrid/KapinskiDSA14} which can only decide the truth of closed sentences without free-variables. 

This article establishes the following technical results: 

Let $\R_D$ denote the real-closed field $\R$ expanded with all differentially-defined functions, which in particular includes standard functions such as $\sin(x), \cos(x), e^x$ and many more. Then $\dL$ provides a natural approximately complete axiomatization for the bounded theory of $\R_D$, where quantifiers are restricted to compact intervals. Let $\phi$ be any bounded formula in the language of $\R_D$, then the following hold:

\begin{enumerate}
    \item For every $\delta \in \Q^+$, the $\delta$-approximations $\pert^\forall(\phi, \delta), \pert^\exists(\phi, \delta) \in \folr$ can be computed. These approximations have the same free-variables as $\phi$, and the chain of implications
    \[\pert^\forall\!(\phi, \delta)\rightarrow \phi \rightarrow \pert^\exists\!(\phi, \delta)\]
    is provable in $\dL$. Note that since $\rcf$ admits proof-producing decision procedures \cite{DBLP:conf/cade/McLaughlinH05}, this naturally induces a proof-producing $\delta$-decidability procedure \cite{DBLP:conf/lics/GaoAC12} in the following sense: If $\phi$ is a sentence, then the $\delta$-approximations are also sentences in $\folr$, and can therefore be decided with corresponding proofs. Furthermore, the following hold:
    \begin{itemize}
        \item If $\pert^\forall(\phi, \delta)$ is true, then $\phi$ is provably true (in $\dL$).
        \item If $\pert^\exists(\phi, \delta)$ is false, then $\phi$ is provably false (in $\dL$). 
    \end{itemize}
    
    \item If $\phi$ is $>$-pure, then it is $\forall$-robust (Definition \ref{def: robustness}) and is the limit of a sequence of $\folr$ formulas that are \emph{decreasing} in strength. For any bounded set $D \subset \R^n$, its universal perturbations $\pertf{\phi}{\delta}$ converge in truth:
    \[\eval{\phi} \cap D = \bigcup_{\delta > 0} \eval{\pert^\forall\!(\phi, \delta)} \cap D\]
    where $\eval{\phi} \subseteq \R^n$ denotes the semantics of a formula $\phi$, i.e. the set of all true states. Similarly, if $\phi$ is $\geq$-pure, then it is $\exists$-robust (Definition \ref{def: robustness}) and is the limit of a sequence of $\folr$ formulas that are \emph{increasing} in strength.
    \[\eval{\phi} \cap D = \bigcap_{\delta > 0} \eval{\pert^\exists\!(\phi, \delta)} \cap D\]
    So for a pure formula $\phi$, (at least one of) the $\delta$-approximations converge to $\phi$ as $\delta \to 0$. 
\end{enumerate}

Finally, this article also investigates the extent to which robustness properties can be generalized. It is shown that there are fundamental computability-theoretic obstacles to extending such robustness properties to all sentences with bounded quantifiers when the function symbols are allowed to be arbitrary Type-Two computable functions, by establishing that the theory is at least as complicated as arithmetic in the arithmetic hierarchy. 

\section{Related Works}
This article presents an approximately complete axiomatization for $\R$ expanded with differentially-defined functions. Furthermore, the presented axiomatization is a finite fragment of $\dL$ and therefore a finite extension of $\rcf$. To the best of our knowledge, this article is the first to consider such notions of approximate axiomatizability that works uniformly for all differentially-defined functions.
\\
\\
\noindent
{\bf{\emph{Approximate Decidability and Perturbations}}}: Previous works concerning perturbations \cite{Gao_Avigad_Clarke_2012, DBLP:journals/jsc/Ratschan02, Franek_Ratschan_Zgliczynski_2016} focused on notions of approximate decidability of sentences, and therefore do not give the desired symbolic reasoning principles for \emph{formulas} that an axiomatization provides. In contrast, an axiomatization automatically provides corresponding \emph{proof-producing} decidability results.

Indeed, generalizing the decidability results of earlier works to an axiomatization appears to be challenging, since earlier works primarily considered extending $\R$ with \emph{all} Type-Two computable functions (Definition \ref{def: computable function}). Deductively proving properties of such functions requires essentially proving properties of the underlying oracle machines that define these functions, and it is not clear how this could be achieved. Earlier attempts at proof-generation \cite{Gao_Kong_Clarke_2014, DBLP:journals/jar/AkbarpourP10} essentially assume the desired numerical properties of such functions as \emph{axioms}, and the validation of such axioms are offloaded to external numerical algorithms. Hence the correctness of the generated proofs and the soundness of the underlying logical system is only relative to these intricate axioms, whose correctness is far from clear and needs to be validated by external ``black-box algorithms'', while also preventing the possibility of an axiomatization that is a finite extension of $\rcf$. 

The results of this article present an alternate viewpoint - such earlier results are difficult to axiomatize partly because the language of all Type-Two computable functions is \emph{unnecessarily general}, lacking clear symbolic representations. The class of differentially-defined functions \cite{DBLP:conf/cade/GallicchioTMP22} is a more suitable expansion of $\R$ for such purposes (denoted $\R_D)$. Intuitively, such functions are the (coordinate-projected) solutions of polynomial ordinary differential equations (ODEs) which are very general and universal among all continuous functions \cite{Pouly_Bournez_2020}. Despite the generality of $\R_D$, this article establishes a natural and approximately complete axiomatization using $\dL$. Thus, the soundness of the axiomatization automatically follows from the soundness of $\dL$ \cite{DBLP:journals/jar/Platzer17}, which has been formally verified using various approaches \cite{DBLP:journals/corr/abs-2404-15214, DBLP:conf/cpp/BohrerRVVP17, Differential_Dynamic_Logic-AFP}. Consequently, there is a finite list of sound axioms that capture all necessary deductive reasoning principles for \emph{all} differentially-defined functions, as opposed to earlier works \cite{Gao_Kong_Clarke_2014,DBLP:journals/jar/AkbarpourP10,Franek_Ratschan_Zgliczynski_2016} where for each special function, potentially unsound axioms need to be added and trusted. On a more practical level, such expanded functions are now more naturally defined by their \emph{differential equations}. E.g. $\sin(x)$ is succinctly defined by the ODE: $x' = y, y' = -1, x(0) = 0, y(0) = 1$. In contrast, if $\sin(x)$ were treated as a Type-Two computable function, it would then be represented by some oracle machine that outputs arbitrarily approximations of $\sin(x)$ using names (Definition \ref{def: computable function}), resulting in a much more complicated definition that is difficult to check for correctness. 

In the specific case of satisfiability checking, various practical algorithms have been proposed \cite{DBLP:journals/jar/AkbarpourP10,tocl_smt_transcendental,jar_smt_search} to decide exact satisfiability. Such algorithms also replace special functions with appropriate polynomial estimates, but the correctness of such approximations are often only proven semantically rather than deductively in a single logical system. Thus, the soundness of such algorithms depend on the soundness of such approximation axioms which are delicate to justify and is dependent on the special functions considered, a new axiom is needed for each new approximation of each special function. The general axiomatization provided by this article offers a complementary approach, that such delicate arguments can be carried out deductively and uniformly for \emph{all} differentially-defined functions at once utilizing the logic $\dL$ for the \emph{full bounded theory}, resulting in trustworthy proofs of correctness in a single sound logical system.

\noindent
{\bf{\emph{Proof theory of ODEs}}}: As differentially-defined functions are solutions of ODEs, the results presented in this article extend upon earlier works concerning the proof theory of ODEs \cite{DBLP:conf/lics/Platzer12a, DBLP:journals/jacm/PlatzerT20, platzer2024axiomatizationcompactinitialvalue}, which is relatively unexplored compared to the computability/complexity of theory of ODEs \cite{Pouly_Bournez_2020, DBLP:journals/tocl/Ratschan06, Graca_Zhong_Buescu_2009, DBLP:journals/jc/BournezGGP23, Gao_Avigad_Clarke_2012, DBLP:journals/toct/KawamuraC12}. Earlier works only consider either \emph{exact} completeness \cite{DBLP:journals/jacm/PlatzerT20, platzer2024axiomatizationcompactinitialvalue} without approximations or completeness relative to some non-computable oracle \cite{Platzer_2012}, and therefore do not provide computable axiomatizations that are approximately complete. In particular, earlier works that give computable axiomatizations only concern the fragment of $\R_D$ with at most one quantifier \cite{platzer2024axiomatizationcompactinitialvalue}, whereas the results in this article allow for arbitrary (bounded) quantifiers and are therefore much more general.

\section{Preliminaries}
\label{sec: preliminaries}
This section provides a self-contained overview of the prerequisites required. More details on computable analysis \cite{Weihrauch_2000,Pour-El_Richards_2017}, computability theory \cite{Soare_2016} and differential dynamic logic \cite{DBLP:journals/jar/Platzer08, DBLP:conf/cade/GallicchioTMP22} can be found in standard references. 

\subsection{Differential Dynamic Logic}
\label{sec: prelim_dL}
This section provides a brief review of differential dynamic logic ($\dL$) and its axiomatization, focusing on the continuous fragment of $\dL$.

\subsubsection{Syntax}
Terms in $\dL$ are formed by the following grammar, where $\V$ denotes the set of all variables, $x \in \V$ is a variable, $h$ a $k$-ary function symbol and $c \in \Q$ is a rational constant.
\[p, q\bebecomes x\alternative c\alternative p + q\alternative p\cdot q \alternative h(p_1, \cdots, p_k)\]
$\dL$ formulas have the following grammar, where ${\sim} \in \{=, \neq, \geq, >, \leq, <\}$ is a comparison relation and $\alpha$ is a system of differential equations ($\dL$ allows for $\alpha$ to be from the more general class of \emph{hybrid programs} \cite{DBLP:journals/jar/Platzer08}, which is not needed here)
\begin{align*}
    \phi, \psi &\bebecomes p {\sim} q \alternative \phi \land \psi \alternative \phi \lor \psi \alternative \neg \phi \alternative \forall x \phi \alternative \exists x \phi \alternative 
\ddiamond{\alpha}{\phi} \alternative \dbox{\alpha}{\phi}\\
\alpha &\bebecomes \cdots \alternative x' = f(x) \& Q
\end{align*}
Intuitively, the modal formulas $\ddiamond{\alpha}{\phi}, \dbox{\alpha}{\phi}$ expresses \emph{liveness} and \emph{safety} properties. $\ddiamond{\alpha}{\phi}$ expresses that by following the ODE $\alpha$, there is some time for which $\phi$ is true. Dually, $\dbox{\alpha}{\phi}$ expresses that $\phi$ is always true following the ODE $\alpha$. In this paper, we will only be dealing with the case $\alpha \equiv x' = f(x) \& Q$, where $x' = f(x)$ represents an autonomous system of ODEs $x_1' = f_1(x), \cdots, x_n' = f_n(x)$, $x = (x_1, \cdots, x_n)$ is understood to be vectorial and $Q$ is some $\folr$ formula known as the \emph{domain constraint}. Intuitively, this restricts the region for which the ODE $x' = f(x)$ is allowed to evolve.

The function symbols $h$ are in general \emph{uninterpreted} in $\dL$'s uniform substitution calculus \cite{DBLP:journals/jar/Platzer17}, semantically corresponding to arbitrary $C^\infty$ functions. To allow for computable reasoning and remain in a countable language, function symbols $h$ in this paper carry an \emph{interpretation annotation}, $\interpfunc{\phi}$, where $\phi(x_0, y_1, \cdots, y_k)$ is a $\dL$ formula with no uninterpreted symbols. Intuitively, $\phi$ defines the graph of some $C^\infty$ function intended as the interpretation of $h$. This paper considers the class of univariate, differentially-definable functions \cite{DBLP:conf/cade/GallicchioTMP22}. In which case the formulas $\phi(x_0, y)$ have the following shape, where $x = (x_0, \cdots, x_n)$ denotes a vector variables, $t = y_1$ is the univariate input, $f(x) = (f_0(x), \cdots, f_n(x)) \in \Q^{n + 1}[x]$ is a polynomial vector field with rational coefficients, $X = (X_0, \cdots, X_N)$ and $T$ are rational constants
{\small\begin{align}
    \fvarA(x_0,t) &\mnodefequiv \exists x_1, \dots, x_n\left(\ddiamond{\pevolve{\D{x}=-\genDE{x},\D{t}=-1}}{\left(x = X \land t = T\right)} \lor \ddiamond{\pevolve{\D{x}=\genDE{x},\D{t}=1}}{\left(x = X \land t = T\right)}\right)
    \label{eq:ddef}
\end{align}}

Such formulas $\phi(x_0, t)$ define corresponding univariate functions $\hat{h}(t)$ where $\hat{h}(t)$ is the first coordinate projection of the solution to the initial value problem (IVP)
\begin{align*}
    &x' = f(x)\\
    &x(T) = X
\end{align*}
For function $\hat{h}(t)$ to be semantically well-defined, it is necessary for the IVP above to admit \emph{global solutions}, which is equivalent to the validity of the formula $\exists t \phi(x_0, t)$. Hence, this paper only considers function symbols $\interpfunc{\phi}$ where the corresponding $\dL$ formula $\exists t \phi(x_0, t)$ is provable. By prior work on the provability of global existence, standard mathematical functions such as $\sin(t), \cos(t), e^t$ as well as functions that admit definable upper-bounds in $\dL$ all satisfy this criteria \cite{DBLP:journals/fac/TanP21}.

For terms and formulas that appear in contexts involving ODEs $x' = f(x)$, it is sometimes needed to restrict the variables that they can refer to. Maximal permitted free variables will be indicated by explicitly writing them as arguments when needed. For example, $p()$ means that the term $p$ cannot refer to any bound variable of the ODE $x' = f(x)$. In contrast, $P(x)$ (or just $P$) indicates that all the variables may be referred to as free variables. Such variable dependencies can be made formal and rigorous through $\dL$'s uniform substitution calculus \cite{DBLP:journals/jar/Platzer17}.

\subsubsection{Semantics}
A state $\omega$ is a mapping $\omega : \V \to \R$ that assigns a value to every variable. We denote $\S$ as the set of all such states. For a term $e$, its semantics in state $\omega \in \S$ written as $\eval{e}$ is the real value obtained by evaluating the $e$ at the state $\omega$. More specifically, for a function term $\interpfunc{\phi}(e)$, its semantics is defined as 
\[\eval{\interpfunc{\phi}(e)} = \hat{h}(\eval{e})\]
where $\hat{h}$ is the function defined by $\phi(x_0, t)$.  
For a $\dL$ formula $\phi$, its semantics $\eval{\phi}$ is defined to be the set of all states $\omega \in \S$ such that $\omega \models \phi$, i.e the formula $\phi$ is true in $\omega$. The semantics of first-order logical connectives are defined as expected, e.g. $\eval{\phi \lor \psi} = \eval{\phi} \cup \eval{\psi}$. For ODEs $\alpha \equiv x' = f(x) \& Q$, the semantics for $\dbox{\alpha}{\phi}$ and $\ddiamond{\alpha}{\phi}$ are defined as follows. For the given ODE $\alpha$ and any state $\omega \in \S$, let $\Phi_\omega : [0, T) \to \S$ be the solution to $x' = f(x)$ extended maximally to the right with $0 < T \leq \infty$ and $\Phi_\omega(0) = \omega$. We then have:
\begin{align*}
    &\omega \models \dbox{\alpha}{\phi} \text{ iff for all $0 \leq \tau < T$ such that $\Phi_\omega(\xi) \models Q$ for all $0 \leq \xi \leq \tau$, we have $\Phi_\omega(\tau) \models \phi$}\\
    &\omega \models \ddiamond{\alpha}{\phi} \text{ iff there exists some $0 \leq \tau < T$ such that $\Phi_\omega(\xi) \models Q$ for all $0 \leq \xi \leq \tau$ and $\Phi_\omega(\tau) \models \phi$}
\end{align*}
Intuitively, the formula $\dbox{\alpha}{\phi}$ expresses a \emph{safety} property, that $\phi$ holds along all flows of the ODE $x' = f(x)$ that remain inside the domain constraint $Q$. Similarly, the formula $\ddiamond{\alpha}{\phi}$ expresses a \emph{liveness} property, that there is some flow along $x' = f(x)$ staying within $Q$ eventually reaching a state where $\phi$ is true. Finally, a formula $\phi$ is said to be valid if $\eval{\phi} = \S$, i.e. it is true in all states. 

\subsubsection{Proof calculus}
The derivations in this paper are presented in a standard, classical sequent calculus with the usual rules for manipulating logical connectives and sequents. For a \emph{sequent} $\Gamma \vdash \phi$, its semantics is equivalent to the formula $\left(\bigwedge_{\psi \in \Gamma} \psi\right) \rightarrow \phi$, and the sequent is called valid if its corresponding formula is valid. For a sequent $\Gamma \vdash \phi$, formulas $\Gamma$ are called antecedents, and $\phi$ the succedent. Completed proof branches are marked with $*$ in a sequent proof, and since $\R$ has a decidable theory via quantifier elimination \cite{Tarski_1948}, statements that follow from real arithmetic are proven with the rule \irref{qear}. An axiom (schema) is called \emph{sound} iff all of its instances are valid, and a proof rule is sound if the validity of all its premises entail the validity of its conclusion. Axioms and proof rules are \emph{derivable} if they can be proven from $\dL$ axioms and proof rules via the aforementioned sequent calculus. Derivable axioms are automatically sound due to the soundness of $\dL$'s axiomatization \cite{DBLP:journals/jar/Platzer08, DBLP:journals/jacm/PlatzerT20}.

This paper uses a fragment of the base axiomatization of $\dL$ \cite{DBLP:conf/lics/Platzer12b} (focusing on the continuous case) along with an extended axiomatization developed in prior works used to handle ODE invariants and liveness properties \cite{DBLP:journals/jacm/PlatzerT20, DBLP:journals/fac/TanP21}. A complete list of the axioms used is provided in \rref{app: dL axiomatization}. The following function interpretation axiom relates differentially-defined functions to their annotations. 

\begin{theorem}[\cite{DBLP:conf/cade/GallicchioTMP22}]
The~\irref{FI} axiom (below) for $\dL$ is sound where $\funcsym$ is a $k$-ary function symbol and the formula semantics $\eval{\phi}$ is the graph of a smooth $C^\infty$ function $\hat{\funcsym} : \reals^k \to \reals$.

\noindent
\[
\cinferenceRule[FI|FI]{function interpretation}
{\linferenceRule[equiv]
  {\fvarA(\etermA_0,\etermA_1,\dots,\etermA_k)}
  {\etermA_0 = \interpfunc[\funcsym]{\fvarA}(\etermA_1,\dots,\etermA_k)}
}
{}%
\]
\end{theorem}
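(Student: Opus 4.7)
The plan is to establish soundness of the \irref{FI} axiom by unfolding the semantics on both sides of the equivalence and matching them via the hypothesis on $\eval{\fvarA}$. Fix an arbitrary state $\omega \in \S$; the goal is to show $\omega \models \fvarA(\etermA_0, \etermA_1, \dots, \etermA_k)$ iff $\omega \models \etermA_0 = \interpfunc[\funcsym]{\fvarA}(\etermA_1, \dots, \etermA_k)$. The first step is to apply the standard coincidence/substitution lemma from $\dL$'s uniform substitution calculus to rewrite $\omega \models \fvarA(\etermA_0, \dots, \etermA_k)$ as $\omega^\star \models \fvarA(x_0, y_1, \dots, y_k)$, where $\omega^\star$ agrees with $\omega$ off the nominal argument variables and sets $\omega^\star(x_0) = \eval{\etermA_0}$ and $\omega^\star(y_i) = \eval{\etermA_i}$ for each $1 \leq i \leq k$.

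Next, the hypothesis that $\eval{\fvarA}$ is the graph of $\hat{\funcsym}$ immediately yields $\omega^\star \models \fvarA(x_0, y_1, \dots, y_k)$ iff $\omega^\star(x_0) = \hat{\funcsym}(\omega^\star(y_1), \dots, \omega^\star(y_k))$, which, translated back through the substitution, is $\eval{\etermA_0} = \hat{\funcsym}(\eval{\etermA_1}, \dots, \eval{\etermA_k})$. On the right-hand side of \irref{FI}, the definition of the term semantics for interpretation-annotated function symbols (given in the preliminaries) directly states $\eval{\interpfunc[\funcsym]{\fvarA}(\etermA_1, \dots, \etermA_k)} = \hat{\funcsym}(\eval{\etermA_1}, \dots, \eval{\etermA_k})$. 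Both sides of the equivalence therefore reduce to the same real-arithmetic identity on $\eval{\etermA_0}$ and $\hat{\funcsym}(\eval{\etermA_1}, \dots, \eval{\etermA_k})$, closing the argument. Since $\omega$ was arbitrary, the axiom is valid, hence sound.

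The main obstacle is bookkeeping rather than mathematical depth: the substitution turning $\fvarA(x_0, y_1, \dots, y_k)$ into $\fvarA(\etermA_0, \etermA_1, \dots, \etermA_k)$ must be carried out in a capture-avoiding manner, and the argument variables of $\fvarA$ must be admissible for the terms $\etermA_i$, which is precisely what $\dL$'s uniform substitution framework enforces. A related subtlety is that the hypothesis ``$\eval{\fvarA}$ is the graph of a $C^\infty$ function $\hat{\funcsym} : \R^k \to \R$'' already encodes single-valuedness in $x_0$ and totality in $(y_1, \dots, y_k)$, so no extra well-definedness side condition needs to be discharged inside this proof; these properties are instead the content of the separate existence/uniqueness results (e.g.~\cite{DBLP:journals/fac/TanP21}) that justify using a particular annotation $\interpfunc[\funcsym]{\fvarA}$ in the first place. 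Granted those, the soundness of \irref{FI} is essentially an unfolding of the term and formula semantics.
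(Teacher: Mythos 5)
The paper does not give its own proof of this theorem: it is imported verbatim by citation from prior work (the theorem label explicitly reads \cite{DBLP:conf/cade/GallicchioTMP22}), so there is no in-paper argument to compare against. Your proof is nonetheless a correct and complete reconstruction of the standard soundness argument. You unfold the formula semantics on the left-hand side via the coincidence/substitution lemma to land on the designated argument variables of $\fvarA$, invoke the graph hypothesis to rewrite satisfaction of $\fvarA$ as the real identity $\eval{\etermA_0} = \hat{\funcsym}(\eval{\etermA_1},\dots,\eval{\etermA_k})$, and match this against the paper's stipulated term semantics $\eval{\interpfunc{\fvarA}(\etermA_1,\dots,\etermA_k)} = \hat{\funcsym}(\eval{\etermA_1},\dots,\eval{\etermA_k})$, which makes the two sides of \irref{FI} literally the same statement about the state. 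Your remarks on where the real work sits are also apt: admissibility of the substitution is what the uniform substitution framework discharges, and single-valuedness/totality are not obligations of this lemma but of the hypothesis that $\eval{\fvarA}$ is the graph of a total $C^\infty$ function, which is in turn ensured by the global-existence requirement the paper imposes on the annotations (via \cite{DBLP:journals/fac/TanP21}). This is the unfolding-of-semantics argument one would expect to find in the cited source.
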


Consequently, axiom \irref{FI} is sound for all function symbols appearing in this paper, which only considers $\phi(x_0, t)$ of the form in (\ref{eq:ddef}) and $\exists t \phi(x_0, t)$ provable. While technically there are infinitely many instances of the \irref{FI} axiom resulting in an infinite axiomatization of $\R_D$, this is only needed to naturally interpret formulas of $\R_D$ in $\dL$. If desired, one could also soundly transform formulas of $\R_D$ to $\dL$ formulas without special functions using the soundness of \irref{FI}. 

It was recently established that $\dL$ is complete for $\folr$ approximations \cite[Theorem~5.1]{platzer2024axiomatizationcompactinitialvalue}. In particular, the classical Stone-Weierstrass approximation theorem can be deductively proven in $\dL$. This helpful result also establishes that arbitrarily accurate provable approximations of differentially-defined functions can be always computed, as shown below. 

\begin{theorem}[Stone-Weierstrass {\cite[Theorem~5.5]{platzer2024axiomatizationcompactinitialvalue}}]
\label{thm: stone weierstrass for IVPs}
    Let $f(x) \in \Q^n[x]$ be a polynomial vector field with rational coefficients, $X \in \Q^n$ rational initial conditions, $t_0, T \in \Q$ a rational interval with $t_0 \leq T$. Further suppose that the solution to initial value problem
    \begin{align*}
        &x' = f(x)\\
        &x(t_0) = X
    \end{align*}
    does not exhibit finite time blow-up on $[t_0, T]$. Then for all $\eps \in \Q^+$, there (computably) exists some $\theta(t) \in \Q^n[t]$ such that the following formula is provable in $\dL$:
    \[x = X \land t = t_0 \rightarrow \dbox{x' = f(x), t' = 1 \& t \leq T}{\norm{x - \theta(t)}^2 \leq \eps^2}\]
\end{theorem}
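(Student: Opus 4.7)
The plan is to split the argument into (i) a semantic/computable construction of an $\eps$-close polynomial and (ii) a formal $\dL$ derivation certifying the resulting box modality. For the construction, I would rely on the fact that $f \in \Q^n[x]$ and $X \in \Q^n$ make the solution $\Phi$ real-analytic on $[t_0, T]$, with Taylor coefficients $\Phi^{(k)}(t_0)/k!$ that are rational and iteratively computable via $\Phi^{(k+1)} = \partial_x f \cdot \Phi^{(k)}$-style recurrences. Since there is no finite-time blow-up on $[t_0,T]$, the solution is uniformly bounded on a compact set, and on such sets the Taylor remainder decays faster than any geometric rate once one has an \emph{a priori} bound on $\Phi$. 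I would use any computable bound on $\sup_{[t_0,T]}\norm{\Phi}$ (available because $\dL$ can prove global existence on $[t_0,T]$, which also yields a provable bounding box) to choose a truncation order $N$ and rational-rounded coefficients so that the candidate $\theta(t)\in\Q^n[t]$ satisfies $\sup_{[t_0,T]}\norm{\Phi-\theta} \leq \eps/2$.

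For the $\dL$ derivation, the key object is the ghost quantity $V := \norm{x-\theta(t)}^2$, which is polynomial in $x$ and $t$. Differentiating along $x'=f(x),\,t'=1$ gives
\[
V' \;=\; 2\,(x-\theta(t))\cdot\bigl(f(x)-\theta'(t)\bigr) \;=\; 2(x-\theta(t))\cdot\bigl(f(x)-f(\theta(t))\bigr) \;+\; 2(x-\theta(t))\cdot\bigl(f(\theta(t))-\theta'(t)\bigr).
\]
The first summand is bounded by $2L\,V$ for a Lipschitz constant $L$ of $f$ on the bounding box (and $L$ itself is a rational semialgebraic bound extracted from $\partial_x f$ on that box). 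The second summand is a polynomial residual $r(t)$ in $t$ alone, whose $\folr$-definable supremum on $[t_0,T]$ can be made arbitrarily small by choosing $N$ large and coefficient rounding fine. Thus we obtain a differential inequality of Gr\"onwall shape $V' \leq 2L\,V + R$ with $V(t_0)=0$, and the target $\eps^2$ bound follows from the closed-form $V(t) \leq \tfrac{R}{2L}(e^{2L(t-t_0)}-1)$ after tuning the construction so this is below $\eps^2$ on $[t_0,T]$.

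To turn the inequality into a $\dL$ proof I would use a differential ghost (\irref{dG}): introduce a fresh variable $y$ with $y' = 2Ly + R$ and $y(t_0)=0$, together with the auxiliary exponential bound variable typically used to internalize Gr\"onwall (a further ghost $z$ with $z'=2Lz$ realizing $e^{2L(t-t_0)}$, whose closed form is pinned down semialgebraically via $z(t_0)=1$). The invariants $V \leq y$ and $y \leq \tfrac{R}{2L}(z-1)$ along the augmented ODE are both polynomial, and hence provable by differential induction (\irref{dI}) after a routine \irref{dC} to cut in the bounding box $\norm{x}\leq M$ (itself proved by \irref{dC} from the provable global-existence bound). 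A final \irref{dW} plus the $\folr$ comparison $\tfrac{R}{2L}(z-1) \leq \eps^2$ on the domain constraint $t\leq T$ closes the proof by \irref{qear}.

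The main obstacle I expect is not any single step but the \emph{coordination} of the numerical choices with the symbolic proof: one must pick $N$, the coefficient rounding precision, the Lipschitz bound $L$, and the a priori bounding box $M$ so that every arithmetic check needed by \irref{qear} at the leaves is a valid $\folr$ sentence, while simultaneously keeping $\theta$ rational and $V$ polynomial so that \irref{dI}/\irref{dG} apply directly. This bookkeeping is exactly what \cite[Theorem~5.1]{platzer2024axiomatizationcompactinitialvalue} packages up — classical Stone–Weierstrass is lifted into $\dL$ by exhibiting such a Gr\"onwall-style invariant — and Theorem~\ref{thm: stone weierstrass for IVPs} would follow by instantiating that machinery at the IVP $x'=f(x),\,x(t_0)=X$ and the time domain $t\leq T$.
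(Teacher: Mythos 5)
This theorem is imported wholesale from \cite[Theorem~5.5]{platzer2024axiomatizationcompactinitialvalue}; the paper at hand gives no proof of it, so there is no internal derivation to compare against, only the cited prior work. Your sketch is nonetheless worth assessing on its own terms, and it has a real gap in the construction step. You propose to take $\theta$ as a truncated Taylor series of $\Phi$ centered at $t_0$ and argue that, because $\Phi$ is bounded on $[t_0,T]$, the remainder ``decays faster than any geometric rate.'' That claim is false in general. Solutions of polynomial ODEs are analytic, but the radius of convergence of the Taylor series at $t_0$ is governed by the distance to the nearest \emph{complex-time} singularity, which can be strictly smaller than $T-t_0$ even when the real solution exists and is uniformly bounded on all of $[t_0,T]$. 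For example, a complex pole at $t^\ast = \tfrac{T}{2}+i\epsilon$ for tiny $\epsilon$ caps the radius of convergence near $\tfrac{T}{2}$, so no single truncation order $N$ makes the Taylor polynomial at $t_0$ uniformly close to $\Phi$ on $[t_0,T]$. An a priori bound on $\sup_{[t_0,T]}\norm{\Phi}$ does give a computable \emph{lower} bound on the local radius, but it does not give you convergence across the whole interval. To get a single computable $\theta\in\Q^n[t]$ you need either piecewise Taylor expansion on a computably-fine mesh followed by a classical Stone--Weierstrass pass that merges the pieces into one polynomial, or a direct invocation of computable Weierstrass approximation for the (computable, hence continuous) function $\Phi$, adjusting by a constant so that $\theta(t_0)=X$ exactly.

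The second half of your sketch --- internalizing a Gr\"onwall inequality for $V=\norm{x-\theta(t)}^2$ via differential ghosts and differential cuts, with the Lipschitz constant $L$ and residual bound $R$ extracted as rational semialgebraic quantities from $f$ and a bounding box --- is the right shape of $\dL$ argument, and the exponential-damping ghost trick you gesture at (a ghost $w$ with $w'=-Lw$ making $w^2 V$ a polynomial differential invariant) is indeed the standard way to avoid mentioning $e^{2L(T-t_0)}$ outside of $\folr$. Two details you should be careful about: the residual term is really $2\sqrt{V}\cdot\norm{f(\theta)-\theta'}$, which is not linear in $V$, so you must either bootstrap a crude invariant bound on $\sqrt{V}$ by differential cut first or absorb it via $2ab\le a^2+b^2$ before the Gr\"onwall inequality has the advertised shape; and the bounding-box cut must be established \emph{together with} the error invariant (a conjunction proved by simultaneous differential induction), since each one's inductive step uses the other, which is the usual bootstrap/continuation pattern. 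With the construction step repaired and these bookkeeping points handled, the remainder of your plan matches the strategy attributed to the cited Theorem~5.5.
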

In other words, for all desired errors $\eps \in \Q^+$, one can compute a corresponding polynomial approximation $\theta$ to the solution of initial value problem (IVP) $x' = f(x), x(t_0) = X_0$ with provable uniform error at most $\eps$. For the purposes of this article, this implies that $\dL$ is capable of (computably) proving arbitrarily accurate polynomial approximations for differentially defined functions. 

\begin{theorem}[Stone-Weierstrass for Differentially-Defined Functions]
    \label{thm: stone weierstrass for functions}
    Let $h(t)$ be a differentially-defined function, and $T \in \Q^+$ be arbitrary. Then for all $\eps \in \Q^+$, there (computably) exists some $p(t) \in \Q[t]$ such that the following formula is provable in $\dL$:
    \begin{equation}
        \label{fml: error bound}
        -T \leq t \leq T \rightarrow \abs{p(t) - h(t)} < \eps
    \end{equation}
\end{theorem}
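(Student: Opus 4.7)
The plan is to reduce this to the IVP-level Stone-Weierstrass approximation (\rref{thm: stone weierstrass for IVPs}) by unfolding the definition of $h$, obtaining provable polynomial approximations on $[-T, T_0]$ and $[T_0, T]$ separately, and then synthesizing a single polynomial $p(t)\in\Q[t]$ valid on all of $[-T, T]$ via a bounded enumeration argument justified by classical Weierstrass.

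Unfold $h = \interpfunc{\phi}$ where $\phi(x_0, t)$ has the shape of (\ref{eq:ddef}), determined by a polynomial vector field $f\in\Q^{n+1}[x]$, rational initial vector $X\in\Q^{n+1}$, and rational initial time $T_0\in\Q$; the argument is cleanest when $T_0\in[-T,T]$, with the boundary cases $T_0\notin[-T,T]$ being strictly simpler since only one of the two disjuncts in (\ref{eq:ddef}) is exercised. For the forward piece on $[T_0, T]$, \rref{thm: stone weierstrass for IVPs} applied to the IVP $y' = f(y), y(T_0)=X$ (with global existence of $h$ ruling out finite-time blow-up) with tolerance $\eps/3$ yields $\theta_+(t)\in\Q^{n+1}[t]$ together with a $\dL$-provable box-modality bound that forces $|h(t) - \theta_{+,0}(t)|\leq\eps/3$. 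For the backward piece on $[-T, T_0]$, the time-reversal reparametrization $y(s) = \hat{x}(T_0 - s)$ satisfies the IVP $y' = -f(y), y(0)=X$; applying \rref{thm: stone weierstrass for IVPs} on $s\in[0, T_0+T]$ and substituting $s = T_0 - t$ gives $\theta_-(t)\in\Q^{n+1}[t]$ with an analogous provable bound on the backward interval.

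Next, compute a single $p\in\Q[t]$ by enumerating rational polynomials and testing, via Tarski's decision procedure for $\folr$, the universal arithmetic conditions $\forall t\in[T_0,T]: |p(t) - \theta_{+,0}(t)| \leq \eps/3$ and $\forall t\in[-T,T_0]: |p(t) - \theta_{-,0}(t)| \leq \eps/3$. Termination of the search is guaranteed by the classical Weierstrass approximation theorem applied to the continuous function $h$ on $[-T, T]$, together with density of $\Q[t]$ in $\R[t]$; the triangle inequality then yields $|p(t) - h(t)| \leq 2\eps/3 < \eps$ on $[-T, T]$.

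The $\dL$ derivation of (\ref{fml: error bound}) uses \irref{FI} to replace $h(t)$ by an existentially introduced $x_0$ satisfying $\phi(x_0, t)$, then case-splits on the disjunction in (\ref{eq:ddef}). In each case, the relevant segment of the true trajectory is re-expressed as a forward flow from the known rational initial state $(X, T_0)$ via a change-of-variable on the ODE, so that the box-modality bound from \rref{thm: stone weierstrass for IVPs} provably controls $|x_0 - \theta_{\pm,0}(t)|$ by $\eps/3$; combining this with the $\folr$-valid arithmetic bound on $|p(t) - \theta_{\pm,0}(t)|$ (discharged by \irref{qear}) closes the sequent via the triangle inequality. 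The principal obstacle will be precisely this bridging step: connecting $\phi$'s diamond-modality assertion that $(X,T_0)$ is reachable from $(x_0, t)$ with the forward-flow box-bound in \rref{thm: stone weierstrass for IVPs} requires carefully re-identifying the state $(x_0, t)$ as a reached state along a forward trajectory originating at the rational initial condition $(X, T_0)$, after which the IVP-level Stone-Weierstrass bound applies directly and the remaining steps reduce to real-closed field arithmetic.
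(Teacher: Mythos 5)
Your overall strategy matches the paper's: split the time axis at the initial time of $h$'s defining IVP, apply \rref{thm: stone weierstrass for IVPs} separately to the forward-time and reverse-time segments, obtain a single polynomial $p\in\Q[t]$ whose distance to each $\theta_{\pm,0}$ is provably bounded in $\folr$ (so \irref{qear} closes those branches), and bridge the gap between $h$ and the $\theta$'s via the \irref{FI} axiom, then combine by the triangle inequality. The paper handles the reverse-time segment the same way (negated vector field), and the appeal to \irref{FI} plays the same role. Your enumerate-and-test construction of $p$ is a reasonable way to make the ``computably exists'' explicit; the paper instead invokes effective classical Stone--Weierstrass for the computable function $h$ and then only checks the $\folr$-decidable conditions $|p - \theta^{\pm}_0| < \eps/2$, which is slightly more direct but equivalent in spirit.

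There is, however, a genuine bug in your error-budget allocation. You apply \rref{thm: stone weierstrass for IVPs} with tolerance $\eps/3$, so the provable bound is $|h(t) - \theta_{\pm,0}(t)| \leq \eps/3$, and then you require the enumerated $p$ to satisfy $|p(t) - \theta_{\pm,0}(t)| \leq \eps/3$. By the triangle inequality, any such $p$ would need $|p - h| \leq |p - \theta_{\pm,0}| + |\theta_{\pm,0} - h|$, but also $|p - \theta_{\pm,0}| \geq |\theta_{\pm,0} - h| - |p - h|$, so if $|\theta_{\pm,0}(t^*) - h(t^*)| = \eps/3$ is attained at some $t^*$ (which the non-strict IVP bound permits), your test forces $p(t^*) = h(t^*)$ exactly at that point on the appropriate side; in general no rational polynomial will satisfy both constraints, and Weierstrass approximation of $h$ gives only $|p - \theta_{\pm,0}| < \eps/3 + \gamma$ for arbitrarily small but positive $\gamma$, never $\leq \eps/3$. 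Your termination argument, as written, therefore does not go through. The fix is exactly the paper's tighter allocation: run \rref{thm: stone weierstrass for IVPs} at tolerance $\eps/4$ so $|h - \theta_{\pm,0}| \leq \eps/4$, and search for $p$ with the $\folr$-decidable condition $|p - \theta_{\pm,0}| < \eps/2$; termination is now guaranteed because any $p$ within $\eps/4$ of $h$ satisfies this strictly, and the final combined bound $|p - h| < \eps/2 + \eps/4 = 3\eps/4 < \eps$ gives the theorem.

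One smaller point: your discussion of $T_0 \in [-T,T]$ versus $T_0 \notin [-T,T]$ is fine but unnecessary --- since the ODE is autonomous and $T_0\in\Q$, one can time-shift to normalize $T_0 = 0$ (this is the paper's WLOG) without affecting rationality of coefficients or initial conditions. Also, when you say you will ``re-express'' a segment as a forward flow from $(X,T_0)$ after unfolding the diamond in (\ref{eq:ddef}), note that what you actually need is to invert the diamond-modality reachability assertion (from $(x_0,t)$ one reaches $(X,T_0)$) into the forward/backward box bound of \rref{thm: stone weierstrass for IVPs}; this is handled by the \irref{dadj} axiom together with uniqueness, but it is worth stating that explicitly rather than calling it a ``change-of-variable'' --- that is the only genuinely $\dL$-specific step in the bridge.
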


\begin{proof}
    Since differentially-defined functions are solutions to IVPs, it suffices to apply Theorem \ref{thm: stone weierstrass for IVPs} on both the positive and negative time regions to obtain the desired approximation, see Appendix \ref{app: proofs} for a complete proof. 
\end{proof}

\subsection{Computability and Computable analysis}
\label{sec: Computable analysis prelims}
This section provides a brief overview of computable analysis under the standard framework of \emph{Type Two Theory of Effectivity} (TTE) \cite{Weihrauch_2000}.  

\begin{definition}[Name]
    Let $x \in \R$ be any real number, a $\it{name}$ for $x$ is a sequence of rationals $(q_i)_i \subseteq \Q$ such that
    \[\boldsymbol{\forall} i \in \N~(\abs{q_i - x} < 2^{-i})\]
    This definition naturally extends to $\R^n$ by requiring names to reside in $\Q^n$ and using the standard Euclidean norm. For $x \in \R^n$, we denote the set of all names of $x$ as $\Gamma(x)$.
\end{definition}

For a fixed real number $x \in \R^n$, one should think of its names as the ``descriptions'' of $x$. We then define $x$ to be computable if it exhibits a computable description. 

\begin{definition}[Type-two computable number]
    \label{def: computable real}
    Let $x \in \R^n$ be any real number, $x$ is \emph{Type-Two computable} if it has a computable name i.e. there is some computable sequence $(q_i)_i \subseteq \Q^n$ that is a name for $x$. 
\end{definition}

Intuitively, this means that a real $x \in \R^n$ is (Type-Two) computable if and only if it can be computably approximated by a sequence of vectors of rational numbers. The definition above relies on some ordering of the rationals, but any fixed effective enumeration of rationals gives equivalent notions. From now on, whenever we refer to the computability of numbers in $\R^n$, we mean Type-Two computability.

\begin{definition}
An \emph{oracle machine $M$} is a Turing machine that allows for an additional one-way read-only input tape that represents some input oracle used. The machine is allowed to read this input tape up to arbitrary, but finite, lengths.
\end{definition}

One can think of oracle machines as regular Turing machines but with some access to outside information, namely the ``oracle'' input tape. The machine may use any finite amount of information on this tape. For an oracle machine $M$, and an infinite binary sequence $p \in 2^{\omega}$, $M^p$ represents the oracle machine $M$ with oracle $p$. By standard encoding, we do not differentiate between $\Q^\omega$ and $2^\omega$. 

The following definition relates the use of oracle machines to computable functions in TTE.

\begin{definition}[Computable function]
    \label{def: computable function}
    A function $f : \R^n \to \R^m$ is \emph{computable} if there is some oracle machine $M$ such that 
    \[\boldsymbol{\forall}x \in \R^n~\boldsymbol{\forall} p \in \Gamma(x)~((M^p(i))_i \in \Gamma(f(x)))\]
    I.e. $M$ maps names of $x$ to names of $f(x)$ for all $x \in E$. 
\end{definition}
Intuitively, this means that a function $f : \R^n \to \R^m$ is computable if and only if there is some computable algorithm such that for every $x \in \R^n$, the algorithm can output more and more accurate approximations of output $f(x)$ given more and more accurate approximations of input $x$. By this definition, any Type-Two computable function is necessarily continuous, since oracle machines can only read a finite amount of its oracle before producing an output. In other words, for all $x \in \R^n, i \in \N$, there is some corresponding $j \in \N$ such that if $f$ is provided with an approximation of $x$ accurate up to $2^{-j}$, then the output is an approximation of $f(x)$ accurate up to $2^{-i}$, therefore $f$ is continuous. The standard functions $\sin(x), \cos(x), x^2, e^x, \cdots$ are all computable via their Taylor expansions.

\section{$\delta$-perturbations}
\label{sec: perturbations}
This section introduces the $\delta$-perturbations used to obtain an approximate axiomatization of $\R_D$. Earlier works on $\delta$-decidability \cite{DBLP:conf/lics/GaoAC12} perturb \emph{relations} whereas this paper introduces perturbations on \emph{(function) terms}. This has the advantage that each perturbed sentence is now decidable only using $\folr$ (first-order rather than retaining special functions). The following fixes some notations that will be used in this paper. 

\begin{definition}[Notations]
    The following standard notations will be used. 
    \begin{itemize}
        \item $\IQ$ denotes the set of all intervals with rational endpoints, i.e.
        \[\IQ = \{[\lo{a}, \up{a}] ~\vert~ \lo{a} \leq \up{a} \in \Q\}\]
        \item $\R = \langle \R, 0, 1, +, \cdot, \geq, > \rangle$ is always treated as a real-closed field. The logical symbol `$=$' is defined as $x = y \iff x \geq y \land y \geq x$. We assume without loss of generality that the only relation symbols used are $\rels$ and atomic formulas are of the form $t \succeq 0$ for $\succeq~\in \rels$ and $t$ a term. 
        \item $\R_D$ is the extension of $\R$ with all differentially-defined functions with provable global existence, $\R_C$ is the extension of $\R$ with all univariate computable functions.
        \item $\L_R$ denotes the standard language of real-closed fields, assumed to include all rationals, definable functions and the relation symbol $\geq$ without loss of generality. $\L_D$ denotes the language of $\R_D$ and $\L_C$ denotes the language of $\R_C$. 
        \item For a language $\L$, $\fml(\L)$/$\sent(\L)$ denotes the set of all $\L$ formulas/sentences and $\term(\L)$ denotes the set of all terms. For a formula $\phi \in \fml(\L)$, $\term(\phi)$ denotes all terms appearing in $\phi$ and $\termf(\phi)$ denotes all function terms (i.e. terms of the form $f(e)$). A term $t \in \term(\L)$ is \emph{function-free} if it does not contain function symbols. By a slight abuse of notation, a term $e(\vec{x}) \in \term(\L)$ is also identified with the function $\vec{x} \mapsto \eval{e(\vec{x})}$. $\fml(\L_R)$ is also denoted as $\folr$. 
        \item For $\L \in \{\L_R, \L_C, \L_D\}$, $\psi \in \fml(\L)$, $a, b \in \term(\L)$ and $Q \in \{\exists, \forall\}$ a quantifier, the bounded quantifier $Q^{[a, b]}~\psi(x, \vec{y})$ is defined as:
        \begin{align*}
            &\exists^{[a, b]}x~\psi(x, \vec{y}) = \exists x \left(a \leq x \land x \leq b \land \psi(x, y)\right)\\
            &\forall^{[a, b]}x~\psi(x, \vec{y}) = \forall x \left(a \leq x \land x \leq b \rightarrow \psi(x, y)\right)
        \end{align*}
        \item For $\L \in \{\L_R, \L_C, \L_D\}$, a formula $\phi \in \fml(\L)$ is \emph{bounded} if
        \[\phi(\vec{x}) \equiv Q_1^{I_1}y_1Q_2^{I_2}y_2\cdots Q_n^{I_n}~y_n\psi(\vec{x}, y_1, \cdots, y_n)\]
        where $Q_i \in \{\exists, \forall\}$ denotes the $i$-th quantifier, $I_i = [a_i, b_i]$ are closed intervals that could potentially depend on variables appearing before $y_i$ and $\psi(\vec{x}, y_1, \cdots, y_n)$ is a quantifier free formula in conjunctive normal form. We assume without loss of generality that all (bounded) formulas are in this normal form. 
        \item For $\L \in \{\L_R, \L_C, \L_D\}$, $\succeq~\in \rels$ $\fmlco(\L)$/$\sentco(\L)$ denotes the formulas/sentences that only contain the relation $\succeq$.
        \item For a language $\L \in \{\L_R, \L_C, \L_D\}$, the set of all bounded formulas/sentences is denoted as $\fmlb(\L)$/$\sentb(\L)$. Note that a bounded formula $Q^{[a, b]}x\psi(x, \vec{y})$ can always be equivalently written as
        \[Q^{[a, b]}x~\psi(x, \vec{y}) \equiv Q^{[0, 1]}t~\psi(at + (1 - t)b, \vec{y})\]
        Thus, bounded formulas/sentences are assumed without loss of generality to only quantify over intervals $[a, b]$ where the terms $a, b$ do not contain function symbols. The following notations will be useful (where $\succeq~\in \rels$):
        \begin{align*}
            \fmlbco(\L) &\coloneqq \fmlb(\L) \cap \fmlco(\L)\\
            \sentbco(\L) &\coloneqq \sentb(\L) \cap \sentco(\L)
        \end{align*}
        \item For $\L \in \{\L_R, \L_C, \L_D\}$ and $R$ an $\L$-structure, $\theory(R)$ denotes the theory of $R$ and $\theoryb(R)$ denotes the bounded fragment of $\theory(R)$. Restricting the theory will also be useful (where $\succeq~\in\rels)$:
        \begin{align*}
            \theorybco(R) &\coloneqq \theoryb(R) \cap \sentbco(\L)\\
        \end{align*}
    \end{itemize}
\end{definition}

\begin{remark}
    For clarity, this article generally suppresses explicit mentions of free variables, and will refer to a formula $\phi(\vec{x})$ by $\phi$. When used, the vectorial variable $\vec{x}$ always refers to the free variables of the corresponding formula which will be clear from context. 
\end{remark}

This article approximates formulas by computing \emph{provable perturbations} for general (function) terms using $\folr$ definable functions, resulting in $\folr$ approximations with the same set of free-variables. In particular, approximations of $\L_D$-sentences are $\L_R$-sentences, from which their validity can be computably decided as $\rcf$ is decidable. The following serves as a running example throughout the article. 

\begin{example}[Existence of Lyapunov Functions in {$\R_D$}]
    \label{example: lyapunov function}
    Consider the following model of a pendulum system:
    \begin{align*}
        &x_1' = x_2\\
        &x_2' = -\sin(x_1) - x_2
    \end{align*}
    Recall that a (strict) Lyapunov function is a function $V(x_1, x_2)$ such that there exists some $\gamma > 0$ satisfying:
    \begin{align*}
        &V(0) = 0\\
        &\forall \vec{x} \left(0\!<\!\norm{\vec{x}}\!<\!\gamma \rightarrow V(\vec{x})\!>\!0 \land -\dot{V}(\vec{x})\!>\!0\right)
    \end{align*}
    The existence of a (strict) Lyapunov function implies the (local) asymptotic stability of $(0, 0)$. By using a quadratic template $V(x_1, x_2) = c_1x_1x_2 + c_2x_1^2 + c_3x_2^2$, the sufficiency of $V$ being a valid Lyapunov function for the ODE system can be characterized by the following $\Pi_1$ formula in $\L_D$ (note that $V(0, 0) = 0$ by construction):
    \begin{align*}
        \phi(c_1, c_2, c_3) &\equiv \forall \vec{x} \left({0}\!<\!\norm{\vec{x}}^2\!<\!{1} \rightarrow V(x, y)\!>\!0 \land \frac{\partial V}{\partial x_1}x_2 + \frac{\partial V}{\partial x_2}(-\sin(x_1) - x_2)\!<\!0 \right)\\
        &\equiv \forall \vec{x}\left({0}\!<\!\norm{\vec{x}}^2\!<\!{1} \rightarrow V(x, y)\!>\!0 \land (c_1x_1 + 2c_3x_2)(x_2 + \sin(x_1)) -x_2(2c_2x_1 + c_1x_2)\!>\!0 \right)
    \end{align*}
    The existence of such coefficients $c_1, c_2, c_3$ can now be encoded by existentially quantifying over these variables, resulting in a $\exists\forall$ formula with the function symbol $\sin(x)$. 
\end{example}

Similar to solving IVPs numerically, approximations to differentially-defined functions are carried out on compact intervals. E.g. $e^x$ can only be arbitrarily approximated by polynomials on compact intervals and not all of $\R$. As such, a bound on the domains of special functions need to be computed, resulting in the following definitions.

\begin{definition}[Domain of bounded formulas]
    \label{def: domains}
    For a bounded formula $\phi \in \fmlb(\L_D)$, a \emph{domain} of $\phi$ is a function $D : FV(\phi) \to \IQ$. For clarity, $D(x)$ is also written as
    \[D(x) = D_x = [\lo{x}, \up{x}] \in \IQ\]
\end{definition}
Intuitively, a domain $D$ of a formula $\phi$ restricts the free variables to the bounded set $\Pi_{x \in FV(\phi)} D_x$, which is also denoted with $D$ by abuse of notation. When quantifying free variables over the entire domain via $Q^{D_{x_1}}x_1\dots Q^{D_{x_n}}x_n$, this is abbreviated with $Q^{D}\vec{x}$. The following definition extends domains to provide interval bounds for all terms appearing in $\phi$.

\begin{definition}[$\eps$-enclosures]
    \label{def: enclosures}
    Let $\eps \in \Q^+$, and $\phi \in \fmlb(\L_D)$ be a bounded formula. An $\eps$-enclosure of $\phi$ is a function $I : \term(\phi) \to \IQ$ satisfying the following requirements. For clarity, $I(e)$ for a term $e$ is also written as
    \[I(e) = I_e = [\lo{e}, \up{e}]\]

    \begin{enumerate}
        \item For $x \in \term(\phi)$ a free variable, $I_x$ enforces no requirements 
            \[I_x \in \IQ\]
        \item For $x \in \term(\phi)$ a bound variable introduced via the quantifier $Q^{[a, b]}x$, $I_x$ satisfies
        \[I_a \cup I_b \subseteq I_x\]
        \item For a term $t = t_1 + t_2 \in \term(\phi)$, $I_t$ satisfies
        \[I_{t_1} + I_{t_2} \subseteq I_{t_1 + t_2}\]
        with the usual addition of intervals:
            \[[\lo{t_1}, \up{t_1}] + [\lo{t_2}, \up{t_2}] = [\lo{t_1} + \lo{t_2}, \up{t_1} + \up{t_2}]\]
        \item For a term $t = t_1\cdot t_2 \in \term(\phi)$, $I_t$ satisfies
            \[I_{t_1} \cdot I_{t_2} \subseteq I_{t_1 \cdot t_2}\]
            with the usual multiplication of intervals:
        \[[\lo{t_1}, \up{t_1}] \cdot [\lo{t_2}, \up{t_2}] = [\min(\lo{t_1}\lo{t_2}, \lo{t_1}\up{t_2}, \up{t_1}\lo{t_2}, \up{t_1}\up{t_2}), \max(\lo{t_1}\lo{t_2}, \lo{t_1}\up{t_2}, \up{t_1}\lo{t_2}, \up{t_1}\up{t_2})]\]
        \item For a term $f(e) \in \term(\phi)$, $I_{f(e)}$ satisfies
            \[f(I_e) + [-\eps, \eps] \subseteq I_{f(e)}\]
            where $f$ is identified with its interpretation in $\R_D$ and $f(I_e)$ denotes the image of $I_e$ under $f$. 
    \end{enumerate}
    $1$-enclosures are also referred to as enclosures. An $\eps$-enclosure $I$ is \emph{provable} if the conditions above are provable in $\dL$. 
\end{definition}

Intuitively, $\eps$-enclosures for a formula $\phi$ bounds the ranges of terms appearing $\term(\phi)$. Also note that every $\eps$-enclosure naturally induces a domain. The following result shows that $\eps$-enclosures extending a given domain can always be computed. 

\begin{theorem}[Computable $\eps$-enclosures]
    \label{thm: compute enclosures}
    For all $\phi \in \fmlb(\L_D)$ and $D : FV(\phi) \to \IQ$ a domain of $\phi$, for all $\eps \in \Q^+$, an $\eps$-enclosure $I : \term(\phi) \to \IQ$ extending $D$ ($I\lvert_{FV(\phi)} = D$) can be computed uniformly in $D, \phi, \eps$.
\end{theorem}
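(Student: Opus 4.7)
The plan is to construct the enclosure $I$ by structural induction on the terms appearing in $\phi$, processing bound variables in the order of the quantifier prefix so that whenever a term is considered, all its subterms and all outer bound variables it may reference already have enclosures assigned. For a free variable $x \in FV(\phi)$, set $I_x \coloneqq D_x$, and for a rational constant $c$ set $I_c \coloneqq [c, c]$. This takes care of the base cases directly.

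For a bound variable $y_i$ introduced by $Q^{[a_i, b_i]}y_i$, the interval endpoints $a_i, b_i$ are function-free by the normal-form assumption and mention only variables quantified earlier, whose enclosures are already available. Consequently, the subterms of $a_i, b_i$ can be enclosed by standard interval arithmetic, $I_{t_1 + t_2} \coloneqq I_{t_1} + I_{t_2}$ and $I_{t_1 \cdot t_2} \coloneqq I_{t_1} \cdot I_{t_2}$, both of which are rational-endpoint operations. One then sets $I_{y_i}$ to any rational interval containing $I_{a_i} \cup I_{b_i}$, e.g. their convex hull. The same interval arithmetic rules cover all polynomial compound subterms that occur inside the quantifier-free body of $\phi$.

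The main obstacle is handling function terms $f(e)$, where the enclosure must account for the numerical behaviour of a differentially-defined function. For such a term, the enclosure $I_e = [\lo{e}, \up{e}]$ of the argument is first computed by the induction hypothesis. Choosing any rational $T$ with $I_e \subseteq [-T, T]$ and invoking \rref{thm: stone weierstrass for functions} produces a polynomial $p(t) \in \Q[t]$ with $|p(t) - f(t)| < \eps/2$ for all $t \in [-T, T]$. Evaluating $p$ on $I_e$ by polynomial interval arithmetic yields a rational interval $[\lo{p}, \up{p}] \supseteq p(I_e)$, from which $f(I_e) \subseteq [\lo{p} - \eps/2, \up{p} + \eps/2]$. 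Setting
\[I_{f(e)} \coloneqq \bigl[\lo{p} - \tfrac{3}{2}\eps,\ \up{p} + \tfrac{3}{2}\eps\bigr]\]
then yields $f(I_e) + [-\eps, \eps] \subseteq I_{f(e)}$ with rational endpoints, completing the inductive step for function terms.

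Uniformity in $\phi, D, \eps$ is immediate: each step is a primitive recursive operation on its inputs, and \rref{thm: stone weierstrass for functions} produces $p$ computably and uniformly in $f$, $T$, and the target error. As a bonus, because \rref{thm: stone weierstrass for functions} supplies each polynomial approximation together with a $\dL$ proof of the corresponding uniform bound, and the remaining rules (interval sum, product, and polynomial range evaluation) reduce to quantifier-free real-arithmetic reasoning, the resulting enclosure is in fact provable, which will be the form actually needed by the later perturbation constructions.
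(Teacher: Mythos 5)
Your construction is correct and for most cases coincides exactly with the paper's proof (free variables get $D_x$, interval arithmetic for sums and products, convex hull of $I_{a_i}\cup I_{b_i}$ for bound variables). The one place where you diverge is the function-term case, and there the difference is genuinely interesting. The paper appeals directly to the computability of solutions to polynomial IVPs (citing Graca--Zhong--Buescu) to numerically approximate $\min_{s\in I_e}f(s)$ and $\max_{s\in I_e}f(s)$ to within tolerance $1$, then pads by $\eps+1$. You instead route through \rref{thm: stone weierstrass for functions}: get a rational polynomial $p$ with provable uniform error $<\eps/2$ on a rational interval $[-T,T]\supseteq I_e$, evaluate $p(I_e)$ by interval arithmetic, and pad by $\tfrac{3}{2}\eps$. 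Your arithmetic checks out: $f(I_e)+[-\eps,\eps]\subseteq p(I_e)+[-\tfrac{3}{2}\eps,\tfrac{3}{2}\eps]\subseteq[\lo{p}-\tfrac{3}{2}\eps,\up{p}+\tfrac{3}{2}\eps]$. What your route buys is self-containment (no external computability citation beyond what the paper already proves) and, as you note, provability essentially for free, since $p$ comes with a $\dL$-proof of the error bound and the remaining interval containments are \irref{qear}. The paper instead handles provability as a separate step (\rref{thm: enclosures are provable}), proving that every $\eps$-enclosure is a provable $\sfrac{\eps}{2}$-enclosure by exactly the Stone--Weierstrass argument you fold in here; the paper's factoring keeps Theorem~3.4 purely numerical, at the cost of a factor-of-two slack in the provability theorem. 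Either organization is fine; yours is slightly tighter and more self-contained, the paper's is more modular.
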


\begin{proof}
    The desired $\eps$-enclosure $I$ is constructed by induction:
    \begin{itemize}
        \item For $x \in \term(\phi)$ a free variable: 
            \[I_x = D_x\]
        \item For $x \in \term(\phi)$ a bound variable introduced via $Q^{[a, b]}x$:
        \[I_x = [\lo{a}, \up{b}]\]
        \item For terms $t = t_1 + t_2 \in \term(\phi)$:
            \[I_{t} = I_{t_1} + I_{t_2}\]
        \item For terms $t = t_1\cdot t_2$:
            \[I_t = I_{t_1} \cdot I_{t_2}\]
        \item For terms $f(e) \in \term(\phi)$, first compute rationals $p, q \in \Q$ such that 
        \begin{align*}
            &\abs{p - \min_{s \in I_e} f(e)} < 1\\
            &\abs{q - \max_{s \in I_e} f(e)} < 1
        \end{align*}
        Note that this is computable as solutions to IVPs are computable \cite{Graca_Zhong_Buescu_2009}. Define 
        \[I_{f(e)} = [p - \eps - 1, q + \eps + 1]\]
        it follows by construction that 
        \[f(I_e) + [-\eps, \eps] \subseteq [p - 1, q + 1] + [-\eps, \eps] = I_{f(e)}\]
    \end{itemize}
    The above construction is computable in $\phi, \eps$ and yields an $\eps$-enclosure of $\phi$, which completes the proof. 
\end{proof}

Theorem \ref{def: enclosures} shows that $\eps$-enclosures can always be computed. A natural strengthening is to consider if \emph{provable} $\eps$-enclosures can also be computed. The following result proves this by showing that all $\eps$-enclosures are provable $\sfrac{\eps}{2}$-enclosures. 

\begin{theorem}[$\eps$-enclosures are provable]
    \label{thm: enclosures are provable}
    Let $\phi \in \fmlb(\L_D)$ be a bounded formula, $\eps \in \Q^+$ and $I : \term(\phi) \to \IQ$ an $\eps$-enclosure. Then $I$ is a \emph{provable} $\sfrac{\eps}{2}$-enclosure for $\phi$.
\end{theorem}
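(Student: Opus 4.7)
The plan is to handle each of the five conditions in Definition~\ref{def: enclosures} separately, observing that only condition~5 genuinely involves the special functions and the approximation tolerance $\eps$. Conditions~1--4 are inequalities between concrete rational numbers (the interval endpoints of $I$), completely independent of the interpretation of any special function; if they hold because $I$ is an $\eps$-enclosure, they are trivially provable in $\dL$ via \irref{qear}. Moreover, none of these conditions depends on $\eps$, so they carry over verbatim when viewing $I$ as a $\sfrac{\eps}{2}$-enclosure.

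The substantial task is therefore condition~5, which for each function term $f(e) \in \termf(\phi)$ requires proving
\[\forall t \left(\lo{e} \leq t \leq \up{e} \rightarrow \lo{f(e)} + \sfrac{\eps}{2} \leq f(t) \leq \up{f(e)} - \sfrac{\eps}{2}\right)\]
in $\dL$, where $f(t)$ is unfolded using the function interpretation axiom \irref{FI}. The strategy is to replace $f$ by a provable polynomial approximation and then reduce to $\folr$. Applying Theorem~\ref{thm: stone weierstrass for functions} with $T = \max(\abs{\lo{e}}, \abs{\up{e}})$ and error tolerance $\sfrac{\eps}{4}$, I compute a polynomial $p(t) \in \Q[t]$ such that $\dL$ proves $\abs{p(t) - f(t)} < \sfrac{\eps}{4}$ whenever $\lo{e} \leq t \leq \up{e}$.

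The next step is to establish the purely polynomial statement
\[\forall t \left(\lo{e} \leq t \leq \up{e} \rightarrow \lo{f(e)} + \sfrac{3\eps}{4} \leq p(t) \leq \up{f(e)} - \sfrac{3\eps}{4}\right)\]
which lies in $\folr$ and is therefore provable by \irref{qear} once shown to be semantically valid. Semantic validity follows from combining the $\eps$-enclosure hypothesis $f(I_e) \subseteq [\lo{f(e)}+\eps,\, \up{f(e)}-\eps]$ with the semantic bound $\abs{p(t) - f(t)} < \sfrac{\eps}{4}$, yielding $p(I_e) \subseteq [\lo{f(e)} + \sfrac{3\eps}{4},\, \up{f(e)} - \sfrac{3\eps}{4}]$.

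Finally, an elementary triangle-inequality argument inside $\dL$ combines the provable $\sfrac{\eps}{4}$-bound $\abs{p(t) - f(t)} < \sfrac{\eps}{4}$ with the provable $\sfrac{3\eps}{4}$-bound on $p(t)$ to conclude $\lo{f(e)} + \sfrac{\eps}{2} \leq f(t) \leq \up{f(e)} - \sfrac{\eps}{2}$ for $t \in I_e$, which is the desired condition~5 with tolerance $\sfrac{\eps}{2}$. The main obstacle is deductively reasoning about the special function $f$; Stone-Weierstrass resolves this by transferring the goal to a polynomial inequality amenable to \irref{qear}, and the factor-of-two loss in precision is exactly what produces the $\sfrac{\eps}{2}$ in the conclusion.
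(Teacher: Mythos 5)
Your proof is correct and follows essentially the same route as the paper's: observe that the non-function conditions reduce to rational interval comparisons handled by \irref{qear}, then for function terms invoke Theorem~\ref{thm: stone weierstrass for functions} with tolerance $\sfrac{\eps}{4}$, prove the polynomial-only $\sfrac{3\eps}{4}$-containment via \irref{qear}, and finish with a triangle inequality to recover the $\sfrac{\eps}{2}$-enclosure condition for $f$ itself. The choice $T = \max(\abs{\lo{e}},\abs{\up{e}})$ is just a concrete instance of the paper's ``$T$ large enough that $I_e \subseteq [-T,T]$''; all constants and the decomposition are identical.
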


\begin{proof}
    Clearly every $\eps$-enclosure is an $\sfrac{\eps}{2}$-enclosure, so it suffices to establish the provability claims, this is proven inductively on the structural complexity of terms. Let $I$ be an $\eps$-enclosure. 
    \begin{itemize}
        \item For $x \in \term(\phi)$ a free variable, no restrictions are placed on $I_x$ and therefore there is nothing to prove. 

       \item For terms of the form $t = t_1 + t_2 \in \term(\phi)$, it suffices to notice that the condition $I_{t_1} + I_{t_2} \subseteq I_t$ is a comparison of rational intervals, therefore provable with axiom \irref{qear}. An identical argument also handles $t = t_1 \cdot t_2$ and bound variables.

       \item Let $f(e) \in \term(\phi)$ be a term and $T \in \Q^+$ be large enough such that $I_e \subseteq [-T, T]$. Let $p \in \Q[t]$ be some polynomial approximation of $f$ on the interval $[-T, T]$ with provable uniform error at most $\sfrac{\eps}{4}$ computed by Theorem \ref{thm: stone weierstrass for functions}. The following $\folr$ formula (using standard abbreviations) is then valid by the triangle inequality and therefore provable using \irref{qear}
       \[t \in I_e \rightarrow p(t) + [-\sfrac{3\eps}{4}, \sfrac{3\eps}{4}] \subseteq I_{f(e)}\]
       As $p(t)$ has \emph{provable} error at most $\sfrac{\eps}{4}$, it further follows that the following formula is provable as well
       \[t \in I_e \rightarrow f(t) + [-\sfrac{\eps}{2}, \sfrac{\eps}{2}] \subseteq I_{f(e)}\]
       thereby completing the proof of $I$ being a provable $\sfrac{\eps}{2}$ enclosure. 
    \end{itemize}
    As the construction above handles all cases, the proof is complete. 
\end{proof}

Combining Theorems \ref{thm: compute enclosures} and \ref{thm: enclosures are provable} immediately gives the following corollary.

\begin{corollary}
    \label{cor: compute provable enclosures}
    For all $\phi \in \fmlb(\L_D)$ and domain $D : FV(\phi) \to \IQ$, for all $\eps \in \Q^+$, a \emph{provable} $\eps$-enclosure $I : \term(\phi) \to \IQ$ extending $D$ can be computed uniformly in $D, \phi, \eps$.
\end{corollary}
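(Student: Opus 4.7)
The plan is essentially to chain together the two preceding theorems with the right choice of parameter. Given $\phi \in \fmlb(\L_D)$, a domain $D$, and a target precision $\eps \in \Q^+$, I would first invoke Theorem~\ref{thm: compute enclosures} with the inflated tolerance $2\eps$ (rather than $\eps$) to compute, uniformly in $D, \phi, \eps$, some $2\eps$-enclosure $I : \term(\phi) \to \IQ$ extending $D$. Then I would appeal to Theorem~\ref{thm: enclosures are provable}, which upgrades any $\eps'$-enclosure into a \emph{provable} $\sfrac{\eps'}{2}$-enclosure; applied with $\eps' = 2\eps$, it yields that the $I$ just computed is a provable $\eps$-enclosure of $\phi$. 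Uniformity in $(D, \phi, \eps)$ is inherited directly from the uniformity of Theorem~\ref{thm: compute enclosures} and the effectiveness of the construction used in Theorem~\ref{thm: enclosures are provable} (which produces, for each function term $f(e)$, a polynomial approximation $p$ via Theorem~\ref{thm: stone weierstrass for functions} together with an $\irref{qear}$-provable triangle-inequality argument).

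There is essentially no additional obstacle: both inputs to this argument have already been carried out in the preceding two theorems, and the only subtlety is the factor-of-$2$ slack between ``semantic'' and ``provable'' enclosures, which is absorbed by a single multiplicative adjustment to the requested precision. The entire proof can therefore be given as a one-line derivation invoking Theorems~\ref{thm: compute enclosures} and~\ref{thm: enclosures are provable} in sequence.
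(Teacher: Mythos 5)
Your proposal is correct and takes essentially the same route as the paper, which states that the corollary follows immediately by combining Theorems~\ref{thm: compute enclosures} and~\ref{thm: enclosures are provable}. You have helpfully made explicit the factor-of-two adjustment (compute a $2\eps$-enclosure, then downgrade to a provable $\eps$-enclosure) that the paper leaves implicit in the word ``combining.''
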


For the sake of simplicity in notation, this paper will only use $1$-enclosures, the results presented naturally extend to $\eps$-enclosures. Enclosures $I$ also induce corresponding domains with the identification $I \mapsto I\lvert_{FV(\phi)}$.

The following definition defines the approximations to differentially-defined functions. 

\begin{definition}[Definable approximations]
    \label{def: function approx}
    Let $\phi \in \fmlb(\L_D)$ be a bounded formula. An \emph{approximation} of $\phi$ is a function $\F : \termf(\phi) \to \folr$ such that for all $f(e) \in \termf(\phi)$, $\F_{f(e)}$ defines a univariate function $\F_{f(e)} : \R \to \R$. 
\end{definition}

Intuitively, perturbations of formulas in $\fmlb(\L_D)$ are constructed by \emph{substituting} $\folr$-definable approximations for special functions, resulting in formulas in $\folr$. The following example motivates this construction. 

\begin{example}[Perturbation of formulas]
\label{example:perturbing formulas}
    Consider the formula $\phi(x) = 2\sin(x) - 1 > 0$, with approximation $\F \coloneqq \{\sin(x) \mapsto z - \frac{z^3}{3!}\}$\footnote{Recall that $\F_{\sin(x)}$ is a ($\folr$ definition of a) function. For clarity $\F_{\sin(x)}$ is represented by its action on the dummy variable $z$.}. It is then natural to perturb $\phi$ by constructing 
    \[\tilde{\phi}(x) = 2\left(x - \frac{x^3}{3!}\right) - 1 > 0\]
    However, it is not clear how the perturbed formula $\tilde{\phi}(x)$ relates to $\phi(x)$, since $\F_{\sin(x)}$ is only an approximation. To handle this, we view $\F_{\sin(x)}(x)$ as an \emph{interval} $\F_{\sin(x)}(x) + [-\delta, \delta]$ rather than an exact value, where $\delta \in \Q^+$ is a parameter representing the accuracy of the approximation $\F_{\sin(x)}$. Perturbations of $\phi$ then amounts to substituting $\sin(x)$ with $\F_{\sin(x)}$ viewed as an interval function, yielding the following formula containing intervals depending on $\delta$.
    \[\tilde{\phi}_\delta(x) = 2\left(x - \frac{x^3}{3!} + [-\delta, \delta]\right) - 1 > 0\]
     Of course, formulas containing intervals could be interpreted semantically in different ways. This article considers the following two natural possibilities:
     \begin{itemize}
         \item For \emph{every} value $z \in [-\delta, \delta]$, the corresponding formula holds. This gives the \emph{universal perturbation} $\pertf[\F]{\phi}{\delta}$:
         \[\pertf[\F]{\phi}{\delta} = \forall^{\F_{\sin(x)}(x) + [-\delta, \delta]}w~2w - 1 > 0\]
         \item There \emph{exists} some value $z \in [-\delta, \delta]$ such that the corresponding formula holds. This gives the \emph{existential perturbation} $\perte[\F]{\phi}{\delta}$:
         \[\perte[\F]{\phi}{\delta} = \exists^{\F_{\sin(x)}(x) + [-\delta, \delta]}w~2w - 1 > 0\]
     \end{itemize}
     Note that in either case, the resulting formula has the same free-variable $x$ and belongs to $\folr$. Furthermore, for any $z \in \R$, if $\delta \in \Q^+$ satisfies $\abs{\F_{\sin(x)}(z) - \sin(z)} < \delta$, then the following chain of implications hold
     \[\pertf[\F]{\phi}{\delta}(z) \rightarrow \phi(z) \rightarrow \perte[\F]{\phi}{\delta}(z)\]
     Finally, we note that such substitutions are more delicate when special functions are composed. E.g. suppose the term $\sin(x + \cos(y))$ appears in some formula $\psi$ with an approximation given by (recall that the domain of $\G$ are function terms)
     \[\G \coloneqq \left\{\cos(y) \mapsto 1 - \frac{z^2}{2}, \sin(x + \cos(y)) \mapsto z - \frac{z^3}{3!}\right\}\]
     After first substituting in the approximation for $\cos(y)$ and thereby introducing a (bound) variable $w$, the resulting term $\sin(x + w)$ is \emph{not in the domain} of $\G$. Nonetheless, it is clear that $\sin(x + w)$ should be substituted by $\G_{\sin(x + \cos(y))}$. As such, we define a new approximation $\G_{w \mapsto \cos(y)}$ which includes $\sin(x + w)$ in its domain by ``reversing'' the substitution of $w$ back into $\cos(y)$ and evaluating the approximation given by $\G$.
     \[\G_{w \mapsto \cos(y)}(\sin(x + w)) \coloneqq \G(\sin(x + \cos(y))) = z - \frac{z^3}{3!}\]
\end{example}

Note that $\delta \in \Q^+$ is arbitrary and the construction of such perturbations are purely syntactic, semantic requirements on $\delta$ to upper-bound errors are given in Definition \ref{def: admissible approximants}.

\begin{definition}[$\delta$-perturbations]
    \label{def: perturbations}
    Let $\phi \in \fmlb(\L_D)$ be a bounded formula, $\F : \term(\phi) \to \folr$ an approximation and $\delta \in \Q^+$ a rational constant representing the level of perturbation. The \emph{universal $\delta$ perturbation of $\phi$ relative to $\F$}, denoted as $\A^\forall_\F(\phi, \delta)$ is defined inductively as follows:
    \begin{enumerate}
        \item For atomic formulas of the form $t \succeq 0$ ($\succeq~\in \rels$) where $t$ is function-free, the perturbation is the identity operator, i.e:
            \[\pert^\forall_{\F}\left(t \succeq 0, \delta\right) = t \succeq 0\]
        \item For atomic formulas of the form $t(f(e), \vec{x}) \succeq 0$ where the term $e$ is function-free the perturbation is defined as:
            \[\pert^\forall_{\F}\left(t(f(e), \vec{x}) \succeq 0, \delta\right) = \forall^{\F_{f(e)}(e) + [-\delta, \delta]} w~\pert^\forall_{\F_{w \mapsto f(e)}}\left(t(w, \vec{x}) \succeq 0, \delta\right)\]
            where $w$ is a fresh variable, $\F_{f(e)}(e) + [-\delta, \delta]$ is carried out as an addition of intervals and $\F_{w \mapsto f(e)} : \term_F(t(w, \vec{x}) \succeq 0) \to \folr$ is defined via
            \[\F_{w \mapsto f(e)}(\theta(w, \vec{x})) = \F(\theta(f(e), \vec{x}))\]
            for any function term $\theta(w, \vec{x}) \in \term_F(t(w, \vec{x}) \succeq 0)$. This carries out the construction illustrated in Example \ref{example:perturbing formulas}. 
        \item For connectives of the form $\phi_1 \lor \phi_2, \phi_1 \land \phi_2$, we have
        \[\pert^\forall_{\F}\left(\phi_1 \lor \phi_2, \delta\right) = \pert^\forall_{\F}\left(\phi_1, \delta\right) \lor \pert^\forall_{\F}\left(\phi_2, \delta\right)\]
        \[\pert^\forall_{\F}\left(\phi_1 \land \phi_2, \delta\right) = \pert^\forall_{\F}\left(\phi_1, \delta\right) \land \pert^\forall_{\F}\left(\phi_2, \delta\right)\]

        \item For sub-formulas of the form $Q^{[a, b]}y~ \psi(y, \vec{x})$, the perturbation is defined as
        \[\pert^\forall_{\F}\left(Q^{[a, b]}y~\psi(y, \vec{x}), \delta\right) = Q^{[a, b]}y~\pert^\forall_{\F}\left(\psi(y, \vec{x}), \delta\right)\]
    \end{enumerate}
    It is natural to also consider the dual notion of \emph{existential $\delta$ perturbation of $\phi$ relative to $\F$}, denoted $\pert^\exists_{\F}\left(\phi, \delta\right)$. The only difference being the following change for point (2):
    \[\pert^\exists_{\F}\left(t(f(e), \vec{x}) \succeq 0, \delta\right) = \exists^{\F_{f(e)}(e) + [-\delta, \delta]} w~\pert^\exists_{\F_{w \mapsto f(e)}}\left(t(w, \vec{x}) \succeq 0, \delta\right)\]
    Note that for both perturbations the set of free variables remain unchanged, and for $\delta_1 \geq \delta_2 > 0$ the perturbations are monotone:
    \begin{align*}
        &\pertf[\F]{\phi}{\delta_1} \rightarrow \pertf[\F]{\phi}{\delta_2}\\
        &\perte[\F]{\phi}{\delta_2} \rightarrow \perte[\F]{\phi}{\delta_1}
    \end{align*}
\end{definition}

\begin{remark}
    \label{remark: analogy for delta decidability}
    For a bounded formula $\phi \in \fmlb(\L_D)$, $\pert^\forall_{\F}\left(\phi, \delta\right)$ can also be viewed as \emph{strengthening} $\phi$ by an amount of $\delta$, and $\pert^\exists_{\F}\left(\phi(\vec{x}), \delta\right)$ as \emph{weakening} $\phi$ by an amount of $\delta$. As Theorem \ref{thm: proof rules for delta perturbations} establishes, it is indeed the case that both $\pert^\forall_{\F}\left(\phi, \delta\right) \rightarrow \phi$ and $\phi \rightarrow \pert^\exists_{\F}\left(\phi, \delta\right)$ are provably valid for appropriate choices of $\F$ (Definition \ref{def: admissible approximants}).
\end{remark}

The following result establishes how perturbations of $\phi$ and $\neg\phi$ are related. 

\begin{lemma}[Duality of $\delta$-perturbations]
    \label{lem: perturbation of negated formulas}
    Let $\phi \in \fml_B(\L_D)$ be a bounded formula, $\F : \term_F(\phi) \to \folr$ an approximation, and $\delta \in \Q^+$, then the following is provable in $\dL$:
    \[\neg\pert^\forall_{\F}\left(\phi, \delta\right) \leftrightarrow \pert^\exists_{\F}\left(\neg\phi, \delta\right)\]
\end{lemma}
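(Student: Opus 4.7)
The plan is to proceed by structural induction on the bounded formula $\phi$, matching the recursive clauses of Definition~\ref{def: perturbations}. Since each step of the equivalence will be a propagation of a classical De Morgan law through the perturbation construction and all of first-order logic is available inside $\dL$, the provability claim will follow as soon as the equivalence is established in each case.

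First I would treat the atomic base case $t \succeq 0$ with $t$ function-free: here both perturbations act as the identity, so after normalizing $\neg(t \succeq 0)$ into the companion atomic form ($\neg(t\geq 0)\equiv -t>0$ and $\neg(t>0)\equiv -t\geq 0$), the equivalence is immediate. Next I would dispatch the propositional and quantifier cases, which only require pushing negation through $\land,\lor$ and through $Q^{[a,b]}y$. Using the defining clauses of $\pert^\forall_\F$ and $\pert^\exists_\F$ on these connectives, the induction hypothesis on subformulas gives, for instance,
\begin{align*}
\neg \pertf[\F]{\phi_1 \land \phi_2}{\delta}
  &\leftrightarrow \neg\pertf[\F]{\phi_1}{\delta} \lor \neg\pertf[\F]{\phi_2}{\delta} \\
  &\leftrightarrow \perte[\F]{\neg\phi_1}{\delta} \lor \perte[\F]{\neg\phi_2}{\delta}
  \leftrightarrow \perte[\F]{\neg(\phi_1 \land \phi_2)}{\delta},
\end{align*}
and analogously for $\lor$; for a quantified subformula $Q^{[a,b]}y\,\psi$, the bounded quantifier is pulled out, flipped to $\bar Q^{[a,b]}y$, and the induction hypothesis is applied inside.

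The substantive case is the atomic formula with a function term, $t(f(e),\vec x)\succeq 0$ with $e$ function-free. Starting from
\[
\neg\pertf[\F]{t(f(e),\vec x)\succeq 0}{\delta}
= \neg\forall^{\F_{f(e)}(e) + [-\delta,\delta]}w\;\pertf[\F_{w\mapsto f(e)}]{t(w,\vec x)\succeq 0}{\delta},
\]
I would push the negation through the bounded universal quantifier, apply the induction hypothesis to the strictly simpler formula $t(w,\vec x)\succeq 0$ under the updated approximation $\F_{w\mapsto f(e)}$, and then compare the result with the direct unfolding of $\perte[\F]{\neg(t(f(e),\vec x)\succeq 0)}{\delta}$. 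The key observation is that the recursive renaming bookkeeping via $\F_{w\mapsto f(e)}$ is identical on the $\pert^\forall$ and $\pert^\exists$ sides, so the two unfoldings coincide after De Morgan.

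The main subtlety, and the place where care is needed, is making the induction well-founded in the atomic case: the clause for $\pert^\forall_\F$ on $t(f(e),\vec x)\succeq 0$ recurses on $t(w,\vec x)\succeq 0$, which is the same atomic shape but contains strictly fewer function terms (since $f(e)$ has been replaced by the fresh variable $w$). I would therefore set the induction measure to be the pair (number of function terms in the formula, structural size), which decreases in the function-term case and in all connective/quantifier cases. Once this is in place, the closure of each equivalence under $\dL$-provability follows because every step is a first-order tautology and thus derivable from $\dL$'s propositional and quantifier rules.
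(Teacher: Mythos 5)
Your proposal takes essentially the same approach as the paper: validity of the equivalence is established by structural induction on $\phi$ (the paper states this in one line; your elaborated cases, including the well-founded measure on the number of function terms, correctly fill in the details, and in particular you correctly spotted that the atomic function-term clause recurses on a formula of the same structural shape but strictly fewer function terms). The one weak point is your closing justification for provability: in the base case, $\neg(t\geq 0)\leftrightarrow -t>0$ and $\neg(t>0)\leftrightarrow -t\geq 0$ are facts of ordered fields, not first-order tautologies, so "derivable from $\dL$'s propositional and quantifier rules" is not quite the right reason. The paper's cleaner observation is that both sides of the equivalence are $\folr$ formulas (the perturbation operators produce $\folr$ outputs by construction), so once validity is established the whole equivalence is a single valid $\folr$ formula and its provability follows in one step from \irref{qear}, with no need to trace the inductive equivalences through the proof system at all; your structural argument should be read as establishing validity only, and the provability half should just invoke \irref{qear}.
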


\begin{proof}
    Validity of the equivalence follows directly by structural induction, and provability follows from axiom \irref{qear} since the equivalence itself is a formula in $\folr$. 
\end{proof}

The next theorem proves that the universal/existential perturbations are (provable) strengthenings/weakenings of formulas when sufficiently accurate approximations are used. 

\begin{theorem}[Provable $\delta$-Perturbations]
    \label{thm: proof rules for delta perturbations}
    Let $\phi \in \fml_B(\L_D)$ be a bounded formula, $\F : \term_F(\phi) \to \folr$ be an approximation, $0 < \delta < \frac{1}{2}$ be a rational constant and $I : \term(\phi) \to \IQ$ an enclosure of $\phi$. The following proof rules are sound\footnote{The upper-bound of $\frac{1}{2}$ is due to the use of $1$-enclosures. For general $\eps$-enclosures an upper-bound of $\frac{\eps}{2}$ suffices.}. If $I$ is furthermore a provable enclosure, then the following proof rules are derivable

    \begin{calculus}
        \cinferenceRule[univ_delta|{$\delta^\forall$}]{universal delta perturbation}
        {\linferenceRule
          {\lsequent{}{\bigwedge_{f(e) \in \term_F(\phi)} \forall^{I_{e}}s \left(\abs{\F_{f(e)}(s) - f(s)} \leq \delta\right)}}
          {\lsequent{}{\forall^{I_{\vec{x}}}\vec{x} \left(\pert^\forall_\F\left(\phi, \delta\right) \rightarrow \phi\right)}}
        }{}

        \cinferenceRule[exist_delta|{$\delta^\exists$}]{existential delta perturbation}
        {\linferenceRule
          {\lsequent{}{\bigwedge_{f(e) \in \term_F(\phi)} \forall^{I_{e}}s \left(\abs{\F_{f(e)}(s) - f(s)} \leq \delta\right)}}
          {\lsequent{}{\forall^{I_{\vec{x}}}\vec{x} \left(\phi \rightarrow \pert^\exists_\F\left(\phi, \delta\right)\right)}}
        }{}
        
    \end{calculus}
\end{theorem}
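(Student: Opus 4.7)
The plan is to prove both rules by induction on the formula $\phi$, with the atomic case requiring a secondary induction on the number of function terms. I first establish semantic soundness for an arbitrary enclosure, then observe that each semantic step lifts to a formal $\dL$ derivation once the enclosure conditions are themselves provable.

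For the atomic base case with a function-free term $t$, the perturbation is the identity, so both implications are tautologies discharged by \irref{qear}. For an atomic formula $t(f(e), \vec{x}) \succeq 0$ with $e$ function-free, the universal perturbation is $\forall^{\F_{f(e)}(e) + [-\delta, \delta]} w~ \pert^\forall_{\F_{w \mapsto f(e)}}(t(w, \vec{x}) \succeq 0, \delta)$. Assuming $\vec{x} \in I_{\vec{x}}$, the enclosure property forces $e \in I_e$, and the hypothesis then places $f(e) \in \F_{f(e)}(e) + [-\delta, \delta]$. Instantiating $w = f(e)$ in the universal quantifier, and using that $\F_{w \mapsto f(e)}$ is defined so its approximation values on terms containing $w$ coincide with those of $\F$ on the corresponding terms containing $f(e)$, this reduces to applying the secondary inductive hypothesis to an atomic formula with one fewer function term. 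The existential case is entirely dual: given $t(f(e), \vec{x}) \succeq 0$, supply $w = f(e)$ as the existential witness, which lies in $\F_{f(e)}(e) + [-\delta, \delta]$ by the same hypothesis.

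For the outer structural induction, the cases for $\land, \lor$ are immediate since the perturbations distribute syntactically over these connectives. For bounded quantifiers $Q^{[a, b]} y~ \psi(y, \vec{x})$, the enclosure requirement $I_a \cup I_b \subseteq I_y$ forces $[a, b] \subseteq I_y$ whenever the free variables of $a, b$ lie in their enclosures, so as $y$ ranges over $[a, b]$ the enclosure constraint on $\psi$ continues to hold and the inductive hypothesis applies directly. Lemma~\ref{lem: perturbation of negated formulas} could alternatively be used to derive the existential rule from the universal one, but proving both cases directly by the dual argument is cleaner.

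The principal obstacle is bookkeeping: when the innermost function term $f(e)$ is discharged and replaced by a fresh bound variable $w$, one must verify that the updated approximation $\F_{w \mapsto f(e)}$ together with the enclosure extended by $I_w := I_{f(e)}$ continue to satisfy all hypotheses of the recursive call. The needed containment $\F_{f(e)}(I_e) + [-\delta, \delta] \subseteq I_{f(e)}$ follows by the triangle inequality from the hypothesis $|\F_{f(e)}(s) - f(s)| \leq \delta$ combined with the enclosure property $f(I_e) + [-1, 1] \subseteq I_{f(e)}$, and this is precisely where the bound $\delta < \frac{1}{2}$ is used. For derivability in $\dL$ under a provable enclosure, the $w = f(e)$ instantiation is discharged by $\forall$-instantiation combined with the provable fact $f(e) \in \F_{f(e)}(e) + [-\delta, \delta]$ (which is a direct consequence of the rule's premise), the interval containments required at each inductive step are provable by \irref{qear} from the provability of $I$, and all remaining logical and quantifier steps use only standard first-order rules.
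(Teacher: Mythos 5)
Your proposal matches the paper's own proof quite closely. Both proceed by the same double induction (on formula structure, with a secondary induction on the number of function terms), handle the function-free base case via~\irref{qear}, dispose of the atomic function case by observing that the rule premise together with the enclosure property places $f(e)$ in $\F_{f(e)}(e)+[-\delta,\delta]$, instantiate the perturbation's bound variable $w$ at $f(e)$, and discharge the remaining obligation by the inductive hypothesis applied to the smaller formula $t(w,\vec{x})\succeq 0$ equipped with the enclosure $J$ obtained from $I$ by the renaming $f(e)\mapsto w$ (your $I_w := I_{f(e)}$ is the relevant entry of this $J$). You also correctly identify that the containment $\F_{f(e)}(I_e)+[-\delta,\delta]\subseteq I_{f(e)}$ is the source of the $\delta<\tfrac12$ restriction. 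The only stylistic divergence is that you sketch the existential rule by a directly dual argument, whereas the paper derives it from the universal rule via Lemma~\ref{lem: perturbation of negated formulas}; both are fine, and you even note the alternative.

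One small caveat worth tightening: when you write that instantiating $w=f(e)$ ``reduces to applying the secondary inductive hypothesis to an atomic formula with one fewer function term,'' keep the order straight. The inductive hypothesis should be invoked on $t(w,\vec{x})\succeq 0$ with $w$ kept free (ranging over the renamed enclosure $J_w=I_{f(e)}$), yielding a universal implication; \emph{then} instantiate $w=f(e)$ in both that implication and the antecedent $\pert^\forall_\F(\phi,\delta)$, and close by modus ponens. If you first substitute $f(e)$ for $w$, you are back to the original term count and the induction measure does not decrease. The paper's cut-structured derivation makes this ordering explicit; your proposal has the right ingredients, but the phrasing obscures that the IH is applied to the formula \emph{with $w$ free}, prior to any instantiation.
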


Intuitively, these proof rules show that universal/existential $\delta$-perturbations are indeed strengthenings/weakenings of the original formula, \emph{provided} that
the approximants given by $\F$ have a uniform error of at most $\delta$ on the intervals given by the enclosure $I$.

\begin{proof}
    By the soundness of $\dL$'s axiomatization, we may assume without loss of generality that $I$ is a provable enclosure. \irref{univ_delta} is first derived via simultaneous induction on the construction of $\pert^\forall_\F\left(\phi, \delta\right)$ for all $\phi, I$. The base case of function-free atomic formulas is trivial since the perturbation is the identity operator on such formulas, and similarly the case for connectives/quantifiers follow directly by inductive hypothesis. It suffices to just consider the non-trivial case $\phi(\vec{x}) = t(f(e), \vec{x}) \succeq 0$ where $e$ is function-free. Recall by construction of $\pert^\forall_\F(\phi(x), \delta)$ we have
        \[\pert^\forall_{\F}\left(t(f(e), \vec{x}) \succeq 0, \delta\right) = \forall^{[-\delta, \delta] + \F_{f(e)}(e)} w \left( \pert^\forall_{\F_{w \mapsto f(e)}}\left(t(w, \vec{x}) \succeq 0, \delta\right)\right)\]
        Standard reductions then give the following derivation:
        \begin{sequentdeduction}
            \linfer[]
                {\linfer[cut+alll]
                    {\linfer[]
                        {\text{\textcircled{1}}}
                    {\lsequent{}{\forall^{I_{\vec{x}}}\vec{x}\left(\abs{\F_{f(e)}(e) - f(e)} \leq \delta\right)}}
                    &
                    {\text{\textcircled{2}}}
                    }
                {\lsequent{}{\forall^{I_{\vec{x}}}\vec{x}\left(\forall^{[-\delta, \delta] + \F_{f(e)}(e)} w \left( \pert^\forall_{\F_{w \mapsto f(e)}}\left(t(w, \vec{x}) \succeq 0, \delta\right)\right) \rightarrow t(f(e), \vec{x}) \succeq 0\right)}}
                }
            {\lsequent{}{\forall^{I_{\vec{x}}}\vec{x}\left(\pert^\forall_\F\left(\phi(\vec{x}), \delta\right) \rightarrow \phi(\vec{x})\right)}}
        \end{sequentdeduction}
        Where 
        \[\text{\textcircled{2}} \equiv \forall^{I_{\vec{x}}}\vec{x}\forall^{[-\delta, \delta] + \F_{f(e)}(e)} w \left(\pert^\forall_{\F_{w \mapsto f(e)}}\left(t(w, \vec{x}) \succeq 0, \delta\right) \rightarrow t(w, \vec{x}) \succeq 0\right)\]
        To handle premise \textcircled{1}, recall the enclosure $I$ was assumed to be provable. In particular, this means that the following is provable in $\dL$:
        \[\forall^{I_{\vec{x}}}\vec{x} \left(e(\vec{x}) \in I_{e}\right)\]
        Thus, \textcircled{1} can be handled as follows:
        \begin{sequentdeduction}
            \linfer[cut]
                {\linfer[]
                    {\lclose}
                {\lsequent{}{\forall^{I_{\vec{x}}}\vec{x} \left(e(\vec{x}) \in I_{e}\right)}}
                &
                \linfer[]{}
                {\lsequent{}{\forall^{I_{e}}s \left(\abs{\F_{f(e)}(s) - f(s)} \leq \delta\right)}}
                }
            {\lsequent{}{\forall^{I_{\vec{x}}}\vec{x}\left(\abs{\F_{f(e)}(e) - f(e)} \leq \delta\right)}}
        \end{sequentdeduction}
        which is of the desired form since $f(e) \in \term_F(\phi)$. It remains to derive premise \textcircled{2}. First note that $\delta < \frac{1}{2}$ implies the provability of the following
        \[ \lsequent{\forall^{I_{e}}s \left(\abs{\F_{f(e)}(s) - f(s)} \leq \delta\right)}{\forall^{I_{\vec{x}}}\vec{x}\forall^{[-\delta, \delta] + \F_{f(e)}(e)}w \abs{w - f(e)} \leq 1}\]
        Hence, standard deductions and the provability of $I$ further imply that following is provable as well
        \[\lsequent{\forall^{I_{e}}s \left(\abs{\F_{f(e)}(s) - f(s)} \leq \delta\right)}{\forall^{I_{\vec{x}}}\vec{x}\forall^{[-\delta, \delta] + \F_{f(e)}(e)}w \left(w \in I_{f(e)}\right)}\]
        Chaining these up results in the following derivation of \textcircled{2}:
        \begin{sequentdeduction}
            \linfer[cut]
                {\linfer[]
                    {}
                {\lsequent{}{\forall^{I_{e}}s \left(\abs{\F_{f(e)}(s) - f(s)} \leq \delta\right)}}
                &
                \linfer[]
                    {\text{\textcircled{3}}}
                {\lsequent{}{\forall^{I_{\vec{x}}}\vec{x}\forall^{I_{f(e)}}w \left(\pert^\forall_{\F_{w \mapsto f(e)}}\left(t(w, \vec{x}) \succeq 0, \delta\right) \rightarrow t(w, \vec{x}) \succeq 0\right)}}
                }
            {\lsequent{}{\forall^{I_{\vec{x}}}\vec{x}\forall^{[-\delta, \delta] + \F_{f(e)}(e)} w \left(\pert^\forall_{\F_{w \mapsto f(e)}}\left(t(w, \vec{x}) \succeq 0, \delta\right) \rightarrow t(w, \vec{x}) \succeq 0\right)}}
        \end{sequentdeduction}
        Notice that the left premise is of the form we want, so it suffices to prove premise \textcircled{3}. To this end, define the enclosure $J : \term(t(w, \vec{x}) \succeq 0) \to \IQ$ for the formula $t(w, \vec{x}) \succeq 0$ via the following for all terms $s(w, \vec{x}) \in \term(t(w, \vec{x}) \succeq 0)$:
        \[J_{s(w, \vec{x})} = I_{s(f(e), \vec{x})}\]
        This is well-defined sine $w$ is the only new variable introduced, it is indeed an (provable) enclosure since $I$ is an (provable) enclosure. Therefore, we may invoke our inductive hypothesis on $t(w, \vec{x}) \succeq 0$ and $J$. Applying this to premise \textcircled{3} and simplifying yields the following proof (where the use of inductive hypothesis is indicated with \irref{univ_delta}):
        
        \begin{sequentdeduction}
            \linfer[univ_delta]
                {\linfer[id]
                    {\linfer[id]
                        {\lsequent{}{\bigwedge_{f(e) \in \term_F(\phi)} \forall^{I_{e}}s \left(\abs{\F_{f(e)}(s) - f(s)} \leq \delta\right)}}
                    {\lsequent{}{\bigwedge_{g(u(f(e), \vec{x})) \in \term_F(\phi)} \forall^{I_{u(f(e), \vec{x})}}s \left(\abs{\F_{g(u(f(e), \vec{x}))}(s) - g(s)} \leq \delta\right)}}
                    }
                {\lsequent{}{\bigwedge_{g(u(w, \vec{x})) \in \term_F(t(w, \vec{x}) \succeq 0)} \forall^{J_{u(w, \vec{x})}}s \left(\abs{\F_{w \mapsto f(e)}(g(u(w, \vec{x})))(s) - g(s)} \leq \delta\right)}}    
                }
            {\lsequent{}{\forall^{I_{\vec{x}}}\vec{x}\forall^{I_{f(e)}}w \left(\pert^\forall_{\F_{w \mapsto f(e)}}\left(t(w, \vec{x}) \succeq 0, \delta\right) \rightarrow t(w, \vec{x}) \succeq 0\right)}}
        \end{sequentdeduction}
    Since the remaining premise is precisely the premise of \irref{univ_delta}, this completes the proof.

    This completes the derivation of \irref{univ_delta}, and the derivation of \irref{exist_delta} now directly follows via duality, noticing that function terms and enclosures are preserved by taking negations. We have:
    \begin{sequentdeduction}
        \linfer[]
            {\linfer[]
                {\linfer[univ_delta]
                    {\lsequent{}{\bigwedge_{f(e) \in \term_F(\phi(\vec{x}))} \forall^{I_{e}}s \left(\abs{\F_{f(e)}(s) - f(s)} \leq \delta\right)}}
                {\lsequent{}{\forall^{I_{\vec{x}}}\vec{x} \left(\pert^\forall_\F\left(\neg\phi(\vec{x}), \delta\right) \rightarrow \neg\phi(\vec{x})\right)}}
                }
            {\lsequent{}{\forall^{I_{\vec{x}}}\vec{x} \left(\neg\pert^\exists_\F\left(\phi(\vec{x}), \delta\right) \rightarrow \neg\phi(\vec{x})\right)}}
            }
        {\lsequent{}{\forall^{I_{\vec{x}}}\vec{x} \left(\phi(\vec{x}) \rightarrow \pert^\exists_\F\left(\phi(\vec{x}), \delta\right)\right)}}
    \end{sequentdeduction}
    This completes the proof of Theorem \ref{thm: proof rules for delta perturbations}.
\end{proof}

Theorem \ref{thm: proof rules for delta perturbations} establishes that the perturbations $\pertf{\phi}{\delta}, \perte{\phi}{\delta}$ are \emph{provable} $\delta$-strengthenings (weakenings) by uniformly reducing all numerical conditions down to requirements on the accuracy of $\F$. Furthermore, since such perturbations are $\folr$ formulas, they admit quantifier elimination. As a corollary, it follows that formulas $\phi \in \fmlb(\L_D)$ can be \emph{provably} approximated by quantifier-free $\folr$ formulas. The following example illustrates this.

\begin{example}[Symbolic Conditions for Lyapunov Functions \footnote{This example has been formally verified using the $\dL$ theorem prover KeYmaera X \cite{DBLP:conf/cade/FultonMQVP15} with exact arithmetic.}]
    \label{example: Lyapunov function approx}
    Let $\phi(c_1, c_2, c_3)$ be the $\Pi_1$-formula from Example \ref{example: lyapunov function} which is true if and only if the function $V(x_1, x_2)$ is a valid (strict) Lyapunov function. Earlier works \cite{DBLP:conf/cav/KongSG18, DBLP:conf/hybrid/KapinskiDSA14} synthesized \emph{specific} numerical values of $\vec{c} = (c_1, c_2, c_3)$ such that $\phi(c_1, c_2, c_3)$ is satisfied when $\vec{x}$ is bounded away from the origin. I.e. The quantification on $\vec{x}$ is relaxed to the set $\{\eps < \norm{\vec{x}} < 1\}$ for some positive $\eps > 0$. By utilizing Theorem \ref{thm: proof rules for delta perturbations}, one can further synthesize \emph{logical characterizations} of $\vec{c}$ that provably satisfies the formula $\phi(\vec{c})$ (bounded away from $0$). Concretely, we modify $\phi(\vec{c})$ to quantify over $\{0.1 \leq \norm{\vec{x}}_{\infty} \leq 0.5\}$ and denote this relaxed formula by $\tilde{\phi}$. This is similar to earlier work \cite[Section~5.2]{DBLP:conf/cav/KongSG18} which considered the set $\{0.1 \leq \norm{\vec{x}} \leq 1\}$. As Lyapunov functions are only required to hold for some open ball around the origin, the upperbound on $\norm{\vec{x}}$ can be chosen freely. $l_\infty$ norm was used to better handle the restriction on quantifiers. 
    
    Consider the perturbation $\pertf[\F]{\tilde{\phi}}{0.0052}$ given by $\F = \left\{\sin(x_1) \mapsto \frac{969}{1000}z\right\}$, resulting in the $\folr$ formula 
    \[\pertf[\F]{\tilde{\phi}}{0.0052} \equiv \forall^{0.1 \leq \norm{\vec{x}}_\infty \leq 0.5}\vec{x} \forall^{\F_{\sin(x_1)}(x_1) + [-0.0052, 0.0052]} w~\psi(\vec{c}, x_1, x_2, w)\]
    where 
    \[\psi(\vec{c}, x_1, x_2, w) \equiv V(x, y) > 0 \land (c_1x_1 + 2c_3x_2)(x_2 + w) -x_2(2c_2x_1 + c_1x_2) > 0\]
    is the quantifier-free part. Further let $I$ denote the enclosure of $\tilde{\phi}$ given by $I_{x_1} = I_{x_2} = [-0.5, 0.5], I_{c_1} = I_{c_2} = I_{c_3} = [0, 100]$ and $I_{\sin(x_1)} = [-2, 2]$ (note that any $I_{\sin(x_1)}$ suffices as the term $\sin(x_1)$ is not used in other function terms), it follows (e.g. by numerical computations) that 
    \[\norm{\F_{\sin(x_1)}(x_1) - \sin(x_1)}_{[-0.5, 0.5]} < 0.0052\]
    Theorem \ref{thm: stone weierstrass for IVPs} then implies that this is \emph{provable}, and therefore an application of the proof-rule \irref{univ_delta} \emph{syntactically proves} 
    \[\forall^{[0, 100]^3}\vec{c}~(\pertf[\F]{\tilde{\phi}}{0.0052} \rightarrow \tilde{\phi})\]
    In particular, one can apply quantifier elimination on $\pertf[\F]{\phi}{0.0052}$ to obtain a provably correct under-approximation of $\eval{\phi}$ that only involves the variables $c_1, c_2, c_3$. By a slight abuse of notation, let $\Psi(c_1, c_2, c_3)$ denote the formula obtained after applying quantifier elimination, this then easily gives \emph{provable symbolic constrains} on $c_1, c_2, c_3$ that guarantees $V(x_1, x_2)$ to be a valid Lyapunov function on $\{0.1 \leq \norm{\vec{x}}_\infty \leq 0.5\}$. For example, simplifying $\Psi(40.6843, 35.6870, c_3)$ gives
    \[\Psi(40.6843, 35.6870, c_3) \equiv \frac{ 274856791 -\sqrt{1148902878104790}}{4845000} < c_3 < \frac{56029150 + \sqrt{1213829504113710}}{1021000}\]
    and recovers the tuple $\vec{c} = (40.6843, 35.6870, 84.3906)$ found in earlier work \cite[Section~5.2]{DBLP:conf/cav/KongSG18} as a special case. For better presentability, this can be rounded conservatively to obtain
    \[22 < c_3 < 88 \rightarrow \Psi(40.6843, 35.6870, c_3) \implies 22 < c_3 < 88 \rightarrow \tilde{\phi}(40.6843, 35.6870, c_3)\]
    Thus, this example shows how such perturbations can be used to generate \emph{provable} under-approximations that are symbolic in the variables, beyond the capabilities of approximate decision procedures \cite{Gao_Kong_Chen_Clarke_2014,DBLP:conf/cav/KongSG18,Franek_Ratschan_Zgliczynski_2016} which can only handle sentences without free variables. 
\end{example}

Motivated by the premises appearing in Theorem \ref{thm: proof rules for delta perturbations}, one can define a natural norm on approximations to formulas, leading to the notion of \emph{admissible approximations}.

\begin{definition}[Admissible approximations]
    \label{def: admissible approximants}
    Let $\phi \in \fml_B(\L_D)$ be a bounded formula, $I : \term(\phi) \to \IQ$ an enclosure and $\F : \term_F(\phi) \to \folr$ an approximation. The norm of $\F$ relative to enclosure $I$, denoted $\norm{\F}_I$, is defined as:
    \[ \norm{\F}_I = \max_{f(e) \in \termf(\phi)} \norm{\F_{f(e)} - f}_{I_e}\]
    For a given $\delta \in \Q^+$ such that $\delta < \frac{1}{2}$, the approximation $\F$ is said to be: 
    \begin{itemize}
        \item \emph{$(I, \delta)$ admissible} if $\norm{\F}_I < \delta$.
        \item \emph{provably} $(I, \delta)$ admissible if $\bigwedge_{f(e) \in \termf(\phi)} \norm{\F_{f(e)} - f}_{I_e} < \delta$ is provable (in $\dL$).
    \end{itemize}
    A sequence of approximants $(\F_{\delta_i})_i$ with $\delta_i \rightarrow 0$ is said to be (provably) \emph{$I$ admissible} if for all $i \in \N$, $\F_{\delta_i}$ is (provably) $(I, \delta_i)$ admissible. For brevity, such sequences are indexed as $(\F_\delta)_\delta$ with the understanding that $\delta \rightarrow 0$ and $\delta_i < \frac{1}{2}$. 
\end{definition}

\begin{remark}
    The notion of admissibility introduced in Definition \ref{def: admissible approximants} requires $\norm{\F}_I < \delta$, which is slightly stronger than the premise of $\norm{\F}_I \leq \delta$ appearing in Theorem \ref{thm: proof rules for delta perturbations}. This difference arises to allow for a more uniform treatment of robustness in Section \ref{sec: robustness}. 
\end{remark}

The existence of provable $(I, \delta)$ admissible approximations for all $\delta \in \Q^+$ follows directly from Theorem \ref{thm: stone weierstrass for functions}.

\begin{theorem}
    \label{thm: provable approximations exist}
    Let $\phi \in \fmlb(\L_D)$ be a bounded formula, $I : \term(\phi) \to \IQ$ an enclosure and $\delta \in \Q^+$. Then there (computably) exists a provable $(I, \delta)$ approximation $\F : \termf(\phi) \to \folr$.
\end{theorem}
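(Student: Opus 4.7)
The plan is to construct $\F$ pointwise on each function term in $\termf(\phi)$ by directly invoking the Stone-Weierstrass result for differentially-defined functions (Theorem~\ref{thm: stone weierstrass for functions}). Since $\phi$ is a finite syntactic object, $\termf(\phi)$ is a finite set, so iterating this construction terminates and remains computable.

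For each $f(e) \in \termf(\phi)$, I would first select a rational $T \in \Q^+$ with $I_e \subseteq [-T, T]$, which is immediate from the rational endpoints of $I_e$. I would then apply Theorem~\ref{thm: stone weierstrass for functions} to the univariate differentially-defined function $f$ with parameters $T$ and error tolerance $\sfrac{\delta}{2}$, obtaining a polynomial $p_{f(e)}(t) \in \Q[t]$ for which the formula $-T \leq t \leq T \rightarrow \abs{p_{f(e)}(t) - f(t)} < \sfrac{\delta}{2}$ is provable in $\dL$. Setting $\F_{f(e)}$ to be the $\folr$-definable function given by $p_{f(e)}$ (which is trivially a definable function in the language of real-closed fields), the inclusion $I_e \subseteq [-T, T]$ combined with the above bound yields the $\dL$-provability of $\forall^{I_e} s~\abs{\F_{f(e)}(s) - f(s)} < \sfrac{\delta}{2}$. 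By \irref{qear}, this in turn implies $\norm{\F_{f(e)} - f}_{I_e} \leq \sfrac{\delta}{2} < \delta$. Taking the finite conjunction over all function terms in $\termf(\phi)$ completes the proof of provable $(I, \delta)$ admissibility.

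The main obstacle is minimal, as the result is essentially a bookkeeping application of the already-established provable Stone-Weierstrass theorem for differentially-defined functions. The only technical care required is the $\sfrac{\delta}{2}$ safety margin, introduced so that the pointwise bound guaranteed by Theorem~\ref{thm: stone weierstrass for functions} upgrades to the strict supremum-norm bound required by Definition~\ref{def: admissible approximants}, which uses strict rather than non-strict inequality. Computability of the construction follows directly from the computability clause of Theorem~\ref{thm: stone weierstrass for functions} together with the finiteness and decidability of the term set $\termf(\phi)$.
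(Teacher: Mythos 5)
Your proposal is correct and follows essentially the same route as the paper's own one-paragraph proof: apply Theorem~\ref{thm: stone weierstrass for functions} term-by-term over the finite set $\termf(\phi)$, choosing $T$ so that $I_e \subseteq [-T,T]$, and take $\F_{f(e)}$ to be the resulting polynomial. The one difference is that you request error tolerance $\sfrac{\delta}{2}$ where the paper simply uses $\delta$; this safety margin is harmless but unnecessary, because Theorem~\ref{thm: stone weierstrass for functions} already produces a \emph{strict} pointwise bound $\abs{p(t)-f(t)} < \delta$ on $[-T,T]$, and the provable-admissibility condition $\norm{\F_{f(e)} - f}_{I_e} < \delta$ is exactly the formula $\forall^{I_e}s\,\abs{\F_{f(e)}(s)-f(s)} < \delta$, so no further upgrading from pointwise to sup-norm is needed.
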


\begin{proof}
    Let $f(e) \in \termf(\phi)$ be an arbitrary function term, let $T \in \Q^+$ be large enough that $I_e \subseteq [-T, T]$. Let $p \in \Q[t]$ be some polynomial approximation of $f$ on $[-T, T]$ with provable error at most $\delta$ as computed by Theorem \ref{thm: stone weierstrass for functions}, it then follows that $\F_{f(e)} \coloneqq p$ constructs a provable $(I, \delta)$ admissible approximation. 
\end{proof}

Consequently, it follows that $\dL$'s axiomatization is $\delta$-complete.

\begin{theorem}[$\delta$-Completeness]
    \label{thm: delta completeness}
    For every bounded sentence $\phi \in \sentb(\L_D)$, every $\delta \in \Q^+$, there are corresponding (computable) syntactic approximations $\pertf{\phi}{\delta}, \perte{\phi}{\delta}$ such that the following hold:
    \begin{itemize}
        \item If $\phi$ is true, then the approximation $\perte{\phi}{\delta}$ is provable
        \begin{align*}
            \R_D \models \phi \implies& \vdash_{\dL} \perte{\phi}{\delta}\\
            \intertext{\item If $\phi$ is false, then the negation of the approximation $\pertf{\phi}{\delta}$ is provable}
            \R_D \models \neg \phi \implies& \vdash_{\dL} \neg\pertf{\phi}{\delta}
        \end{align*}
        \item Furthermore, the following are provable.
        \begin{align*}
            \vdash_{\dL} &~\phi \rightarrow \perte{\phi}{\delta}\\
            \vdash_{\dL} &~\pertf{\phi}{\delta} \rightarrow \phi
        \end{align*}
    \end{itemize}
\end{theorem}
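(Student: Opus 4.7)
The plan is to assemble the machinery already developed rather than to prove anything fundamentally new: Corollary \ref{cor: compute provable enclosures} supplies enclosures, Theorem \ref{thm: provable approximations exist} supplies admissible approximations, and Theorem \ref{thm: proof rules for delta perturbations} supplies the proof rules that bridge $\phi$ and its perturbations. The only remaining task is to observe that the perturbations of a \emph{sentence} are $\folr$-sentences, so decidability of $\rcf$ closes the loop.

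Concretely, I would proceed as follows. Since $\phi$ is a sentence, $FV(\phi) = \emptyset$, so the domain is vacuous and Corollary \ref{cor: compute provable enclosures} computes a provable enclosure $I : \term(\phi) \to \IQ$. Theorem \ref{thm: provable approximations exist} then computes a provably $(I,\delta)$-admissible approximation $\F : \termf(\phi) \to \folr$. Define
\[
    \pertf{\phi}{\delta} \;\coloneqq\; \pertf[\F]{\phi}{\delta}, \qquad \perte{\phi}{\delta} \;\coloneqq\; \perte[\F]{\phi}{\delta}
\]
via Definition \ref{def: perturbations}. Both are bounded $\folr$-sentences since substituting the $\folr$-definable approximants $\F_{f(e)}$ for every function term removes all special-function symbols, and no free variables are introduced.

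Next, feed these ingredients into Theorem \ref{thm: proof rules for delta perturbations}. The admissibility hypothesis
\[
    \bigwedge_{f(e) \in \termf(\phi)} \forall^{I_e} s \, \bigl(\lvert \F_{f(e)}(s) - f(s)\rvert \leq \delta\bigr)
\]
is provable by construction of $\F$, so both \irref{univ_delta} and \irref{exist_delta} apply. Since $FV(\phi) = \emptyset$ the outer $\forall^{I_{\vec{x}}}\vec{x}$ is vacuous, yielding the two provable implications $\vdash_{\dL} \pertf{\phi}{\delta} \to \phi$ and $\vdash_{\dL} \phi \to \perte{\phi}{\delta}$, which is exactly the third bullet.

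For the first two bullets, I invoke soundness of $\dL$ together with the decidability of $\folr$. If $\R_D \models \phi$, then by soundness of the already-derived implication $\phi \to \perte{\phi}{\delta}$, we have $\R_D \models \perte{\phi}{\delta}$; but $\perte{\phi}{\delta} \in \sent(\L_R)$, so its truth in $\R_D$ is its truth in $\R$, and it is therefore provable using \irref{qear}. Dually, if $\R_D \models \neg\phi$, then the contrapositive of $\pertf{\phi}{\delta} \to \phi$ gives $\R_D \models \neg\pertf{\phi}{\delta}$, which is again an $\folr$-sentence and hence provable via \irref{qear}. I do not expect any genuine obstacle here: the entire content of the theorem is a packaging result, and the only point requiring any care is to verify that the construction of $I$ and $\F$ is computable uniformly in $\phi$ and $\delta$, which follows directly from the corresponding uniformity statements in Corollary \ref{cor: compute provable enclosures} and Theorem \ref{thm: provable approximations exist}.
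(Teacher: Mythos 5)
Your proposal follows the paper's proof exactly: compute a provable enclosure via Corollary~\ref{cor: compute provable enclosures}, a provably $(I,\delta)$-admissible approximation via Theorem~\ref{thm: provable approximations exist}, define the perturbations as $\pertf[\F]{\phi}{\delta}$ and $\perte[\F]{\phi}{\delta}$, and close with Theorem~\ref{thm: proof rules for delta perturbations}. The one step you omit is the initial reduction to $\delta < \tfrac12$ (e.g.\ replacing $\delta$ by $\min(\delta,\tfrac14)$): Definition~\ref{def: admissible approximants} only defines $(I,\delta)$-admissibility for $\delta < \tfrac12$ and Theorem~\ref{thm: proof rules for delta perturbations} is stated under the hypothesis $0 < \delta < \tfrac12$, so for arbitrary $\delta \in \Q^+$ you must first shrink $\delta$ before those results apply; monotonicity of the perturbations then lets the shrunk perturbation stand in for the claimed one. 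Your explicit handling of the first two bullets via soundness and \irref{qear} is correct and in fact fills in a step the paper's terse closing sentence leaves implicit.
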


\begin{proof}
    Let $(\phi, \delta) \in \sentb(\L_D) \times \Q^+$ be given. Without loss generality, we may assume $\delta < \frac{1}{2}$ (one can always take $\delta = \min\left(\delta, \frac{1}{4}\right)$). Let $I : \term(\phi) \to \IQ$ be a provable enclosure of $\phi$ as computed by Theorem \ref{cor: compute provable enclosures}, and $\F : \term_F(\phi) \to \folr$ some provable $(I, \delta)$ admissible approximation which is guaranteed to (computably) exist by Theorem \ref{thm: provable approximations exist}. The desired approximations $\pertf{\phi}{\delta}, \perte{\phi}{\delta}$ can now be computably constructed as $\pertf[\F]{\phi}{\delta}, \perte[\F]{\phi}{\delta}$, and provability follows directly by Theorem \ref{thm: proof rules for delta perturbations} and the fact that $\F$ is a provably admissible approximation. 
\end{proof}

\begin{theorem}[Proof-producing $\delta$-decidability]
    \label{thm: delta decidability with proofs}
    There is a computable algorithm that takes in bounded sentences $\phi \in \sentb(\L_D)$, rational constants $\delta > 0$ and outputs the following:
    \begin{enumerate}
        \item A bounded sentence $\pert^\forall(\phi, \delta) \in \sentb(\L_D)$, the ``$\delta$ strengthening'' of $\phi$.
        
        \item The corresponding truth value of $\pert^\forall(\phi, \delta)$.
        
        \item In the case that $\pert^\forall(\phi, \delta)$ is true, a proof of $\phi$ being true in $\dL$.  
    \end{enumerate}
\end{theorem}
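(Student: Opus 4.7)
This theorem is essentially a packaging of the machinery already developed in the excerpt, together with the decidability (and proof-generation) of $\rcf$. The plan is to show that every step in Theorem \ref{thm: delta completeness} can be carried out effectively, and that the $\folr$ output interacts cleanly with the sequent proof rule \irref{univ_delta}.

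First, I would fix the input $(\phi, \delta)$ and, without loss of generality, assume $\delta < \tfrac{1}{2}$ by replacing $\delta$ with $\min(\delta, \tfrac{1}{4})$. Since $\phi$ is a \emph{sentence}, its free-variable domain is empty, so the initial data for building an enclosure is trivial. Invoking Corollary \ref{cor: compute provable enclosures} uniformly in $\phi$ produces a provable $1$-enclosure $I : \term(\phi) \to \IQ$. Next, I would call the construction behind Theorem \ref{thm: provable approximations exist} on the pair $(I, \delta)$ to obtain a provably $(I, \delta)$-admissible approximation $\F : \termf(\phi) \to \folr$; this step is computable because each required polynomial approximant $p(t)$ with a provable uniform error bound is produced by Theorem \ref{thm: stone weierstrass for functions}. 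Finally, I would syntactically construct $\pert^\forall(\phi, \delta) \coloneqq \pert^\forall_{\F}(\phi, \delta)$ according to Definition \ref{def: perturbations}; this construction is a straightforward structural recursion and yields an $\folr$ sentence (hence, in particular, a member of $\sentb(\L_D)$), giving output (1).

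For output (2), I would observe that $\pert^\forall(\phi, \delta)$ is a sentence in $\folr$, and hence its truth value is computable via Tarski's quantifier elimination for $\rcf$~\cite{Tarski_1948}. For output (3), in the case that $\pert^\forall(\phi, \delta)$ evaluates to true, I would apply a proof-producing decision procedure for $\rcf$~\cite{DBLP:conf/cade/McLaughlinH05} to obtain a $\dL$-proof of $\pert^\forall(\phi, \delta)$ (using axiom \irref{qear}). Then, because $I$ and $\F$ were constructed to be \emph{provable} and provably $(I,\delta)$-admissible respectively, the premise of \irref{univ_delta} is derivable, so one application of \irref{univ_delta} turns this into a $\dL$-proof of $\pert^\forall(\phi, \delta) \rightarrow \phi$, and modus ponens yields a $\dL$-proof of $\phi$.

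The argument is almost entirely a matter of gluing together components already established. The only mild obstacle is bookkeeping: one must ensure that the enclosure $I$ chosen for constructing $\F$ agrees with the enclosure used inside the derivation of \irref{univ_delta}, and that the provability witnesses delivered by Corollary \ref{cor: compute provable enclosures} and Theorem \ref{thm: provable approximations exist} are combined in the correct order (enclosure before approximation, approximation before perturbation, perturbation before rule application). Once that pipeline is laid out, uniform computability of each stage ensures that the overall algorithm is computable in $(\phi, \delta)$, completing all three output requirements of the theorem.
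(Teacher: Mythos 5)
Your proposal is correct and follows essentially the same route as the paper: the paper packages the enclosure/approximation/perturbation construction by citing Theorem \ref{thm: delta completeness} directly and then exhibits the \irref{cut}+\irref{qear}+\irref{univ_delta} sequent derivation of $\phi$, whereas you inline the steps of that theorem (WLOG $\delta<\tfrac12$, Corollary \ref{cor: compute provable enclosures}, Theorem \ref{thm: provable approximations exist}, Definition \ref{def: perturbations}) before performing the same rule application. The content and computability argument are identical, so this is just a presentational difference.
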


\begin{remark}
    Theorem \ref{thm: delta decidability with proofs} is precisely what this article means by ``$\delta$-decidability'' with proofs. Output (1) of the algorithm is a purely syntactic modification of the sentence, output (2) decides the truth value of the perturbed $\folr$ sentence, for which there are existing proof-generating decision procedures \cite{DBLP:conf/cade/McLaughlinH05}. In the case that the $\delta$ strengthening of $\phi$ is true, output (3) further produces a \emph{symbolic proof} in $\dL$ of the truth of $\phi$.  
\end{remark}

\begin{proof}
     The algorithm can simply output the desired strengthening $\pertf{\phi}{\delta}$ as given by Theorem \ref{thm: delta completeness}. For output (2), the algorithm outputs the truth value of $\pert^\forall(\phi, \delta)$, which is computable as $\folr$ is decidable. Thus, it remains to show (3) and output a syntactic proof of $\phi$ in the event that $\pert^\forall(\phi, \delta)$ is true. Indeed, applying the proof rule \irref{univ_delta} yields:

    {\renewcommand{\linferPremissSeparation}{\quad}
    \begin{sequentdeduction}[array]
        \linfer[cut]
        {
            \linfer[qear]{\lclose}
            {\lsequent{}{\pertf{\phi}{\delta}}}
        !
         \linfer[univ_delta]
            {\lclose}
        {\lsequent{}{\pertf{\phi}{\delta} \rightarrow \phi}}
        }
        {\lsequent{}{\phi}}
    \end{sequentdeduction}
    }
    
    Where the application of \irref{univ_delta} is sound per the construction of $\pertf{\phi}{\delta}$ via Theorem \ref{thm: delta completeness}. Consequently, the derivation above then serves as a syntactic proof for the truth $\phi$, since the $\folr$ sentence $\pertf{\phi}{\delta}$ can always be proven by \irref{qear} when it is true. This completes the computation of output (3), and therefore the proof of the theorem. 
\end{proof}

\begin{remark}
    The proof of Theorem \ref{thm: delta decidability with proofs} proves something stronger, that \emph{every} $(I, \delta)$ admissible approximation $\F$ gives rise to a $\delta$ decision procedure. For future works it would be interesting to determine the most useful approximations for domain-specific problems. 
\end{remark}

\section{Robustness of perturbations}
\label{sec: robustness}
This section introduces the notion of \emph{robustness} for formulas relative to the perturbations defined in Section \ref{sec: perturbations}. Intuitively, robust formulas are those formulas whose truth values are continuous under perturbations. If the level of perturbation $\delta \to 0$, then the truth values of the perturbed formulas eventually agree with the original formula. This section establishes that those formulas whose prenex normal form only contains one of the two relation symbols $\rels$ are robust. It is also shown that there are inherent computability-theoretic obstacles in extending such results to all of $\R_C$. We first begin by defining the set of pure formulas.

\begin{definition}[Pure formulas]
    \label{def: pure formulas}
    A formula $\phi \in \fml(\L_C)$ is said to be \emph{pure} if its normal form only contains one type of relation symbol $\succeq~\in \rels$, in which case it is also said to be $\succeq$-pure. 
\end{definition}

The following definitions will be useful in handling bounded formulas, recall that the domain of a bounded formula $\phi$ is a function $D : FV(\phi) \to \IQ$.

\begin{definition}[Bounded semantics]
    \label{def: bounded semantics}
    Let $\phi(\vec{x}) \in \fmlb(\L_D)$ be a bounded formula of arity $n$ ($\abs{FV(\phi)} = n$) and $D : FV(\phi) \to \IQ$ a domain of $\phi$. The \emph{bounded semantics} of $\phi$ relative to $D$, $\evalb{\phi(\vec{x})}$, is defined as
    \[\evalb{\phi(\vec{x})} = D \cap \eval{\phi(\vec{x})} = \{\vec{y} \in \eval{\phi(\vec{x})}~\vert~\forall 1 \leq i \leq n(y_i \in D_{x_i})\}\]
    Correspondingly, write $D \models \phi(\vec{x})$ if and only if $\evalb{\phi(\vec{x})} = D$. If $\phi$ is a sentence and $FV(\phi) = \emptyset$, we define $\evalb{\phi} = \eval{\phi} \in \{\top, \bot\}$ to be the truth value of $\phi$.
\end{definition}

The robustness of a bounded formula $\phi$ can now be defined relative to a given domain. 

\begin{definition}[Robustness]
    \label{def: robustness}
    Let $\phi \in \fml(\L_F)$ be a bounded formula, and $D : FV(\phi) \to \IQ$ a domain of $\phi$, $\phi$ is said to be:
    \begin{enumerate}
        \item \emph{$\forall$-robust relative to $D$}, if there exists an enclosure $I : \term(\phi) \to \IQ$ extending $D$ and a sequence of $I$ admissible approximations $(\F_\delta)_\delta : \termf(\phi) \to \folr$ such that
        \[\evalb{\phi} = \bigcup_{\delta > 0} \evalb{\pertf[\F_\delta]{\phi}{\delta}}\]
       \item \emph{$\exists$-robust relative to $D$}, if there exists an enclosure $I : \term(\phi) \to \IQ$ extending $D$ and a sequence of $I$ admissible approximations $(\F_\delta)_\delta : \termf(\phi) \to \folr$ such that
        \[\evalb{\phi} = \bigcap_{\delta > 0} \evalb{\perte[\F_\delta]{\phi}{\delta}}\]
    \end{enumerate}
    If $\phi$ is a sentence, $\evalb{\pertf[\F_\delta]{\phi}{\delta}}$ (and likewise for $\perte[\F_\delta]{\phi}{\delta}$) is a monotone sequence in $\{\bot, \top\}$, hence the operators $\bigcup_{\delta > 0}, \bigcap_{\delta > 0}$ can be replaced by $\lim_{\delta \to 0}$. In the case that a formula $\psi$ is both $\forall$ and $\exists$-robust relative to $D$, $\psi$ is said to be \emph{strongly robust relative to $D$}. 
\end{definition}

For brevity, this article refers to $\forall$-robustness just by robustness. All results concerning $\forall$-robustness admit natural corresponding results for $\exists$-robustness using Lemma \ref{lem: perturbation of negated formulas}. As Theorem \ref{thm: atomic formulas are robust} will establish, bounded $>$-pure formulas are always robust for all domains. 

\begin{remark}
    By definition, the set of robust sentences (note that this does not depend on a choice of domain since sentences have no free variables) is 
    \[\{\phi \in \sentb(\L_D)~\vert~\text{$\phi$ is $\forall$-robust}\}\]
    It is natural to consider a different notion of robustness where a sentence $\phi$ is said to be robust if it satisfies at least one of the requirements in Definition \ref{def: robustness}, the set of robust sentences would then be
    \[\{\phi \in \sentb(\L_D)~\vert~\text{$\phi$ is $\forall$-robust or $\exists$-robust}\}\]
    However since every true sentence is trivially $\exists$-robust and every false sentence is trivially $\forall$-robust, this would imply that all sentences are either $\exists$-robust or $\forall$-robust, trivializing the definition. 
\end{remark}

While the robustness of a formula in Definition \ref{def: robustness} seems to be dependent on the enclosure and approximations chosen, the following result shows that is is not the case and robustness is an intrinsic property of the formula in question.  

\begin{theorem}[Characterization of robustness]
    \label{thm: chacterization of robustness}
    Let $\phi \in \fmlb(\L_D)$ be a bounded formula and $D : FV(\phi) \to \IQ$ a domain of $\phi$. The following are equivalent:
    \begin{enumerate}
        \item $\phi$ is $\forall$-robust relative to $D$.
        \item For \emph{all} enclosures $I : \term(\phi) \to \IQ$ of $\phi$ extending $D$, for \emph{all} sequences of $I$ admissible approximations $(\F_\delta)_\delta$, the following holds:
        \[\evalb{\phi} = \bigcup_{\delta > 0} \evalb{\pertf[\F_\delta]{\phi}{\delta}}\]
    \end{enumerate}
    Dual equivalences hold for $\exists$-robustness.
    \begin{enumerate}
        \item $\phi$ is $\exists$-robust relative to $D$.
        \item For \emph{all} enclosures $I : \term(\phi) \to \IQ$ of $\phi$ extending $D$, for \emph{all} sequences of $I$ admissible approximations $(\F_\delta)_\delta$, the following holds:
        \[\evalb{\phi} = \bigcap_{\delta > 0} \evalb{\pertf[\F_\delta]{\phi}{\delta}}\]
    \end{enumerate}
\end{theorem}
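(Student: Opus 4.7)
The $(2)\Rightarrow(1)$ direction is immediate: Corollary~\ref{cor: compute provable enclosures} and Theorem~\ref{thm: provable approximations exist} guarantee the existence of at least one enclosure $I$ extending $D$ and one $I$-admissible sequence $(\F_\delta)_\delta$, and applying (2) to this witness yields (1). The substantive content is $(1)\Rightarrow(2)$. One inclusion, $\bigcup_{\delta>0}\evalb{\pertf[\F_\delta]{\phi}{\delta}} \subseteq \evalb{\phi}$, holds unconditionally by the semantic soundness of \irref{univ_delta} (Theorem~\ref{thm: proof rules for delta perturbations}), so the work is to show $\evalb{\phi} \subseteq \bigcup_{\delta>0}\evalb{\pertf[\F_\delta]{\phi}{\delta}}$ for an \emph{arbitrary} enclosure $I$ extending $D$ and \emph{arbitrary} $I$-admissible $(\F_\delta)_\delta$.

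My plan is to compare any two such $\F$-perturbations through a purely semantic ``ideal'' reference $\tilde{\pert}^\forall(\phi,\delta)$, defined by exactly the recursion of Definition~\ref{def: perturbations} but with each interval center $\F_{f(e)}(e)$ replaced by the true value $f(e)$. A direct interval comparison between two $\F$-perturbations is too weak because $\norm{\F - \F^*}$ can be as large as the sum of their admissibility radii, but the triangle inequality through $f$ yields the following \emph{semantic sandwich}, to be proven by structural induction on $\phi$: whenever $\F$ is $(I,\eta)$-admissible with $\eta < \delta < \tfrac{1}{2}$,
\[
\tilde{\pert}^\forall(\phi,\delta+\eta) \;\longrightarrow\; \pertf[\F]{\phi}{\delta} \;\longrightarrow\; \tilde{\pert}^\forall(\phi,\delta-\eta).
\]
At atomic steps both implications reduce to the interval inclusions $[f(e){-}(\delta{-}\eta),f(e){+}(\delta{-}\eta)] \subseteq [\F_{f(e)}(e){-}\delta,\F_{f(e)}(e){+}\delta] \subseteq [f(e){-}(\delta{+}\eta),f(e){+}(\delta{+}\eta)]$ together with monotonicity of $\forall$ over nested intervals; the inductive step propagates the sandwich through Boolean connectives, bounded quantifiers, and the recursive reveal of successive function terms.

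With the sandwich in hand, the rest is elementary bookkeeping. Fix $\vec{y}\in\evalb{\phi}$; by (1) applied to the witnessing $I^*$ and $(\F^*_\delta)_\delta$, there is some $\delta^* > 0$ with $\vec{y}\in\evalb{\pertf[\F^*_{\delta^*}]{\phi}{\delta^*}}$. Letting $\eta^* \coloneqq \norm{\F^*_{\delta^*}}_{I^*} < \delta^*$, the right half of the sandwich places $\vec{y}\in\evalb{\tilde{\pert}^\forall(\phi,\delta^*-\eta^*)}$; set $\delta_0 \coloneqq \delta^* - \eta^* > 0$. Since $\delta_i \to 0$ in the target sequence, there exists $\delta$ in the sequence with $\delta \leq \delta_0/3$, and admissibility gives $\norm{\F_\delta}_I < \delta$, so $\delta + \norm{\F_\delta}_I < 2\delta_0/3 < \delta_0$. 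Monotonicity of $\tilde{\pert}^\forall$ in its radius and the left half of the sandwich then give $\tilde{\pert}^\forall(\phi,\delta_0) \rightarrow \tilde{\pert}^\forall(\phi,\delta + \norm{\F_\delta}_I) \rightarrow \pertf[\F_\delta]{\phi}{\delta}$, placing $\vec{y}\in\evalb{\pertf[\F_\delta]{\phi}{\delta}}$ as required. The dual statements for $\exists$-robustness follow from Lemma~\ref{lem: perturbation of negated formulas}, since $\phi$ is $\exists$-robust relative to $D$ iff $\neg\phi$ is $\forall$-robust, and the two share the same function terms, enclosures, and admissible approximations.

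The expected main obstacle is the inductive step of the sandwich lemma in the presence of \emph{nested} function terms such as $\sin(x+\cos(y))$: the interval quantified over for the outer function has its center at an $\F$-value evaluated on an expression containing the previously-introduced bound variable $w_1$, so one must verify that the substituted approximation $\F_{w_1\mapsto f(e)}$ is admissible on the sub-enclosure obtained from $I$ by letting $w_1$ range over $I_{f(e)}$. This is exactly what Definition~\ref{def: enclosures} delivers: $[\F_{f(e)}(e)-\delta,\F_{f(e)}(e)+\delta] \subseteq f(I_e) + [-2\delta,2\delta] \subseteq I_{f(e)}$ whenever $\norm{\F}_I < \delta \leq \tfrac{1}{2}$, so the bound variable is guaranteed to land inside the enclosed region on which the next-level admissibility bound is valid, and the induction proceeds cleanly.
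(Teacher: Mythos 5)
Your proposal is correct and reaches the same conclusion via a closely related but structurally different route. The paper's own proof isolates a one-sided \emph{Transfer of Approximations} lemma (Lemma~\ref{lem: mont perturbations}): on a fixed enclosure $I$, if $\F$ is $(I,\delta)$-admissible and $\G$ is $(I,\tfrac{\delta-\norm{\F}_I}{2})$-admissible, then $I \models \pertf[\F]{\phi}{\delta} \rightarrow \pertf[\G]{\phi}{\tfrac{\delta-\norm{\F}_I}{2}}$; the proof of that lemma pivots through $f(e)$ via the triangle inequality, exactly your trick. The paper then handles two different enclosures by intersecting them, $(I\cap J)_e \coloneqq I_e \cap J_e$, which is legitimate since $I$-admissibility implies $I\cap J$-admissibility and both restrict to $D$ on the free variables. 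Your version introduces an explicit error-free reference $\tilde{\pert}^\forall(\phi,\cdot)$ and proves a two-sided sandwich instead of a one-sided transfer. This is morally the same argument decomposed differently: your $\tilde{\pert}^\forall$ is what the paper's $\pertf[\G]{\phi}{\cdot}$ becomes at $\G = f$, and the single triangle-inequality step in the paper is split in two in your version. What your route buys is that the enclosure-intersection step disappears entirely, since $\tilde{\pert}^\forall$ has no dependence on an approximation and hence no enclosure to reconcile; this makes the bookkeeping at the end of $(1)\Rightarrow(2)$ more transparent (you chain a right half over $I^*$ with a left half over $I$ through a formula that knows about neither). What the paper's route buys is a single reusable one-sided lemma and slightly less machinery to introduce.

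Two small points worth tightening. First, as stated your sandwich hypothesis ``$\F$ is $(I,\eta)$-admissible'' means $\norm{\F}_I < \eta$ strictly, yet in the application you instantiate $\eta^* = \norm{\F^*_{\delta^*}}_{I^*}$ and $\eta = \norm{\F_\delta}_I$, which fails the strict inequality. The fix is trivial (either state the sandwich with $\norm{\F}_I \leq \eta$, which is all the interval inclusions actually need, or take $\eta$ slightly above $\norm{\F}_I$ with a cushion absorbed by the $\delta_0/3$ slack you already built in), but it should be made explicit. Second, you correctly flag that the nested-function induction needs $\delta+\eta \leq 1$ so the ideal intervals stay inside the $1$-enclosure; since you assume $\eta < \delta < \tfrac12$ this holds, but the $\tfrac12$ bound is doing real work there and deserves a word in the structural-induction proof rather than only in the closing remark.
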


\begin{remark}[Robust$\iff$Continuous truth]
    An alternate interpretation of Theorem \ref{thm: chacterization of robustness} is that robust sentences are \emph{exactly} those with ``continuous truth values'' in the following sense: Let $T : \sentb(\L_D) \to \{0, 1\}$ be the evaluation function defined by:
    \begin{align*}
        &T(\phi) = 1 \iff \phi \in \theoryb(\R_D)\\
        &T(\phi) = 0 \iff \phi \notin \theoryb(\R_D)
    \end{align*}
    Then a sentence $\phi \in \sentb(\L_D)$ is $\forall$-robust if and only if 
    \[\lim_{\delta \to 0} T(\pertf[\F_\delta]{\phi}{\delta}) = T(\phi)\]
    for all $I$ admissible approximations $(\F_\delta)_\delta$. Similarly, $\phi$ is $\exists$-robust if and only if
    \[\lim_{\delta \to 0} T(\perte[\F_\delta]{\phi}{\delta}) = T(\phi)\]
    thus $\phi$ is strongly robust if and only if
    \[\lim_{\delta \to 0} T(\perte[\F_\delta]{\phi}{\delta}) = \lim_{\delta \to 0} T(\pertf[\F_\delta]{\phi}{\delta}) = T(\phi)\]
\end{remark}

The following lemma establishing a relationship between perturbations induced by different approximations will be useful in proving Theorem \ref{thm: chacterization of robustness}. 

\begin{lemma}[Transfer of approximations]
    \label{lem: mont perturbations}
    Let $\phi \in \fmlb(\L_D)$ be a bounded formula, $I : \term(\phi) \to \IQ$ an enclosure, $0 < \delta < \frac{1}{2}$ and $\F: \termf(\phi) \to \folr$ an $(I, \delta)$ admissible approximation. Then for all $(I, \frac{\delta - \norm{F}_I}{2})$ admissible approximations $\G : \termf(\phi) \to \folr$, the following hold ($I$ is identified with the domain $I\lvert_{FV(\phi)}$):
    \[I \models \pertf[\F]{\phi}{\delta} \rightarrow \pertf[\G]{\phi}{\frac{\delta - \norm{\F}_I}{2}}\]
\end{lemma}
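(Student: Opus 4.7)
The plan is to proceed by structural induction on $\phi$, where the semantic content is a triangle inequality for intervals and most of the work lies in tracking how the inductive data transforms when descending through an atomic formula containing a function symbol. Set $\delta' = \frac{\delta - \norm{\F}_I}{2}$ for the target perturbation level, positive since $\norm{\F}_I < \delta$ by admissibility of $\F$. The trivial cases handle themselves: function-free atomic formulas are preserved since both $\pertf[\F]{\cdot}{\delta}$ and $\pertf[\G]{\cdot}{\delta'}$ act as the identity on them (Definition \ref{def: perturbations}, item (1)); Boolean connectives and bounded quantifiers distribute over both perturbations (items (3) and (4)) and so are handled by applying the inductive hypothesis subformula-by-subformula, using that the restriction of $I$ to any subformula's free variables is itself an enclosure with the same approximation-norm bounds.

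The heart of the proof is the atomic case $\phi \equiv t(f(e), \vec{x}) \succeq 0$ with $e$ function-free. Unfolding $\pertf[\F]{\phi}{\delta}$, I get that $\pertf[\F_{w \mapsto f(e)}]{t(w,\vec{x}) \succeq 0}{\delta}$ holds for every $w \in \F_{f(e)}(e) + [-\delta,\delta]$, and I must deduce $\pertf[\G_{w \mapsto f(e)}]{t(w,\vec{x}) \succeq 0}{\delta'}$ for every $w \in \G_{f(e)}(e) + [-\delta',\delta']$. The key bridge is the interval containment $\G_{f(e)}(e) + [-\delta',\delta'] \subseteq \F_{f(e)}(e) + [-\delta,\delta]$, which for $w$ in the left-hand side follows from the triangle inequality
\[\lvert w - \F_{f(e)}(e) \rvert \leq \delta' + \norm{\G}_I + \norm{\F}_I < 2\delta' + \norm{\F}_I = \delta,\]
using $\norm{\G}_I < \delta'$ and the enclosure property that $e$ evaluates inside $I_e$. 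Thus each such $w$ also lies in $\F_{f(e)}(e) + [-\delta,\delta]$, and the hypothesis furnishes $\pertf[\F_{w \mapsto f(e)}]{t(w,\vec{x}) \succeq 0}{\delta}$ at $w$.

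It remains to invoke the inductive hypothesis on the inner formula to swap $\F_{w \mapsto f(e)}$ for $\G_{w \mapsto f(e)}$. Define the enclosure $J$ on $t(w,\vec{x}) \succeq 0$ by $J_{s(w,\vec{x})} := I_{s(f(e),\vec{x})}$ for each subterm $s$; this sets $J_w = I_{f(e)}$ and leaves $J$ equal to $I$ on the remaining free variables. By the construction of the substituted approximations (Definition \ref{def: perturbations}), the norms transfer as $\norm{\F_{w \mapsto f(e)}}_J \leq \norm{\F}_I$ and $\norm{\G_{w \mapsto f(e)}}_J \leq \norm{\G}_I < \delta'$, so $\G_{w \mapsto f(e)}$ is $(J, \frac{\delta - \norm{\F_{w \mapsto f(e)}}_J}{2})$ admissible. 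The inductive hypothesis then produces $\pertf[\G_{w \mapsto f(e)}]{t(w,\vec{x}) \succeq 0}{\frac{\delta - \norm{\F_{w \mapsto f(e)}}_J}{2}}$, and since this level is at least $\delta'$, monotonicity of universal perturbations in the perturbation level (Definition \ref{def: perturbations}) delivers the desired conclusion. The main obstacle is book-keeping: verifying that the substituted approximations and the new enclosure preserve all norm bounds so that each inductive step begins with admissibility data of exactly the right shape for the next.
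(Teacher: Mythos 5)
Your proof is correct and follows essentially the same route as the paper's: the same triangle-inequality interval containment in the atomic case, the same transfer from $\F_{w\mapsto f(e)}$ to $\G_{w\mapsto f(e)}$ via the induced enclosure $J_{s(w,\vec{x})} := I_{s(f(e),\vec{x})}$, and the same use of $\norm{\F_{w\mapsto f(e)}}_J \leq \norm{\F}_I$ together with monotonicity of perturbations (a step the paper leaves implicit and you spell out). One small caveat: the induction should formally be on the number of function terms with structural complexity as a secondary measure, as the paper states, rather than plain structural induction on $\phi$, because $t(w,\vec{x})\succeq 0$ is not a strict subformula of $t(f(e),\vec{x})\succeq 0$ — though this is exactly the decreasing measure you implicitly rely on when applying the inductive hypothesis.
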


\begin{proof}
    See Appendix \ref{app: proofs}.
\end{proof}

With Lemma \ref{lem: mont perturbations}, Theorem \ref{thm: chacterization of robustness} can now be proven. 

\begin{proof}[Proof of Theorem \ref{thm: chacterization of robustness}]
    $(1)\!\implies\!(2)$: As $\phi$ is $\forall$-robust relative to $D$, let $I : \term(\phi) \to \IQ$ be an enclosure extending $D$ and $(\F_\delta)_\delta : \termf(\phi) \to \folr$ a sequence of $I$ admissible approximations such that 
    \[\evalb{\phi} = \bigcup_{\delta > 0} \evalb{\pertf[\F_\delta]{\phi}{\delta}}\]
    For an arbitrary enclosure $J : \term(\phi) \to \IQ$ extending $D$ and an arbitrary sequence of $J$ admissible approximations $(\G_\eps)_\eps$, we need to establish the following (the converse is trivial as $(\G_\eps)_\eps$ is admissible):
    \[\evalb{\phi} \subseteq \bigcup_{\eps > 0} \evalb{\pertf[\G_\eps]{\phi}{\eps}}\]
    let $I \cap J$ denote the enclosure defined by $(I \cap J)_e \coloneqq I_e \cap J_e$. Since $(\F_\delta)_\delta$ is $I$ admissible, it is furthermore $I \cap J$ admissible, and the same applies to $(\G_\eps)_\eps$. Lemma \ref{lem: mont perturbations} shows that for all $\delta$, for all sufficiently small $\eps > 0$ satisfying $\eps < \frac{\delta - \norm{\F}_{I \cap J}}{2}$, the following hold
    \[\evalb{\pertf[\F_\delta]{\phi}{\delta}} \subseteq \evalb{\pertf[\G_\eps]{\phi}{\eps}}\]
    Consequently, this yields the desired:
    \[\evalb{\phi} = \bigcup_{\delta} \evalb{\pertf[\F_\delta]{\phi}{\delta}} \subseteq \bigcup_{\eps} \evalb{\pertf[\G_\eps]{\phi}{\eps}}\]
    $(2) \implies (1)$: This follows directly by definition of robustness.\\
    Corresponding equivalences for $\exists$-robustness can be obtained by taking the contrapositive and using Lemma \ref{lem: perturbation of negated formulas}. 
\end{proof}

Unfortunately but also unsurprisingly, not all formulas in $\fmlb(\L_D)$ are robust.

\begin{example}[Non-robust sentence]
    \label{ex: non-robust formula}
    Consider the true sentence 
    \[\phi \equiv \forall^{[0, 1]}x \left(e^x - 1 > 0 \lor \left(1 - e^x \geq 0 \land e^x - 1 \geq 0\right)\right)\]
    note that $e^x$ is indeed differentially definable via the IVP $y' = y, y(0) = 1$ with provable global existence since its corresponding ODE is linear \cite{DBLP:journals/fac/TanP21}. For all $0 < \delta < \frac{1}{2}$, let $p_\delta(x)$ be a polynomial satisfying $\norm{p_\delta(x) - e^x}_{[0, 1]} < \delta$ and $p_\delta(0) = 1$ (e.g. Taylor expansion of sufficiently high degree) so that the sequence of approximations $(e^x \mapsto p_\delta(x))_\delta$ is $I$ admissible where $I$ is the enclosure of $\phi$ defined by $I_x = [0, 1], I_{e^x} = [0, 3]$ (since $e^x$ is not the argument of any other function term, the value of $I_{e^x}$ can be assigned arbitrarily and has no effect). For any $\delta$, the corresponding perturbation is then
    \[\pert^\forall_{e^x \mapsto p_\delta(x)}(\phi, \delta) \equiv \forall^{[0, 1]}x \forall w \left(\abs{w - p_\delta(x)} \leq \delta \rightarrow w - 1 > 0 \lor \left(1 - w \geq 0 \land w - 1 \geq 0\right)\right)\]
    which is not satisfied at $x = 0, w = 1 - \frac{\delta}{2}$, and therefore the perturbed sentence is false. Since this construction works for all $\delta > 0$, Theorem \ref{thm: chacterization of robustness} shows that $\phi$ is not $\forall$-robust. Furthermore, Lemma \ref{lem: perturbation of negated formulas} implies that $\neg \phi$ will not be $\exists$-robust. 
\end{example}

While Example \ref{ex: non-robust formula} shows that not all sentences are robust, the following theorem shows that robustness holds for a wide syntactic class of formulas. 

\begin{theorem}[Pure formulas are robust]
    \label{thm: atomic formulas are robust}
    Let $\succeq~\in \rels$ be a relation symbol and $\phi \in \fml_B^\succeq(\L_D)$ a bounded $\succeq$-pure formula, then:
    \begin{itemize}
        \item If $\succeq = >$, then $\phi$ is $\forall$-robust relative to all domains.
        \item If $\succeq = \geq$, then $\phi$ is $\exists$-robust relative to all domains. 
    \end{itemize}
\end{theorem}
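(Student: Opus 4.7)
The plan is to reduce the $\geq$-pure half to the $>$-pure half using \rref{lem: perturbation of negated formulas}: pushing negation inwards converts a $\geq$-pure bounded formula $\phi$ into a bounded $>$-pure $\neg\phi$ with $\forall/\exists$ swapped, and the lemma exchanges universal and existential perturbations under negation, so $\exists$-robustness of $\phi$ follows from $\forall$-robustness of $\neg\phi$. Assume henceforth that $\phi$ is $>$-pure. By \rref{thm: chacterization of robustness} it is enough to exhibit \emph{some} enclosure $I$ extending $D$ and some $I$-admissible sequence $(\F_\delta)_\delta$ with $\evalb{\phi} = \bigcup_{\delta>0}\evalb{\pertf[\F_\delta]{\phi}{\delta}}$; the $\supseteq$ direction is immediate from \irref{univ_delta} in \rref{thm: proof rules for delta perturbations}, so only the $\subseteq$ direction requires work.

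\textbf{Strengthened induction.} A naive pointwise induction fails at universal quantifiers, which require a single $\delta$ uniformly over a compact interval. The plan is therefore to prove by structural induction on the $>$-pure bounded $\phi$ the uniform claim: \emph{for every compact $K \subseteq \evalb{\phi}$ there exists $\delta_0 > 0$ such that $K \subseteq \evalb{\pertf[\F_\delta]{\phi}{\delta}}$ for all admissible $\delta \leq \delta_0$}; specializing to singletons then yields pointwise robustness and hence the missing inclusion. For an atom $t > 0$, compactness of $K$ forces $\inf_K t > 0$, so the admissibility bound $\norm{\F_\delta}_I < \delta$ together with uniform continuity of $t$ on the enclosed domain gives the required $\delta_0$ making the inner perturbed inequalities $t'(\vec{w},\vec{x}) > 0$ hold throughout $K$. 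Conjunctions reduce to taking the minimum of the $\delta_0$'s of the conjuncts. For $\phi = \forall^{[a,b]}y~\psi$, lift $K$ to $K' = \{\omega[y] : \omega \in K,\, y \in [a(\omega), b(\omega)]\}$, which is a continuous image of a compact and is contained in $\evalb{\psi}$; the inductive hypothesis applied to $\psi$ and $K'$ produces the desired uniform $\delta_0$.

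\textbf{Main obstacle.} The existential case $\phi = \exists^{[a,b]}y~\psi$ (and, structurally, any internal disjunction in the CNF matrix) is the central difficulty: one picks a witness $y_\omega \in [a(\omega), b(\omega)]$ with $\omega[y_\omega] \in \evalb{\psi}$ at each $\omega \in K$ and obtains some $\delta_\omega > 0$ from the pointwise induction hypothesis, but the map $\omega \mapsto y_\omega$ need not be continuous, so a uniform $\delta_0$ is not automatic. The plan is to run a simultaneous auxiliary induction establishing that, for each admissible $\F$ and $\delta > 0$, the set $\eval{\pertf[\F]{\psi}{\delta}}$ is open in its free variables: this holds for atoms because the perturbation wraps each function term in a universal $\forall w \in \F(e) + [-\delta, \delta]$ over a compact interval with continuous endpoints and strict inner $>$-inequalities, and openness propagates through boolean connectives and through inner bounded quantifiers (using that projections along compact fibers preserve closedness of complements for $\forall$ and openness for $\exists$). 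With this openness, each witness $(y_\omega, \omega) \in \eval{\pertf[\F_{\delta_\omega}]{\psi}{\delta_\omega}}$ lies in an open product neighborhood, yielding an open $V_\omega \ni \omega$ on which either $y_\omega$ itself or a small continuous adjustment (e.g.\ $y' = a(\omega')$, legitimate because the function-free $a, b$ depend continuously on $\omega$) remains a valid witness; finitely many $V_{\omega_i}$ cover $K$, and the minimum of the $\delta_{\omega_i}$ is the desired uniform $\delta_0$. The most delicate bookkeeping in executing the plan is keeping both inductions in step and treating the degenerate situation $a(\omega) = b(\omega)$, which is handled by restricting attention to $\omega \in K$, where the interval is necessarily nonempty.
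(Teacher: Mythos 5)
Your overall strategy is in the same spirit as the paper's, but reorganized: the paper (Lemma \ref{lem: continuously choose perturbation}) builds a \emph{continuous} perturbation-radius function $m^I_\phi$ by induction (peeling off one function symbol at a time, then structurally), so that uniformity over compacts is absorbed once and for all into the extreme value theorem; you instead aim to prove the ``uniform over compact $K$'' claim directly by structural induction, invoking an auxiliary openness lemma. The paper does not need openness of the perturbed sets; for the existential quantifier it simply takes $m_2(\vec z) = \max_{w\in[a(\vec z),b(\vec z)]}m^I_\psi(w,\vec z)$, which is continuous and gives the needed monotonicity for free.

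There is a genuine gap in your existential (and disjunction) case. You apply the strengthened IH to the \emph{singleton} $\{\omega[y_\omega]\}$ to obtain $\delta_\omega$, then use openness of $\eval{\pertf[\F_{\delta_\omega}]{\psi}{\delta_\omega}}$ to cover a neighborhood $V_\omega\ni\omega$. This only shows that each $\omega'\in V_{\omega_i}$ lies in $\evalb{\pertf[\F_{\delta_{\omega_i}}]{\exists y\,\psi}{\delta_{\omega_i}}}$ --- i.e.\ membership at the single level $\delta=\delta_{\omega_i}$, not uniformly for $\delta\le\delta_0=\min_i\delta_{\omega_i}$. Because the approximants $\F_\delta$ change with $\delta$ (a fixed $\F$ cannot be $(I,\delta)$-admissible for all $\delta$), you cannot conclude $\omega'\in\evalb{\pertf[\F_\delta]{\exists y\,\psi}{\delta}}$ for smaller $\delta$ from the $\delta_{\omega_i}$-level statement; the monotonicity in \rref{def: perturbations} holds only for fixed $\F$, and \rref{lem: mont perturbations} transfers with a loss (factor-of-two shrinking tied to $\norm{\F}_I$), so ``$\min_i\delta_{\omega_i}$'' is not justified. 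The remedy is to use openness of $\eval{\psi}$ (your Lemma-\ref{lem: continuously choose interior}-style fact), not of the perturbed set: choose $r_\omega>0$ with $\overline{B(\omega[y_\omega],r_\omega)}\cap(D\times I_y)\subseteq\evalb{\psi}$, cover $K$ by finitely many $B(\omega_i,r_{\omega_i}/2)$, form the compact $K'=\bigcup_i\overline{B(\omega_i[y_{\omega_i}],r_{\omega_i}/2)}\cap(D\times I_y)\subseteq\evalb{\psi}$, and apply the strengthened IH to $K'$; that single $\delta_0$ then works at every $\omega\in K$ because the (endpoint-adjusted) witness stays inside $K'$. The same repair is needed for internal disjunctions, where you must first shrink the open cover $\{\evalb{\phi_1},\evalb{\phi_2}\}$ of $K$ to a compact cover before invoking the IH on each piece. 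Finally, your atom case ``uniform continuity of $t$'' glosses over nested function compositions: after one substitution $w_1$ ranges over an interval that itself depends on the perturbed value, so the error propagates through each level; this is controllable (the propagated error is $O(\delta)$ with constants from Lipschitz bounds on the enclosure) but should be argued by peeling off one function symbol at a time as the paper does, rather than all at once.
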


Such a syntactic restriction can also be viewed topologically, namely that every formula $\phi \in \fmlo_B(\L_D)$ characterizes an \emph{open set}, and every formula $\phi \in \fmlc_B(\L_D)$ characterizes a \emph{closed set}, the proof of Theorem \ref{thm: atomic formulas are robust} crucially relies on such topological properties. Note that this importantly requires the formulas to have \emph{bounded quantifiers}. For example, the formula $\phi(x, y) = xy - 1 \geq 0$ defines a closed subset of $\R^2$, yet the following formula with unbounded quantification defines the open set $\R \setminus \{0\}$.  
\[\psi(x) = \exists y~xy - 1 \geq 0\]
Intuitively, the preservation of openness/closedness amounts to requiring the projection map $\proj : X \times Y \to X$ to be open/closed, which only holds if $Y$ is compact. The following lemma gives an alternate interpretation of this fact. 

\begin{lemma}
    \label{lem: continuously choose interior}
    Let $\phi \in \fmlo_B(\L_D)$ be a bounded formula of arity $n$. There exists a continuous function $r_\phi : \R^n \to \R^{\geq 0}$ such that the following are equivalent for all $\vec{z} \in \R^n$:
    \begin{enumerate}
        \item $\vec{z} \in \eval{\phi}$
        \item $r_\phi(\vec{z}) > 0$
        \item $r_\phi(\vec{z}) > 0$ and $B(\vec{z}, r_\phi(\vec{z})) \subseteq \eval{\phi}$, where $B(\vec{z}, r_\phi(\vec{z}))$ denotes the open ball of radius $r_\phi(\vec{z})$ centered at $\vec{z}$. 
    \end{enumerate}
\end{lemma}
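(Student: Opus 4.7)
My approach would be to define $r_\phi$ as essentially the distance from $\vec{z}$ to the complement of $\eval{\phi}$, exploiting that $\eval{\phi}$ is topologically open when $\phi$ is $>$-pure with bounded quantifiers. The plan has three stages: show openness, define $r_\phi$, and verify continuity together with the three equivalences.

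First I would show by structural induction on $\phi$ that $\eval{\phi} \subseteq \R^n$ is open. Atomic formulas $t > 0$ define open sets because every differentially-defined function is continuous (indeed $C^\infty$), so $\{\vec{z} : t(\vec{z}) > 0\}$ is open. Conjunctions and disjunctions preserve openness. For the bounded-quantifier cases, I would first apply the normalization $Q^{[a,b]}x\,\psi \equiv Q^{[0,1]}t\,\psi(at + (1-t)b)$ from the preliminaries to reduce to quantification over the constant compact interval $[0,1]$; substitution preserves both $>$-purity and continuity of the matrix. Universal quantification then preserves openness via the tube lemma (using compactness of $[0,1]$). Existential quantification preserves openness by a direct argument: given $(\vec{z}_0, y_0)$ with $y_0 \in [0,1]$ in the open set $U \subseteq \R^{n+1}$ defined by the matrix, choose $\eps > 0$ with $B_\eps((\vec{z}_0, y_0)) \subseteq U$; if $y_0 \in \{0,1\}$, perturb inward to $y_1 \in (0,1)$ still within $B_\eps$, and then continuity of $U$ gives a product neighborhood of $\vec{z}_0$ all of whose points admit the witness $y_1$.

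Next, I would set
\[r_\phi(\vec{z}) \coloneqq \min\!\bigl(1,\, \mathrm{dist}(\vec{z}, \R^n \setminus \eval{\phi})\bigr),\]
using the convention $\mathrm{dist}(\vec{z}, \emptyset) = +\infty$ so that $r_\phi \equiv 1$ when $\eval{\phi} = \R^n$. Continuity of $r_\phi$ then follows because distance to a fixed set is $1$-Lipschitz and the minimum with a constant preserves continuity; the cap by $1$ is inessential and only ensures finiteness.

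Finally, I would verify the three equivalences. For $(1) \Rightarrow (3)$: if $\vec{z} \in \eval{\phi}$ then openness yields $\mathrm{dist}(\vec{z}, \R^n \setminus \eval{\phi}) > 0$, and by the very definition of distance $B(\vec{z}, r_\phi(\vec{z}))$ is disjoint from $\R^n \setminus \eval{\phi}$, hence contained in $\eval{\phi}$. The implication $(3) \Rightarrow (2)$ is immediate. For $(2) \Rightarrow (1)$: if $r_\phi(\vec{z}) > 0$ then $\vec{z}$ lies at positive distance from the closed complement, so $\vec{z} \in \eval{\phi}$. The main obstacle in this plan is the openness step itself, specifically the existential case, which requires the perturb-the-witness trick and depends crucially on the bounds being normalized to the compact interval $[0,1]$; this parallels the well-known failure of openness under unbounded projection (as already reflected in the paper's example $\exists y\, (xy - 1 \geq 0)$).
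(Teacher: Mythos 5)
Your proposal is correct and follows essentially the same route as the paper's proof: normalize bounded quantifiers to $[0,1]$, establish openness of $\eval{\phi}$ by structural induction (atomic terms are continuous; connectives give finite unions/intersections; the universal case uses compactness of $[0,1]$; the existential case uses that projection onto a factor is open), and then take $r_\phi$ to be (essentially) the distance to the closed complement $\eval{\neg\phi}$. Your use of the tube lemma for the universal case is just the concrete form of the paper's ``projection from a space with compact factor is a closed map,'' and your perturb-the-witness argument for the existential case is a direct, unpacked version of the paper's ``projection is an open map'' (note the inward perturbation of $y_0$ off the boundary of $[0,1]$ is harmless but not actually needed, since the projection $\proj : \R^n \times [0,1] \to \R^n$ is already open in the subspace topology). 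One small improvement in your version: capping $r_\phi$ at $1$ correctly handles the degenerate case $\eval{\neg\phi} = \emptyset$, where the uncapped distance is $+\infty$, a corner case the paper's phrasing glosses over.
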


\begin{proof}
    It suffices to show that $\eval{\phi}$ is an open set, since then $r_\phi$ can be taken to be the distance function to $\eval{\neg \phi}$ and the claim follows. First note that one can find some bounded formula $\psi \equiv \phi$ such that $\psi$ only quantifies over $[0, 1]$, and it suffices to prove the claim for $\psi$ as $\eval{\phi} = \eval{\psi}$. We will prove the claim for $\psi$ by induction, noting that the base case for atomic formulas follow directly by continuity of terms. Suppose $\psi(x) \equiv\forall^{[0, 1]}y~\eta(x, y)$, giving
    \[\eval{\psi} = \{x \in \R~\vert~\forall^{[0, 1]}y~(x, y) \in \eval{\eta}\} = \left(\proj(\eval{\eta}^c \cap (\R^n \times [0, 1]))\right)^c\]
    where $\R^n$ is viewed as a subspace of $\R^{n + 1}$ representing the first $n$-coordinates. By induction, $\eval{\eta}$ is open in $\R^{n + 1}$ and therefore $\eval{\eta}^c \cap (\R^n \times [0, 1])$ is closed in $\R^n \times [0, 1]$. Since $[0, 1]$ is compact, the natural projection $\proj : \R^n \times [0, 1] \to \R^n$ is a closed map, and therefore $\proj(\eval{\eta}^c \cap (\R^n \times [0, 1]))$ is a closed set in $\R^n$, consequently $\eval{\psi}$ is open. The dual case of existential quantification holds similarly, noting that $\proj$ is always an open map. And the case for logical connectives correspond to finite unions/intersections and are therefore trivial. Thus, the proof follows by induction. 
\end{proof}
The following technical lemma establishes that a similar ``radius of perturbation'' exists continuously. 

\begin{lemma}
    \label{lem: continuously choose perturbation}
    Let $\phi \in \fmlbo(\L_D)$ be a bounded formula, $D : FV(\phi) \to \IQ$ a domain of $\phi$ and $I : \term(\phi) \to \IQ$ an enclosure of $\phi$ extending $D$. There exists a continuous function $m^I_\phi : D \to [0, \sfrac{1}{4}]$ such that the following are equivalent for all $\vec{z} \in D$:
    \begin{enumerate}
        \item $\vec{z} \in \evalb{\phi}$
        \item $m^I_\phi(\vec{z}) > 0$
        \item $m^I_\phi(\vec{z}) > 0$ and $B_{D}(\vec{z}, m^I_\phi(\vec{z})) \subseteq \evalb{\phi}$
        \item $m^I_\phi(\vec{z}) > 0$ and $\vec{z} \in \evalb{\pertf[\F]{\phi}{m^I_\phi(\vec{z})}}$ for all $(I, m^I_\phi(\vec{z}))$ admissible approximations $\F$.
    \end{enumerate}
    where $B_{D}(\vec{z}, r)$ denotes $B(\vec{z}, r) \cap D$. 
\end{lemma}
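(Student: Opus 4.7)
The plan is to encode the perturbation tolerance as an auxiliary free variable and reduce to \rref{lem: continuously choose interior}. Write $\phi$ in prenex form $Q_1^{I_1}y_1 \cdots Q_m^{I_m}y_m\,\psi(\vec{x}, \vec{y})$ with quantifier-free body $\psi$; for each atomic sub-formula $a$ of $\psi$, enumerate its function terms $f_{a,1}(e_{a,1}), \ldots, f_{a, k_a}(e_{a, k_a})$ bottom-up so that each $e_{a,i}$ becomes function-free after the preceding substitutions. Define $\Psi(\vec{x}, s)$ by replacing each atomic $a$ in $\phi$ with
\[
\forall^{[-s, s]} v_{a,1} \cdots \forall^{[-s, s]} v_{a, k_a}\; a'(\vec{x}, \vec{y}, \vec{v}_a),
\]
where $a'$ results from the bottom-up substitution $f_{a,i}(e_{a,i}) \mapsto f_{a,i}(e_{a,i}) + v_{a,i}$. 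After prenexing the inner $\forall v_{a,\cdot}$ quantifiers, $\Psi \in \fmlbo(\L_D)$ is a bounded $>$-pure $\L_D$ formula with one extra free variable $s$, and $\Psi(\vec{x}, 0) \equiv \phi(\vec{x})$ by construction. Applying \rref{lem: continuously choose interior} to $\Psi$ yields a continuous $r_\Psi : \R^{n+1} \to \R^{\geq 0}$ with $r_\Psi(\vec{u}) > 0 \Leftrightarrow \vec{u} \in \eval{\Psi}$ and $B(\vec{u}, r_\Psi(\vec{u})) \subseteq \eval{\Psi}$.

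Set
\[
m^I_\phi(\vec{z}) := \min\!\left(\tfrac{1}{4},\; \tfrac{r_\Psi(\vec{z}, 0)}{4}\right).
\]
This is continuous and takes values in $[0, \sfrac{1}{4}]$; the equivalence (1) $\Leftrightarrow$ (2) follows immediately since $m^I_\phi(\vec{z}) > 0$ iff $r_\Psi(\vec{z}, 0) > 0$ iff $\Psi(\vec{z}, 0) \equiv \phi(\vec{z})$.

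The crux of the argument, and the step requiring care, is the semantic reduction $\Psi(\vec{z}, 2m) \Rightarrow$ condition (4) at $\vec{z}$, accomplished by an atomic-by-atomic change of variables. Fix any $(I, m)$-admissible $\F$; since the outer quantifier structure $Q\vec{y}$ in $\pertf[\F]{\phi}{m}$ and $\Psi$ coincides, it suffices to match each perturbed atomic $\pertf[\F]{a}{m}$. Given any $\vec{w}_a$ satisfying the perturbation constraints $|w_{a, i} - \F_{a, i}(e_{a, i}(\vec{z}, \vec{y}, \vec{w}_{a, <i}))| \leq m$, the triangle inequality combined with admissibility $\|\F_{a, i} - f_{a, i}\|_{I_{e_{a,i}}} < m$ yields $|w_{a, i} - f_{a, i}(e_{a, i}(\vec{z}, \vec{y}, \vec{w}_{a, <i}))| < 2m$. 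Setting $v_{a, i} := w_{a, i} - f_{a, i}(e_{a, i}(\vec{z}, \vec{y}, \vec{w}_{a, <i}))$ yields $\|\vec{v}_a\|_\infty < 2m$, and an induction on $i$ shows that the effective values appearing in $a'(\vec{z}, \vec{y}, \vec{v}_a)$ coincide exactly with $\vec{w}_a$. Hence the inner universal clauses of $\Psi(\vec{z}, 2m)$ subsume the corresponding quantifications in $\pertf[\F]{a}{m}$, giving $\Psi(\vec{z}, 2m) \Rightarrow \vec{z} \in \evalb{\pertf[\F]{\phi}{m}}$ for every admissible $\F$.

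With this reduction, the remaining implications are routine. For (2) $\Rightarrow$ (4), $\|(\vec{z}, 2 m^I_\phi(\vec{z})) - (\vec{z}, 0)\|_\infty \leq r_\Psi(\vec{z}, 0)/2 < r_\Psi(\vec{z}, 0)$, so $\Psi(\vec{z}, 2 m^I_\phi(\vec{z}))$ holds and the reduction yields (4). For (2) $\Rightarrow$ (3), any $\vec{z}' \in B_D(\vec{z}, m^I_\phi(\vec{z}))$ satisfies $\|(\vec{z}', 0) - (\vec{z}, 0)\| < r_\Psi(\vec{z}, 0)$, so $(\vec{z}', 0) \in \eval{\Psi}$ and $\phi(\vec{z}')$ holds. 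Implications (3) $\Rightarrow$ (2) and (4) $\Rightarrow$ (2) are trivial since both require $m^I_\phi(\vec{z}) > 0$.
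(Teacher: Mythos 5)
Your proposal is correct in substance but takes a genuinely different route from the paper's. The paper proves \rref{lem: continuously choose perturbation} by a simultaneous structural induction (over number of function symbols, then over formula complexity), explicitly constructing auxiliary continuous functions $m_1, m_2$ at each step and taking pointwise min/max over the compact quantified ranges. You instead \emph{internalize} the perturbation tolerance as an extra free variable $s$: you build a single auxiliary $>$-pure bounded $\L_D$ formula $\Psi(\vec{x}, s)$ by adding, for each function term in each atom, a fresh $\forall^{[-s,s]}$-quantified slack variable, apply \rref{lem: continuously choose interior} once to $\Psi$, and derive $m^I_\phi$ directly from $r_\Psi(\cdot, 0)$. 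The crux — $\Psi(\vec{z}, 2m)$ implies $\vec{z} \in \evalb{\pertf[\F]{\phi}{m}}$ for every admissible $\F$ — is handled by a per-atom change of variables $v_i := w_i - f_i(\cdot)$, which is essentially the same estimate the paper performs in its "$(3) \Rightarrow (4)$" step, but applied globally rather than nested inside the induction. This is cleaner and avoids the multi-case bookkeeping; the paper's approach buys a more local analysis that exposes how the radius shrinks under each syntactic operation, but your route gets the same continuous $m^I_\phi$ (in fact one independent of $I$) in one stroke.

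Three points deserve tightening. First, the phrase ``each $e_{a,i}$ becomes function-free after the preceding substitutions'' is misleading as stated: the substitution $f_{a,i}(e_{a,i}) \mapsto f_{a,i}(e_{a,i}) + v_{a,i}$ leaves the inner function symbols in place, so the arguments in $\Psi$ are \emph{not} function-free; what you actually need (and use) is only that the bottom-up enumeration ensures the change of variables $v_{a,i} := w_{a,i} - f_{a,i}(e^{\mathrm{eff}}_{a,i})$ telescopes so that the effective values coincide with $\vec{w}_a$, which your induction on $i$ does establish. Second, invoking $\|\F_{a,i} - f_{a,i}\|_{I_{e_{a,i}}} < m$ implicitly requires the effective argument $e_{a,i}(\vec{z}, \vec{y}, \vec{w}_{a,<i})$ to lie in $I_{e_{a,i}}$; this holds because $w_{a,j} \in I_{f_{a,j}(e_{a,j})}$ follows from admissibility, $m < \sfrac{1}{2}$, and the $1$-slack in \rref{def: enclosures}, but the step should be stated since the paper's own proof does check the analogous containment. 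Third, $\forall^{[-s,s]}v$ degenerates to a vacuous quantifier for $s < 0$, which strains the reparametrization-to-$[0,1]$ step inside \rref{lem: continuously choose interior}; the cleanest fix is to write the slack quantifiers as $\forall^{[0,1]}t$ and substitute $v = s(2t - 1)$ in the body, making $\Psi$ manifestly symmetric in $s$ and a bona-fide member of $\fmlbo(\L_D)$ at every value of $s$. With these patches the argument is sound.
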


\begin{proof}
    We prove the $(1)-(4)$ by simultaneous induction for all $\phi(\vec{x})$, first inducting over the number of function symbols and secondly inducting over the structural complexity, showing it for all $D, I$ at once. 
    \begin{itemize}
        \item $\phi(\vec{x})$ is function free: In this case, $m^I_\phi(\vec{x})$ can be defined as the function $\min(r_\phi(\vec{x}), \sfrac{1}{4})$ given by Lemma \ref{lem: continuously choose interior}. Properties $(1)-(3)$ are satisfied by Lemma \ref{lem: continuously choose interior} and $(4)$ is satisfied as the perturbation $\pertf[\F]{\phi}{m^I_\phi(\vec{x})}$ of a function-free formula $\phi$ is the original formula $\phi$ itself. 

         \item $\phi(\vec{x}) \equiv \psi(f(e), \vec{x})$, where $e$ is function-free and $\psi$ is quantifier-free: Consider the formula $\psi(w, \vec{x})$ equipped with the enclosure $J \coloneqq I_{w \mapsto f(e)}$ extending the domain $E \coloneqq D \cup \{w \mapsto I_{f(e)}\}$. Applying the inductive hypothesis on $\psi(w, \vec{x})$ then yields a function $m^J_{\psi} : E \to [0, \sfrac{1}{4}]$ with properties $(1) - (4)$. Define the following auxiliary continuous functions on the domain $D \subseteq \R^n$
        \begin{align*}
            m_1(\vec{z}) &= \frac{1}{4}m^J_\psi(f(e(\vec{z})), \vec{z})\\
            m_2(\vec{z}) &= \min_{\substack{\abs{\lambda - f(e(\vec{z}))} \leq 2m_1(\vec{z}) \\ \lambda \in I_{f(e)}}} m^J_\psi(\lambda, \vec{z})
        \end{align*}
        which are well defined by the extreme value theorem and noting that $f(e(\vec{z})) \in I_{f(e)}$ for $\vec{z} \in D$. Finally, $m^I_\phi(\vec{z})$ can be constructed via the following where $r_\phi(\vec{z})$ is the function given by Lemma \ref{lem: continuously choose interior} for $\phi(\vec{x})$
        \[m^I_\phi(\vec{z}) \coloneqq \min(r_\phi(\vec{z}), m_1(\vec{z}), m_2(\vec{z}))\]
        Note that $m^I_\phi$ is indeed continuous as taking pointwise minimums/minimums over compact intervals are continuous operations, so it remains to verify the equivalence of properties $(1) - (4)$. 
        \begin{itemize}
            \item $(1) \implies (2)$: Suppose $\vec{z} \in \evalb{\phi}$ holds. Since $r_\phi(\vec{z}) > 0$ follows directly by Lemma \ref{lem: continuously choose interior} and $(f(e(\vec{z})), \vec{z}) \in \eval{\psi}_E$ implies $m_1(\vec{z}) > 0$ by inductive hypothesis, it suffices to show $m_2(\vec{z}) > 0$. To this end, consider any $\lambda \in I_{f(e)}$ such that $\abs{\lambda - f(e(\vec{z}))} \leq 2m_1(\vec{z})$, we want to show $m^J_\psi(\lambda, \vec{z}) > 0$. Since $\vec{z} \in \evalb{\phi}$, this gives 
            \begin{align*}
                \vec{z} \in \evalb{\phi} &\implies (f(e(\vec{z})), \vec{z}) \in \eval{\psi}_E       \end{align*}
            Applying the inductive hypothesis $(3)$, it follows that
            \[B_{E}((f(e(\vec{z})), \vec{z}), m^J_\psi(f(e(\vec{z})), \vec{z})) \subseteq \eval{\psi}_E\]
            By construction and the fact that $m^J_\psi(f(e(\vec{z})), \vec{z}) > 0$, this gives
            \[\abs{\lambda - f(e(\vec{z}))} \leq 2m_1(\vec{z}) = \frac{1}{2}m^J_\psi(f(e(\vec{z})), \vec{z}) < m^J_\psi(f(e(\vec{z})), \vec{z})\]
            Therefore
            \[\norm{(\lambda, \vec{z}) - (f(e(\vec{z})) , \vec{z})} = \abs{\lambda - f(e(\vec{z}))} < m^J_\psi(f(e(\vec{z})), \vec{z})\]
            implying
            \[(\lambda, \vec{z}) \in B_{E}((f(e(\vec{z})), \vec{z}), m^J_\psi(f(e(\vec{z})), \vec{z})) \implies (\lambda, \vec{z}) \in \eval{\psi}_E\]
            another application of the inductive hypothesis $(2)$ then gives
            \[m^J_\psi(\lambda, \vec{z}) > 0\]
            establishing the claim. 

            \item $(2) \implies (3)$: By definition of $m^I_\phi$, $m^I_\phi(\vec{z}) > 0$ implies $r_\phi(\vec{z}) > 0$, from which $(3)$ follows by Lemma \ref{lem: continuously choose interior}.

            \item $(3) \implies (4)$: Let $\F$ be an arbitrary $(I, m^I_\phi(\vec{z}))$ admissible approximation.  Recall that the $\delta$-perturbation of $\phi(\vec{x})$ for an approximation $\F$ gives:
            
            \[\pertf[\F]{\phi}{m^I_\phi(\vec{z})} = \forall^{[-m^I_\phi(\vec{z}), m^I_\phi(\vec{z})] + \F_{f(e)}(e)} w~\pertf[\F_{w \mapsto f(e)}]{\psi}{m^I_\phi(\vec{z})}\]
            
            Let $w$ be arbitrary satisfying $\abs{w - \F_{f(e)}(e(\vec{z}))} \leq m^I_\phi(\vec{z})$, it remains to establish 
            
            \[(w, \vec{z}) \in \eval{\pertf[\F_{w \mapsto f(e)}]{\psi}{m^I_\phi(\vec{z})}}_E\]
            
            To this end, first notice that by construction of $J$, the approximation $\G \coloneqq \F_{w \mapsto f(e)}$ is $(J, m^I_{\phi}(\vec{z}))$ admissible. By admissibility of $\F$, $w$ satisfies
            \begin{align*}
                \abs{w - \F_{f(e)}(e(\vec{z}))} \leq m^I_\phi(\vec{z}) &\implies \abs{w - f(e(\vec{z}))} \leq 2m^I_\phi(\vec{z})\\
                &\implies \abs{w - f(e(\vec{z}))} \leq \frac{1}{2}m^J_\psi(f(e(\vec{z})), \vec{z})\\
                &\implies (w, \vec{z}) \in B_E((f(e(\vec{z})), \vec{z}), m^J_\psi(f(e(\vec{z})), \vec{z}))
            \end{align*}
            where $w \in I_{f(e)}$ follows from the admissibility of $\F$ and $m^I_\phi(\vec{z}) < \sfrac{1}{2}$. Since $0 < m^I_\phi(\vec{z}) \leq m^J_{\psi}(f(e(\vec{z})), \vec{z})$, this implies 
            \[B_E((f(e(\vec{z})), \vec{z}), m^J_\psi(f(e(\vec{z})), \vec{z})) \subseteq \eval{\psi}_E \implies (w, \vec{z}) \in \eval{\psi}_E\]
            By definition of $m^J_\psi(w, \vec{x})$, and applying inductive hypothesis $(4)$, this further implies
            \begin{align*}
                (w, \vec{z}) \in \eval{\pertf[\G]{\psi}{m^J_\psi(w, \vec{z})}}_E
            \end{align*}
            Thus, it remains to show $m^I_\phi(\vec{z}) \leq m^J_\psi(w, \vec{z})$. It follows by construction of $m_2(\vec{z})$ that $m_2(\vec{z}) \leq m^J_\psi(w, \vec{z})$ as $w \in I_{f(e)}$ and $\abs{w - f(e(\vec{z}))} \leq 2m_1(\vec{z})$, thus $m^I_\phi(\vec{z}) \leq m^J_\psi(w, \vec{z})$, completing the proof as desired. 

            \item $(4) \implies (1)$: This follows directly from Theorem \ref{thm: proof rules for delta perturbations} as $\pertf[\F]{\phi}{\delta} \rightarrow \phi$ is always (provably) valid for admissible approximations $\F$. 
        \end{itemize}
        
        \item $\phi(\vec{x}) \equiv \exists^{[a, b]}y\psi(y, \vec{x})$: First note that an enclosure $I$ of $\phi$ is also an enclosure of $\psi$, extending the domain $E \coloneqq D \cup \{y \mapsto I_y\}$. Similar to the previous case, $m^I_\phi$ can be constructed as follows
        \begin{align*}
            &m_2(\vec{z}) = \max_{w \in [a(\vec{z}), b(\vec{z})]} m^I_\psi(w, \vec{z})\\
            &m^I_\phi(\vec{z}) = \min(r_\phi(\vec{z}), m_2(\vec{z}))
        \end{align*}
        Since $m^I_\phi(\vec{z}) \leq r_\phi(\vec{z})$, the equivalence of properties $(1), (2), (3)$ amounts to proving $(1) \implies m_2(\vec{z}) > 0$ where $\vec{z} \in D$ satisfies $(1)$. Indeed, we have
        \[\vec{z} \in \evalb{\phi} \implies \mathbf{\exists}^{[a(\vec{z}), b(\vec{z})]} w~(w, \vec{z}) \in \eval{\psi}_E\]
        Therefore, there is some $w \in [a(\vec{z}), b(\vec{z})]$ where $m^I_\psi(w, \vec{z}) > 0$ and therefore
        \[m_2(\vec{z}) \geq m_\psi(w, \vec{z}) > 0\]
        so properties $(1), (2), (3)$ are equivalent and it suffices to show $(2) \implies (4)$. As such, assume that $(2)$ is satisfied and fix some $\vec{z} \in D$ with $m^I_\phi(\vec{z}) > 0$. Further denote $w \in [a(\vec{z}), b(\vec{z})]$ as some element attaining $m^I_\psi(w, \vec{z}) = m_2(\vec{z}) \geq m^I_\phi(\vec{z}) > 0$. Standard manipulations conclude:
        \begin{align*}
            m^I_\psi(w, \vec{z}) > 0 &\implies (w, \vec{z}) \in \eval{\pertf[\F]{\psi}{m^I_\psi(w, \vec{z})}}_E\\
            &\implies \vec{z} \in \evalb{\exists^{[a, b]}y~\pertf[\F]{\psi}{m^I_\psi(w, \vec{z})}}\\
            &\implies \vec{z} \in \evalb{\pertf[\F]{\phi}{m^I_\phi(\vec{z})}}
        \end{align*}
        
        \item $\phi(\vec{x}) \equiv \forall^{[a, b]}y~\psi(y, \vec{x})$: Similar to the case above, define:
        \begin{align*}
            &m_2(\vec{z}) = \min_{w \in [a(\vec{z}), b(\vec{z})]} m_\psi(w, \vec{z})\\
            &m^I_\phi(\vec{z}) = \min(r_\phi(\vec{z}), m_2(\vec{z}))
        \end{align*}
        and the rest of the argument is identical to the existential case above. This completes the final case in our induction and the proof is therefore complete. \qedhere
    \end{itemize}
\end{proof}
Theorem \ref{thm: atomic formulas are robust} now follows naturally.
\begin{proof}[Proof of Theorem \ref{thm: atomic formulas are robust}]
    First consider the case where $\phi \in \fml^>_B(\L_D)$, let $D$ be any domain of $\phi$ and $I$ any enclosure extending $D$, which necessarily (computably) exists by Theorem \ref{thm: compute enclosures} and $(\F_\delta)_\delta$ an arbitrary sequence of $I$ admissible approximations. By Theorem \ref{thm: proof rules for delta perturbations}, it suffices to establish that 
    \[\evalb{\phi} \subseteq \bigcup_{\delta > 0} \evalb{\pertf[\F_\delta]{\phi}{\delta}}\]
    Let $m^I_\phi : D \to [0, \sfrac{1}{4}]$ be the function given by Lemma \ref{lem: continuously choose perturbation}, and suppose $\vec{z} \in \evalb{\phi}$. It then follows by monotonicity of perturbations that for all sufficiently small $\delta > 0$ satisfying $\delta \leq m^I_\phi(\vec{z})$ the following holds
    \[\vec{z} \in \evalb{\pertf[\F_\delta]{\phi}{\delta}}\]
    since $\vec{z} \in D$ is arbitrary, this gives
    \[\evalb{\phi} = \bigcup_{\vec{z} \in \evalb{\phi}} \{\vec{z}\} \subseteq \bigcup_{\delta > 0 }\evalb{\pertf[\F_\delta]{\phi}{\delta}}\]
    thus $\phi$ is indeed $\forall$-robust. Now consider the case where $\phi(\vec{x}) \in \fml^\geq_B(L_D)$, which implies $\neg\phi(\vec{x}) \in \fml^>_B(L_D)$ when the negation is pushed through the quantifiers. It has already been established that $\neg\phi$ is $\forall$-robust, therefore it follows by definition that $\phi$ is $\exists$-robust, completing the proof.  
\end{proof}

Theorem \ref{thm: atomic formulas are robust} establishes that the collection of pure formulas is robust. On the other hand, Example \ref{ex: non-robust formula} provides the existence of non-robust formulas. Since robustness is a desired property, it is natural to ask if there exists some alternate approximation algorithm such that \emph{all} bounded formulas are robust. That is, as the error of approximation tends to $0$, the truth/validity of the $\folr$ approximants converges to that of the original formula. Fundamentally, the existence of such approximation algorithms would imply that the bounded theory ($\theoryb(\R_C)/\theoryb(\R_D)$) can be computed from $\jump{1}$ (i.e. the halting problem), and therefore resides within the lower levels ($\Delta^0_2$) of the arithmetic hierarchy \cite{Soare_2016}. In light of this observation, we show that there does not exist an approximation algorithm that achieves robustness for the theory $\theoryb(\R_C)$ of Type-Two computable functions by proving $\jump{\omega} \leq_m \theoryb(\R_C)$. That is, the bounded theory of $\R_C$ is at least as complicated as arithmetic.

\begin{definition}
    $\jump{n}$ denotes the $n$-th iterated Turing jump, and $\jump{\omega}$ denotes the uniform join of all finite Turing jumps defined via
    \[\jump{\omega} = \{\ddiamond{n, k}{}~\vert~n \in \jump{k}\}\]
    where $\ddiamond{n, k}{}$ is the standard G\"odel encoding of the pair $(n, k)$ as a single natural number. 
\end{definition}

Intuitively, $\jump{1}$ denotes the standard halting problem, $\jump{2}$ denotes the halting problem relative to an oracle for $\jump{1}$, and $\jump{n}$ denotes the $n$-th iterated halting problem. $\jump{\omega}$ then denotes the ``union'' of all finite iterations of the halting problem, i.e. all problems solvable using iterated halting problems as oracles. 

\begin{definition}
    For sets $A, B \subseteq \N$, we say $A \leq_m B$ ($A$ is $m$-reducible to $B$) if there is a computable function $f : \N \to \N$ such that for all $n \in \N$:
    \[n \in A \iff f(n) \in B\]
\end{definition}

\begin{theorem}
    \label{thm: arithmetic lower bound}
    $\theory_B(\R_C)$ is at least as difficult as arithmetic.
    \[\jump{\omega} \leq_m \theory_B(\R_C)\]
\end{theorem}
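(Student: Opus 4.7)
The plan is to exhibit a computable map $\ddiamond{n,k}{} \mapsto \phi_{n,k}$ from G\"odel codes of pairs to bounded sentences in $\sentb(\L_C)$ such that $n \in \jump{k}$ if and only if $\R_C \models \phi_{n,k}$; since G\"odel numbering is computable, this yields the desired $m$-reduction.

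First I will invoke the uniform $\Sigma^0_k$-completeness of $\jump{k}$: there is a decidable relation $R_k \subseteq \N^{k+1}$, uniformly computable in $k$, such that
\[
n \in \jump{k} \iff \exists m_1 \in \N~\forall m_2 \in \N~\cdots~Q_k m_k \in \N~R_k(n, m_1, \ldots, m_k).
\]
To convert the unbounded $\N$-quantifiers into bounded $[0,1]$-quantifiers, I will use the reciprocal encoding $m \leftrightarrow 1/(m+1)$, which injects $\N$ as a discrete subset of $(0,1]$. The sentence $\phi_{n,k}$ will then take the shape
\[
\phi_{n,k} \equiv Q_1^{[0,1]} x_1 \cdots Q_k^{[0,1]} x_k~F_{n,k}(x_1, \ldots, x_k) > 0
\]
for a single Type-Two computable function $F_{n,k} \in \R_C$, built uniformly in $(n,k)$, that encodes $R_k$ at reciprocal points of $[0,1]^k$ and has polarity-aware default values elsewhere.

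To construct $F_{n,k}$, I fix a smooth bump $\beta: \R \to [0,1]$ with $\beta(0) = 1$ and $\mathrm{supp}(\beta) \subseteq (-1/2, 1/2)$, set $r_m := 1/(4(m{+}1)(m{+}2))$ so that the scaled bumps $\beta_m(x) := \beta((x - 1/(m{+}1))/r_m)$ are pairwise disjoint and summably concentrated, and define $\rho(x) := \sum_{m} \beta_m(x)$. This $\rho$ is Type-Two computable, equals $1$ exactly at the reciprocals $1/(m{+}1)$, and vanishes outside all bumps. Writing $\sigma(n, \vec{m}) \in \{+1, -1\}$ for the sign of $R_k(n, \vec{m})$, and setting $s_j := +1/2$ if $Q_j = \forall$ and $s_j := -1/2$ if $Q_j = \exists$, I define
\[
F_{n,k}(\vec{x}) := \sum_{j=1}^{k} s_j \Bigl(\prod_{l < j} \rho(x_l)\Bigr)(1 - \rho(x_j)) + \sum_{\vec{m} \in \N^k} \sigma(n, \vec{m}) \prod_{i=1}^{k} \beta_{m_i}(x_i).
\]
At an all-reciprocal point one checks $F_{n,k} = \sigma(n, \vec{m}) = \pm 1$, matching $R_k$; at a point whose least non-reciprocal coordinate is $j$, the first sum collapses to $s_j$ while the bump sum vanishes, so $F_{n,k} = s_j$. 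The polarity of $s_j$ is then calibrated so that the $j$-th quantifier $Q_j$ behaves correctly on non-reciprocal inputs (a $\forall$ cannot be spuriously defeated, an $\exists$ cannot spuriously succeed). A structural induction on $k$ should then yield $\R_C \models \phi_{n,k} \iff n \in \jump{k}$.

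The main obstacle will be verifying that $F_{n,k}$ is genuinely Type-Two computable with an effective modulus of continuity, and that the transition regions of the bumps (where $\rho \in (0,1)$) do not flip the sign of $F_{n,k}$ in an unintended way. Computability follows from the rapid decay $r_m = O(1/m^2)$, making the infinite series effectively convergent and ensuring that for any finite-precision approximation of $\vec{x}$ only boundedly many $\vec{m}$ contribute. For the transition regions, a local case analysis using $|s_j| = 1/2 < 1$ should show that at each reciprocal core the bump value $\pm 1$ dominates, while the polarity term $s_j$ dominates on the non-reciprocal bulk, pinning down the sign of $F_{n,k}$ correctly in every case. Continuity at the boundary $x_i = 0$ is automatic since $\rho(0) = 0$ and bump heights tend to zero. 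Tracing the entire construction through a standard G\"odel numbering then confirms that $\ddiamond{n,k}{} \mapsto \lceil \phi_{n,k} \rceil$ is computable, completing the reduction.
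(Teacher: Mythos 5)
Your proposal takes a genuinely different route from the paper — a direct reduction via the $\Sigma^0_k$ normal form of $\jump{k}$, with a purpose-built function $F_{n,k}$ per pair $(n,k)$ — rather than the paper's strategy of interpreting $\langle \N, 0, 1, +, \cdot\rangle$ inside $\R_C$ and then transferring via $\theory(\N) \equiv_m \jump{\omega}$. Unfortunately the construction as written fails at a load-bearing point.

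The fatal problem is continuity of $\rho$ at $0$. You define $\beta_m(x) = \beta\bigl((x - 1/(m{+}1))/r_m\bigr)$ with $\beta(0) = 1$, so every bump $\beta_m$ has \emph{height exactly} $1$, only the widths $r_m$ shrink. Hence $\rho(1/(m{+}1)) = 1$ for all $m$, while $\rho(0) = 0$; since $1/(m{+}1) \to 0$, we get $\limsup_{x \to 0^+}\rho(x) = 1 \neq \rho(0)$, so $\rho$ is discontinuous at the accumulation point. Your justification — ``continuity at the boundary $x_i = 0$ is automatic since $\rho(0) = 0$ and bump heights tend to zero'' — is simply incorrect: the heights do not tend to zero. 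A discontinuous function is not Type-Two computable, so $\rho$, and hence $F_{n,k}$, is not in $\L_C$ and the reduction collapses. This is precisely the pitfall the paper's construction is engineered to avoid: the paper's auxiliary function $g$ \emph{vanishes} at the lattice points $\alpha(n) = 1 - 2^{-n}$ and has peaks of \emph{decaying} height $2^{-n}$ between them, which is what makes $g$ continuous at the accumulation point and definable in $\L_C$. You cannot salvage the polarity machinery by simply decaying the bump heights, because your whole analysis hinges on $\rho$ being exactly $1$ at reciprocals and $0$ off-bump; with decaying heights, $1 - \rho(x_j)$ is not $0$ at reciprocals and the sign bookkeeping no longer closes.

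There is also a secondary gap: $F_{n,k}$ is a $k$-ary function, but $\R_C$ is by definition the expansion of $\R$ by \emph{univariate} computable functions. The paper handles this by composing with a computable space-filling curve (Lemma~\ref{lem: space-filling curve}); your sketch never addresses it. That one is patchable (write the matrix as $\exists^{[0,1]}z\,\bigl(\bigwedge_i \rho_i(z) = x_i \land (F_{n,k}\circ\rho)(z) > 0\bigr)$ for a space-filling $\rho$), but it does need to be said. The continuity issue, by contrast, requires a genuine redesign of the encoding.
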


Intuitively, Theorem \ref{thm: arithmetic lower bound} holds due to the versatility of computable functions, allowing one to encode arbitrary computable predicates in $\theoryb(\R_C)$. The following lemma proves the existence of a computable space filling curve which will be useful in establishing Theorem \ref{thm: arithmetic lower bound}.

\begin{lemma}[Computable space-filling curves {\cite[Section~3.3]{DBLP:journals/apal/BagavievBBBDKKMN25}}]
    \label{lem: space-filling curve}
    For all positive $n \in \N$, there exists a computable, continuous surjection $\rho : [0, 1] \to [0, 1]^n$. Furthermore, $\rho_n$ is uniformly computable in $n$.
\end{lemma}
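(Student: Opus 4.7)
The plan is to construct $\rho_n$ by iterating a two-dimensional space-filling curve. First I would construct a computable continuous surjection $\rho_2 : [0,1] \to [0,1]^2$ as a classical Hilbert curve. Define piecewise affine approximations $h_k : [0,1] \to [0,1]^2$ which trace the $k$-th level Hilbert curve, obtained by recursively subdividing the unit square into $4^k$ equal sub-squares and connecting their centers in the self-similar Hilbert order. Each $h_k$ is piecewise linear with rational breakpoints and is uniformly computable in $k$. By the self-similarity of the construction one obtains $\norm{h_{k+1} - h_k}_\infty \leq c\cdot 2^{-k}$ for an absolute constant $c$, so the sequence $(h_k)_k$ is uniformly Cauchy with a computable modulus, and its uniform limit $\rho_2 = \lim_k h_k$ is therefore computable and continuous.

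Next I would verify that $\rho_2$ is surjective: each $h_k$ passes through the centers of all $4^k$ sub-squares of the $k$-th subdivision, so the image of $\rho_2$ is dense in $[0,1]^2$; combined with the compactness of $[0,1]$ and continuity of $\rho_2$, the image is closed and dense, hence equal to $[0,1]^2$.

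For general $n$ I would extend by composition: set $\rho_1(t) = t$, and for $n \geq 2$ define
\[
\rho_n(t) \;=\; \bigl(\pi_1(\rho_2(t)),\; \rho_{n-1}(\pi_2(\rho_2(t)))\bigr),
\]
where $\pi_i$ denotes the $i$-th coordinate projection. Continuity is immediate from continuity of $\rho_2$ and $\rho_{n-1}$. Surjectivity follows by induction on $n$: given $(x_1, \ldots, x_n) \in [0,1]^n$, the inductive hypothesis yields $s \in [0,1]$ with $\rho_{n-1}(s) = (x_2, \ldots, x_n)$, and surjectivity of $\rho_2$ gives $t \in [0,1]$ with $\rho_2(t) = (x_1, s)$, whence $\rho_n(t) = (x_1, \ldots, x_n)$. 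Uniform computability in $n$ follows by a straightforward induction, since composition of uniformly computable functions is uniformly computable and the base case $\rho_2$ was already handled.

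The main obstacle is the base case: packaging the Hilbert curve approximations $h_k$ as a \emph{uniformly} computable sequence in $k$ with an explicit computable modulus of convergence requires careful bookkeeping of the orientation and position of each sub-square at each recursive level of the construction, although this is entirely classical. Once this is in place, the surjectivity argument reduces to a density-plus-compactness observation, and the extension to arbitrary $n$ by composition is elementary, so no further machinery is needed.
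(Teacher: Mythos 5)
The paper does not prove this lemma; it cites it to an external reference (\cite[Section~3.3]{DBLP:journals/apal/BagavievBBBDKKMN25}), so there is no in-paper proof to compare against. Your proposal supplies a correct explicit construction: the Hilbert-curve base case via uniformly-Cauchy piecewise-linear approximants with computable modulus, surjectivity by density plus compactness, and the inductive composition $\rho_n(t) = (\pi_1(\rho_2(t)), \rho_{n-1}(\pi_2(\rho_2(t))))$ for higher dimensions. This is the standard route and it works; in particular the uniform-in-$n$ claim is discharged because $\rho_n$ is a computable $(n{-}1)$-fold iteration of the fixed map $\rho_2$ together with projections, so the algorithm is parameterized by $n$ in an obvious way.

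One small presentational wrinkle: as written, ``for $n \geq 2$ define $\rho_n(t) = (\pi_1(\rho_2(t)), \rho_{n-1}(\pi_2(\rho_2(t))))$'' is circular at $n=2$ (it reduces to the tautology $\rho_2 = \rho_2$). You should instead take $\rho_1 = \mathrm{id}$ and the Hilbert curve $\rho_2$ as the two base cases, and let the recursion run from $n \geq 3$. This is purely cosmetic — the construction and all subsequent arguments are unaffected — but stating it that way removes the apparent self-reference. Beyond that, the proof is complete and sound.
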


\begin{lemma}
    \label{lem: encode predicates as formulas}
    Let $\alpha : \N \to [0, 1)$ be the function defined by $\alpha(n) = 1 - 2^{-n}$. There exists a computable function that takes in computable predicates $R \subseteq \N^n$ and outputs corresponding formulas $\phi_R(\vec{x}) \in \fml_B(\L_C)$ of the same arity such that for all $\vec{m} \in \N^n$, we have:
    \[R(\vec{m}) \iff \R_C \models \phi_R(\alpha(\vec{m}))\]
    Where $\alpha(\vec{m}) = (\alpha(\vec{m}_1), \cdots, \alpha(\vec{m}_n))$, and $R(\vec{m})$ denotes $\vec{m} \in R$. 
\end{lemma}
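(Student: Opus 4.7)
The plan is to reduce the $n$-variate query $R(\vec{m})$ to a univariate test by coupling a computable space-filling curve from Lemma \ref{lem: space-filling curve} with a single univariate computable encoding of $R$. Let $\rho_n : [0,1] \to [0,1]^n$ be this surjection, whose components $\rho_n^{(1)}, \ldots, \rho_n^{(n)}$ are all univariate computable functions and therefore belong to $\L_C$. The formula to output will be
\[
  \phi_R(\vec{x}) \;\equiv\; \exists^{[0,1]} y \left( \bigwedge_{i=1}^n \rho_n^{(i)}(y) = x_i \;\land\; h_R(y) > 0 \right),
\]
which is a bounded formula in $\fml_B(\L_C)$ as soon as $h_R : \R \to \R$ is supplied as a univariate computable function. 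The existential quantifier picks a preimage of $\vec{x}$ under $\rho_n$, on which the univariate function $h_R$ then decides whether the point $\alpha(\vec{m})$ represented by $\vec{x}$ lies in $R$.

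To construct $h_R$, first select, computably in $\vec{m}$, a representative preimage $y_{\vec{m}} \in \rho_n^{-1}(\alpha(\vec{m}))$. This exists by surjectivity of $\rho_n$ and is uniformly computable since $\alpha(\vec{m})$ is a computable point and the standard constructions of $\rho_n$ (e.g.\ Hilbert-style) admit a uniform computable right-inverse on computable inputs. Because $\alpha(\N^n)$ is a countable discrete subset of $[0,1]^n$ whose only accumulation point is $(1,\ldots,1)$, one can computably choose a radius $\eps_{\vec{m}} > 0$ such that $B(\alpha(\vec{m}), \eps_{\vec{m}})$ contains no other $\alpha(\vec{m}')$, and then pull this back through a computable modulus of continuity for $\rho_n$ to obtain a radius $\delta_{\vec{m}} > 0$ with $B(y_{\vec{m}}, \delta_{\vec{m}}) \subseteq \rho_n^{-1}(B(\alpha(\vec{m}), \eps_{\vec{m}}))$. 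Now set $h_R(y)$ to be the sum of triangular bumps of height $2^{-(m_1 + \cdots + m_n)}$ and support radius $\delta_{\vec{m}}$ centered at $y_{\vec{m}}$, one for each $\vec{m} \in R$, extended by $0$ outside the bump supports. Supports are pairwise disjoint by construction, the decay of the heights gives continuity of $h_R$ at accumulation points, and total computability follows because only the finitely many bumps with $m_1 + \cdots + m_n \le j$ can contribute more than $2^{-j}$ to $h_R(y)$, and whether $y$ lies in one of those finitely many supports is decidable from a sufficiently accurate name of $y$.

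The equivalence $R(\vec{m}) \iff \R_C \models \phi_R(\alpha(\vec{m}))$ is then immediate. If $\vec{m} \in R$, then $y = y_{\vec{m}}$ witnesses both conjuncts since $\rho_n(y_{\vec{m}}) = \alpha(\vec{m})$ and $h_R(y_{\vec{m}}) = 2^{-|\vec{m}|} > 0$. Conversely, if some $y$ satisfies $\rho_n(y) = \alpha(\vec{m})$ and $h_R(y) > 0$, then $y$ must lie in some bump centered at $y_{\vec{m}'}$ with $\vec{m}' \in R$; but this bump projects under $\rho_n$ into $B(\alpha(\vec{m}'), \eps_{\vec{m}'})$, which contains $\alpha(\vec{m}')$ and no other $\alpha$-point, so $\vec{m}' = \vec{m}$ and $\vec{m} \in R$. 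Uniformity in $R$ — the computability of the map $R \mapsto \phi_R$ — follows because every ingredient (the space-filling curve $\rho_n$, the preimage selection $\vec{m} \mapsto y_{\vec{m}}$, the radii $\eps_{\vec{m}}$ and $\delta_{\vec{m}}$, and a decision procedure for $\chi_R$) is computable uniformly in an index for $R$.

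The main obstacle is assembling $h_R$ as a genuine Type-Two computable total function on $\R$: we need the support radii $\delta_{\vec{m}}$ to be small enough that each bump is confined to $\rho_n^{-1}(B(\alpha(\vec{m}), \eps_{\vec{m}}))$ yet still computable in $\vec{m}$. This forces $\delta_{\vec{m}} \to 0$ as $|\vec{m}| \to \infty$, because neighboring $\alpha$-points cluster near $(1,\ldots,1)$, but this shrinkage is matched by the decay $2^{-|\vec{m}|}$ of bump heights, which is precisely what lets $h_R$ be continuous (in fact, uniformly approximable by zero outside any initial segment of bumps) and thereby computable in the TTE sense.
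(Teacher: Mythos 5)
Your overall strategy mirrors the paper's: both build a univariate computable function and read it off through a space-filling curve via the same formula $\exists^{[0,1]} y \bigl(\bigwedge_i \rho_i(y) = x_i \land h(y) > 0\bigr)$. However, the way you manufacture the univariate function $h_R$ has a genuine gap.

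Your construction places bumps at ``representative preimages $y_{\vec{m}} \in \rho_n^{-1}(\alpha(\vec{m}))$'' and claims this selection is ``uniformly computable since $\alpha(\vec{m})$ is a computable point and the standard constructions of $\rho_n$ \ldots admit a uniform computable right-inverse on computable inputs.'' That claim is not justified by the cited Lemma \ref{lem: space-filling curve}, which asserts only the existence of a computable surjection, and it is in fact delicate: computable continuous surjections need not admit computable multi-valued sections, because selecting a point from a nonempty closed preimage set is a form of closed choice, which is not in general Type-Two computable. Even for concrete Hilbert-style constructions, determining which quadrant a point falls in is undecidable at quadrant boundaries, and the points $\alpha(\vec m)$ are dyadic rationals that tend to sit exactly on such boundaries, so the difficulty is not merely hypothetical. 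Since the computability of $h_R$ stands or falls with the computability of the centers $y_{\vec{m}}$ (and of the matched radii $\delta_{\vec{m}}$), this is the load-bearing step and it is left unsupported.

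The paper sidesteps the entire issue by never inverting $\rho$. It defines a computable $f : [0,1]^n \to \R$ directly --- placing the values $\tfrac{1}{\norm{\vec m}}\chi_R(\vec m)$ at the lattice points $\alpha(\vec m)$ and filling in by multilinear interpolation, then extending by $0$ to the boundary --- and sets $h = f \circ \rho$. Composition of computable functions is computable, so no preimage ever has to be produced; the surjectivity of $\rho$ is used only to witness the existential quantifier semantically, not algorithmically. Both directions of the equivalence then read off immediately from $h(y) = f(\rho(y))$. This is both simpler and closes the gap your route opens.

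Two smaller inaccuracies worth flagging. First, ``$\alpha(\N^n)$ \ldots whose only accumulation point is $(1,\ldots,1)$'' is false for $n \geq 2$: any point with at least one coordinate equal to $1$ and the remaining coordinates of the form $1 - 2^{-m}$ is an accumulation point (e.g.\ $\alpha(1,k) \to (1/2, 1)$). The set is still discrete, so your choice of separating radii $\eps_{\vec m}$ survives, but the stated justification is wrong. Second, ``whether $y$ lies in one of those finitely many supports is decidable from a sufficiently accurate name of $y$'' is also false --- membership in a closed interval is not decidable from a Cauchy name because of boundary cases --- though this is repairable, since to approximate $h_R(y)$ to precision $2^{-k}$ one only needs to evaluate finitely many piecewise-linear bump functions approximately, not decide support membership exactly.
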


\begin{proof}
    See Appendix \ref{app: proofs}.
\end{proof}

Theorem \ref{thm: arithmetic lower bound} can now be proven by interpreting $\langle \N, 0, 1, +, \cdot \rangle$ in $\R_C$ via bounded formulas.

\begin{proof}[Proof of Theorem \ref{thm: arithmetic lower bound}]
    It suffices to (effectively) interpret $\langle \N, 0, 1, +, \cdot \rangle$. That is, it is sufficient to establish the following:
    \begin{enumerate}
        \item There exists some injection $\alpha : \N \to [0, 1]$ such that its image $\alpha[\N]$ and $\alpha(0), \alpha(1)$ are all definable in $\R_C$ with bounded formulas.
        \item A formula $\text{Add}(x, y, z) \in \fmlb(\R_C)$ such that for all $n_1, n_2, n_3 \in \N$
        \[n_1 + n_2 = n_3 \iff (\alpha(n_1), \alpha(n_2), \alpha(n_3)) \in \eval{\text{Add}}\]

        \item A formula $\text{Mult}(x, y, z) \in \fmlb(\R_C)$ such that for all $n_1, n_2, n_3 \in \N$
        \[n_1 \cdot n_2 = n_3 \iff (\alpha(n_1), \alpha(n_2), \alpha(n_3)) \in \eval{\text{Mult}}\]
    \end{enumerate}
    The existence of such constructs would imply $\jump{\omega} \equiv_m \theory(\N) \leq_m \theoryb(\R_C)$ as desired. Note that by defining $\alpha : \N \to [0, 1]$ as $\alpha(n) = 1 - 2^{-n}$, conditions (2), (3) are automatically satisfied by Lemma \ref{lem: encode predicates as formulas}, and $\alpha(0) = 0, \alpha(1) = \frac{1}{2}$ are both definable constants in $\R_C$. Thus, it remains to show that the image $\alpha[\N] \subseteq [0, 1]$ is definable using bounded formulas. First construct a computable function $g : [0, 1] \to \R$ as follows:
    \begin{itemize}
        \item For all $n \in \N$, define $g(\alpha(n)) = 0$.
        \item For all $n \in \N$, set $g\left(\frac{\alpha(n) + \alpha(n + 1)}{2}\right) = 2^{-n}$. Then linearly interpolate over the intervals $\left[\alpha(n), \frac{\alpha(n) + \alpha(n + 1)}{2}\right]$ and $\left[\frac{\alpha(n) + \alpha(n + 1)}{2}, \alpha(n + 1)\right]$. As linear interpolation is computable, this defines a computable function $g: [0, 1) \to [0, 1]$.
        \item Define $g(1) = 0$. Note that for all $n$, for all $x \in [\alpha(n), 1)$, $g(x) \leq 2^{-n}$ by construction. In particular, this yields a computable modulus of continuity at $x = 1$, and therefore $g : [0, 1] \to [0, 1]$ is a computable function \cite{Weihrauch_2000}.
    \end{itemize}
    Since $\L_C$ only permits total computable functions, take $h: \R \to \R$ to be some computable extension of $g$ \cite[ Expansion~theorem]{Pour-El_Richards_2017}. $\alpha[\N]$ can now be defined as follows to complete the proof
    \[\forall^{[0, 1]}x~x \in \alpha[\N] \iff (x \neq 1 \land h(x) = 0)\qedhere\]
\end{proof}

\section{Conclusion}
This article establishes a complete approximate axiomatization for $\theoryb(\R_D)$, the bounded theory of $\R$ expanded with differentially-defined functions. By utilizing perturbation of terms and crucially leveraging the deductive power of $\dL$ \cite{DBLP:journals/jacm/PlatzerT20,platzer2024axiomatizationcompactinitialvalue}, \emph{all} numerical properties needed for such perturbations can be proven symbolically and uniformly from a finite axiomatization of $\dL$, resulting in a sound axiomatization providing complete reasoning principles up to numerical perturbations. Furthermore, robustness and convergence of such numerical approximations are guaranteed under mild syntactic restrictions for pure formulas. Lastly, the article shows that such robustness properties cannot be attained for all of $\theoryb(\R_C)$ by proving a computability-theoretic lower bound on the strength of the theory. 

For future works, it would be interesting to implement the approximation procedures presented. We believe this paper lays the foundations for the development of practical, formally verified approximations that enable symbolic reasoning beyond deciding the truth of closed sentences.  

 \textit{Acknowledgment.}
Funding has been provided by an Alexander von Humboldt Professorship and the National Science Foundation under Grant No. CCF 2220311.

\newpage
\section*{Appendix}
\appendix
\section{$\dL$ Axiomatization}
\label{app: dL axiomatization}
\begin{theorem}[\cite{DBLP:conf/lics/Platzer12b, DBLP:journals/jacm/PlatzerT20, DBLP:journals/fac/TanP21}]
    \label{thm: base axiomatization of dL}
    The following are sound axioms of $\dL$. In axioms \irref{cont}, \irref{dadj}, \irref{bdg}, the variables $y$ is fresh. In axiom \irref{bdg}, $Q(x)$ is required to be a formula of real arithmetic. 

    \begin{calculus}
        \cinferenceRule[qear|\usebox{\Rval}]{quantifier elimination real arithmetic}
        {\linferenceRule[sequent]
          {}
          {\lsequent[g]{\Gamma}{\Delta}}
        }{$\text{if}~\landfold_{\ausfml\in\Gamma} \ausfml \limply \lorfold_{\busfml\in\Delta} \busfml ~\text{is valid in \LOS[\reals]}$}%
    
        \cinferenceRule[diamond|$\didia{\cdot}$]{diamond axiom}
        {\linferenceRule[equiv]
          {\lnot\dbox{\ausprg}{\lnot \ausfml}}
          {\ddiamond{\ausprg}{\ausfml}}
        }
        {}

        \cinferenceRule[evolved|$\didia{'}$]{evolve}
        {\linferenceRule[equiv]
          {\lexists{t{\geq}0}{\ddiamond{\pupdate{\pumod{x}{y(t)}}}{p(x)}}\hspace{1cm}}
          {\ddiamond{\pevolve{\D{x}=\genDE{x}}}{p(x)}}
        }{$\m{\D{y}(t)=\genDE{y}}$}%

        \cinferenceRule[B|B$'$]{}
        {\linferenceRule[equiv]
          {\lexists{y}{\ddiamond{\pevolvein{x'=f(x)}{Q(x)}}{\rfvar(x,y)}}}
          {\ddiamond{\pevolvein{x'=f(x)}{Q(x)}}{\exists{y}\rfvar(x,y)}}
        }{\text{$y \not\in x$}}
        
        \cinferenceRule[K|K]{K axiom / modal modus ponens} %
        {\linferenceRule[impl]
          {\dbox{\alpha}{(\fvarA \limply \fvarB)}}
          {(\dbox{\alpha}{\fvarA}\limply\dbox{\alpha}{\fvarB})}
        }{}
        \cinferenceRule[V|V]{vacuous $\dbox{}{}$}
         {\linferenceRule[impl]
           {\fvarA}
           {\dbox{\alpha}{\fvarA}}
         }{\text{no free variable of $\fvarA$ is bound by $\alpha$}}
        \cinferenceRule[G|G]{$\dbox{}{}$ generalization} %
        {\linferenceRule[formula]
          {\lsequent{}{\fvarA}}
          {\lsequent{\Gamma}{\dbox{\alpha}{\fvarA}}}
        }{}
        
        \cinferenceRule[dW|dW]{}
        {\linferenceRule
          {\lsequent{\ivr}{P}}
          {\lsequent{\Gamma}{\dbox{\pevolvein{\D{x}=\genDE{x}}{\ivr}}{P}}}
        }{}
        
        \cinferenceRule[dC|dC]{differential cut}%
        {\linferenceRule[sequent]
          {\lsequent[L]{}{\dbox{\pevolvein{\D{x}=\genDE{x}}{\ivr}}{\cusfml}}
          &\lsequent[L]{}{\dbox{\pevolvein{\D{x}=\genDE{x}}{(\ivr\land \cusfml)}}{\ousfml[x]}}}
          {\lsequent[L]{}{\dbox{\pevolvein{\D{x}=\genDE{x}}{\ivr}}{\ousfml[x]}}}
        }{}
        
        \cinferenceRule[DG|DG]{differential ghost variables}
        {\linferenceRule[viuqe]
          {\dbox{\pevolvein{x'=\genDE{x}}{\ivr(x)}}{\ousfml[x](x)}}
          {\lexists{y}{\dbox{\pevolvein{x'=\genDE{x}\syssep y'=a(x)\cdot y+b(x)}{\ivr(x)}}{\ousfml[x](x)}}}
        }
        {}
                \cinferenceRule[DGi|DGi]{differential ghost variables}
        {\linferenceRule[impl]
          {\dbox{\pevolvein{x'=\genDE{x}}{\ivr(x)}}{\ousfml[x](x)}}
          {\forall{y}{\dbox{\pevolvein{x'=\genDE{x}\syssep y'= g(x, y)}{\ivr(x)}}{\ousfml[x](x)}}}
        }
        {}

        \cinferenceRule[dx|DX]{differential skip}
        {
            \linferenceRule[equiv]
            {\left(Q \rightarrow P \land \dbox{x' = f(x) \& Q}{P}\right)}
            {\dbox{x'= f(x) \& Q}{P}}
        }
        {$x' \notin P, Q$}
        
        \cinferenceRule[uniq|Uniq]{vanilla uniqueness}
        {\linferenceRule[equiv]
        {\left(\ddiamond{x'= f(x) \& Q_1}{P}\right) \land \left(\ddiamond{x'= f(x) \& Q_2}{P}\right)}
        {\ddiamond{x'= f(x) \& Q_1 \land Q_2}{P}}
        }{}

        \cinferenceRule[cont|Cont]{continuity of ODE}
        {\linferenceRule[impl]
        {x = y}
        {\left(\ddiamond{x'= f(x) \& e > 0}{x \neq y} \leftrightarrow e > 0\right)}
        }{$f(x) \neq 0$}

        \cinferenceRule[dadj|Dadj]{reverse flow of ODEs}
        {\linferenceRule[equiv]
        {\ddiamond{y' = -f(y) \& Q(y)}{y = x}}
        {\ddiamond{x' = f(x) \& Q(x)}{x = y}}
        }{}

        \cinferenceRule[ri|RI]{real induction axiom}
        {\linferenceRule[equiv]
        {\forall y\dbox{x' = f(x) \& P \lor x = y}{\left(x = y \rightarrow P \land \ddiamond{x' = f(x) \& P \lor x = y}{x \neq y}\right)}}
        {\dbox{x' = f(x)}{P}}
        }{}
        
        \cinferenceRule[bdg|BDG]{bounded differential ghost}
       {
       \linferenceRule[impll]
       {\dbox{x' = f(x), y' = g(x, y) \& Q(x)}{\norm{y}^2 \leq p(x)}}
        {\left(\dbox{x' = f(x) \& Q(x)}{P(x)} \leftrightarrow \dbox{x' = f(x), y' = g(x, y) \& Q(x)}{P(x)}\right)}
       }
       {}
    \end{calculus}
\end{theorem}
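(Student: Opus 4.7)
The plan is to verify soundness axiom-by-axiom against the semantics described in Section~\ref{sec: prelim_dL}, showing that every instance of each axiom is valid and every rule preserves validity. Since the listed axiomatization is compiled directly from the cited literature \cite{DBLP:conf/lics/Platzer12b,DBLP:journals/jacm/PlatzerT20,DBLP:journals/fac/TanP21}, the core task is to collate the established soundness arguments and check that nothing in our setting (in particular, interpreting function symbols through \irref{FI} and restricting to the continuous fragment) compromises them.

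First I would dispatch the propositional and first-order core. \irref{qear} is sound by Tarski's theorem: any $\folr$-valid sequent evaluates to true in $\R$ under every state, so it is true in every $\dL$ interpretation. The modal-logic core \irref{diamond}, \irref{K}, \irref{V} and the generalization rule \irref{G} depend only on the abstract Kripke semantics of the $\dbox{\alpha}{}$/$\ddiamond{\alpha}{}$ modalities; the variable-condition in \irref{V} ensures that the truth of $\fvarA$ cannot be altered by the flow, and the arguments transfer unchanged from \cite{DBLP:conf/lics/Platzer12b}. The axioms \irref{evolved} and \irref{B} then reduce the ODE modalities to explicit solutions or to first-order existential witnesses; their soundness needs only that the auxiliary $y$ is a solution (for \irref{evolved}) or that $y$ is not bound by the ODE (for \irref{B}).

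Next I would handle the ODE-specific axioms whose soundness is a direct semantic calculation. \irref{dW} follows because any flow staying in the domain $Q$ on which $Q \rightarrow P$ holds automatically satisfies $P$; \irref{dC} follows because cutting in an already-proven invariant $C$ does not exclude any flow on which $C$ held throughout; \irref{dx} follows by considering the trivial zero-duration flow; and \irref{uniq} is a consequence of Picard--Lindel\"of uniqueness for polynomial vector fields, since two $\ddiamond{}{}$ witnesses from the same initial state trace the same curve up to the minimum of their existence times. The reverse-flow axiom \irref{dadj} uses the time-reversal symmetry of the flow under negating the vector field, and \irref{cont} codifies continuity of polynomial flows: when $f(x) \neq 0$ and $e > 0$ initially, the flow leaves the set $\{x = y\}$ in positive time and continuously stays in $\{e > 0\}$ briefly.

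The main technical obstacle lies in the ghost-variable and induction axioms. For \irref{DG} and \irref{DGi}, the forward direction (adjoining a fresh variable $y$ cannot falsify a postcondition on $x$ alone) is immediate by projecting flows; the challenging reverse direction requires that the adjoined $y$-equation admit a solution defined on the entire existence interval of the $x$-flow. This is precisely why \irref{DG} restricts to the linear-in-$y$ shape $y' = a(x)\cdot y + b(x)$, whose global existence is handled by Gr\"onwall's inequality along any $x$-trajectory, and why the general nonlinear \irref{bdg} requires the a priori bound $\norm{y}^2 \leq p(x)$ with $Q(x)$ restricted to real arithmetic so that the bound itself can be certified along the flow. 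The real induction axiom \irref{ri} is the most delicate: its soundness rests on a topological induction principle along the maximal existence interval of the flow, characterizing invariance through boundary behavior, as carried out in detail in \cite{DBLP:journals/jacm/PlatzerT20}. In each of these cases I would cite the specific soundness theorem from the source paper rather than reproduce the global-existence and boundary-induction arguments, since none of these axioms are modified for this article.
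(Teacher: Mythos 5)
Your proposal is correct and matches the paper's treatment: the paper establishes this theorem purely by citation to the listed prior works \cite{DBLP:conf/lics/Platzer12b, DBLP:journals/jacm/PlatzerT20, DBLP:journals/fac/TanP21}, giving no new proof, and your plan of collating those established soundness results (deferring the delicate cases \irref{DG}, \irref{bdg}, \irref{ri} to the cited theorems) is the same approach, with your semantic sketches of the routine axioms being accurate but not required.
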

\newpage
Theorem \ref{thm: base axiomatization of dL} provides a complete record of $\dL$'s axiomatization that is utilized in earlier works \cite{DBLP:journals/jacm/PlatzerT20,DBLP:journals/fac/TanP21,platzer2024axiomatizationcompactinitialvalue} to establish the completeness properties (Theorems \ref{thm: stone weierstrass for IVPs} and \ref{thm: stone weierstrass for functions}) used in this article. The usual FOL proof rules are also listed below for completeness \cite{DBLP:journals/jar/Platzer08}.\\
\begin{calculuscollection}
\begin{calculus}
    \cinferenceRule[notl|$\lnot$\leftrule]{$\lnot$ left}
    {\linferenceRule[sequent]
      {\lsequent[L]{}{\asfml}}
      {\lsequent[L]{\lnot \asfml}{}}
    }{}%
    \cinferenceRule[andl|$\land$\leftrule]{$\land$ left}
    {\linferenceRule[sequent]
      {\lsequent[L]{\asfml , \bsfml}{}}
      {\lsequent[L]{\asfml \land \bsfml}{}}
    }{}%
    \cinferenceRule[orl|$\lor$\leftrule]{$\lor$ left}
    {\linferenceRule[sequent]
      {\lsequent[L]{\asfml}{}
        & \lsequent[L]{\bsfml}{}}
      {\lsequent[L]{\asfml \lor \bsfml}{}}
    }{}%
    \cinferenceRule[notr|$\lnot$\rightrule]{$\lnot$ right}
    {\linferenceRule[sequent]
      {\lsequent[L]{\asfml}{}}
      {\lsequent[L]{}{\lnot \asfml}}
    }{}%
    \cinferenceRule[andr|$\land$\rightrule]{$\land$ right}
    {\linferenceRule[sequent]
      {\lsequent[L]{}{\asfml}
        & \lsequent[L]{}{\bsfml}}
      {\lsequent[L]{}{\asfml \land \bsfml}}
    }{}%
    \cinferenceRule[cut|cut]{cut}
    {\linferenceRule[sequent]
      {\lsequent[L]{}{\cusfml}
      &\lsequent[L]{\cusfml}{}}
      {\lsequent[L]{}{}}
    }{}%
    \cinferenceRule[orr|$\lor$\rightrule]{$\lor$ right}
    {\linferenceRule[sequent]
      {\lsequent[L]{}{\asfml, \bsfml}}
      {\lsequent[L]{}{\asfml \lor \bsfml}}
    }{}%
\end{calculus}
\qquad
\begin{calculus}
    \cinferenceRule[implyl|$\limply$\leftrule]{$\limply$ left}
    {\linferenceRule[sequent]
      {\lsequent[L]{}{\asfml}
        & \lsequent[L]{\bsfml}{}}
      {\lsequent[L]{\asfml \limply \bsfml}{}}
    }{}%
    \cinferenceRule[alll|$\forall$\leftrule]{$\lforall{}{}$ left instantiation}
    {\linferenceRule[sequent]
      {\lsequent[L]{p(\astrm)}{}}
      {\lsequent[L]{\lforall{x}{p(x)}}{}}
        \qquad{}\qquad{}
    }{arbitrary term $\astrm$}%
    \cinferenceRule[existsl|$\exists$\leftrule]{$\lexists{}{}$ left}
    {\linferenceRule[sequent]
      {\lsequent[L]{p(y)} {}}
      {\lsequent[L]{\lexists{x}{p(x)}} {}}
    }{\m{y\not\in\Gamma{,}\Delta{,}\lexists{x}{p(x)}}}%
    \cinferenceRule[implyr|$\limply$\rightrule]{$\limply$ right}
    {\linferenceRule[sequent]
      {\lsequent[L]{\asfml}{\bsfml}}
      {\lsequent[L]{}{\asfml \limply \bsfml}}
    }{}%
    \cinferenceRule[allr|$\forall$\rightrule]{$\lforall{}{}$ right}
    {\linferenceRule[sequent]
      {\lsequent[L]{}{p(y)}}
      {\lsequent[L]{}{\lforall{x}{p(x)}}}
    }{\m{y\not\in\Gamma{,}\Delta{,}\lforall{x}{p(x)}}}%
    \cinferenceRule[existsr|$\exists$\rightrule]{$\lexists{}{}$ right}
    {\linferenceRule[sequent]
      {\lsequent[L]{}{p(\astrm)}}
      {\lsequent[L]{}{\lexists{x}{p(x)}}}
    }{arbitrary term $\astrm$}%

    \cinferenceRule[id|id]{identity}
    {\linferenceRule[sequent]
      {\lclose}
      {\lsequent[L]{\asfml}{\asfml}}
    }{}%
\end{calculus}
\end{calculuscollection}

\section{Proofs}
\label{app: proofs}
This section provides all omitted proofs. 

\begin{proof}[Proof of Theorem \ref{thm: stone weierstrass for functions}]
    Recall that the differentially-defined function $h$ is annotated with some IVP having provable global existence. Without loss of generality, further assume that the IVP starts at time $t_0 = 0$, i.e. it is of the form
    \begin{align*}
        &x' = f(x)\\
        &x(0) = X
    \end{align*}
    Since the IVP has global existence, its solution $\phi : \R \to \R^n$ is well-defined and Theorem \ref{thm: stone weierstrass for IVPs} applies. Consequently, we may compute polynomials $\theta^{+}(t), \theta^{-}(t) \in \Q^n[t]$ such that the following are provable
    \begin{align}
        x = X \land t = 0 &\rightarrow \dbox{x' = f(x), t' = 1 \& t \leq T}{\norm{x - \theta^{+}(t)}^2 < (\sfrac{\eps}{4})^2} \label{fml: fwd time approx}\\
        x = X \land t = 0 &\rightarrow \dbox{x' = -f(x), t' = -1 \& t \geq -T}{\norm{x - \theta^{-}(t)}^2 < (\sfrac{\eps}{4})^2} \label{fml: bwd time approx}
    \end{align}
    Let $p \in \Q[t]$ be any polynomial such that $\norm{p - h}_{[-T, T]} < \sfrac{\eps}{4}$ holds, which necessarily (computably) exists by the classical Stone-Weierstrass theorem. It then follows that the following are valid formulas of $\folr$ and therefore provable with axiom $\irref{qear}$. 
    \begin{align}
         0 \leq t \leq T &\rightarrow \abs{p(t) - \theta^+_0(t)} < \sfrac{\eps}{2} \label{fml: fwd time poly}\\
        -T \leq t \leq 0 &\rightarrow \abs{p(t) - \theta^-_0(t)} < \sfrac{\eps}{2} \label{fml: bwd time poly}
    \end{align}
    where $\theta^+_0, \theta^-_0$ denote the first-coordinate projections of $\theta^+$ and $\theta^-$ respectively. On the other hand, applying axiom \irref{FI} in conjunction with the formulas (\ref{fml: fwd time approx}), (\ref{fml: bwd time approx}) proves the following 
    \begin{align}
        0 \leq t \leq T &\rightarrow \abs{h(t) - \theta^+_0(t)} < \sfrac{\eps}{4} \label{fml: fwd time error bound}\\
        -T \leq t \leq 0 &\rightarrow \abs{h(t) - \theta^-_0(t)} < \sfrac{\eps}{4} \label{fml: bwd time error bound}
    \end{align}
    combining formulas (\ref{fml: fwd time error bound})+(\ref{fml: fwd time poly}) and (\ref{fml: bwd time error bound})+(\ref{fml: bwd time poly}) proves the following
    \begin{align*}
        0 \leq t \leq T &\rightarrow \abs{h(t) - p(t)} < \sfrac{3\eps}{4}\\
        -T \leq t \leq 0 &\rightarrow \abs{h(t) - p(t)} < \sfrac{3\eps}{4}
    \end{align*}
    which then combines to prove the desired formula (\ref{fml: error bound}). 
\end{proof}

\begin{proof}[Proof of Lemma \ref{lem: mont perturbations}]
    Let $\psi(\vec{x}, \vec{y})$ denote the quantifier-free part of $\phi(\vec{x})$ (recall that $\phi(\vec{x})$ is assumed to be in prenex normal form), where $\vec{x}$ are the free variables and $\vec{y}$ are the bound variables. By construction of the perturbations, it suffices to show that $\pertf[\F]{\psi}{\delta} \rightarrow \pertf[\G]{\psi}{\frac{\delta - \norm{\F}_I}{2}}$ is valid in $I$. This can be proven inductively by first inducting on the number of function terms and then on the structural complexity:
    \begin{itemize}
        \item The base case where $\phi$ does not contain functions is trivial, since perturbation is the identity in this case. 
        \item The cases of boolean connectives and quantifiers hold trivially by inductive hypothesis.
        \item Consider the non-trivial case of $\psi \equiv t(f(e), \vec{z}) \succeq 0$ where $\vec{z} = (\vec{x}, \vec{y})$ and the term $e(\vec{z})$ is function-free. The perturbations corresponding to $\F, \G$ are:
        \begin{align*}
            &\pert^\forall_{\F}\left(t(f(e), \vec{z}) \succeq 0, \delta\right) = \forall^{[-\delta, \delta] + \F_{f(e)}(e)} w~\pert^\forall_{\F_{w \mapsto f(e)}}\left(t(w, \vec{z}) \succeq 0, \delta\right)\\
            &\pert^\forall_{\G}\left(t(f(e), \vec{z}) \succeq 0, \frac{\delta - \norm{\F}_I}{2}\right) = \forall^{\left[-\frac{\delta - \norm{\F}_I}{2}, \frac{\delta - \norm{\F}_I}{2}\right] + \G_{f(e)}(e)} w~ \pert^\forall_{\G_{w \mapsto f(e)}}\left(t(w, \vec{z}) \succeq 0, \frac{\delta - \norm{\F}_I}{2}\right)
        \end{align*}
        By applying the inductive hypothesis on $t(w, \vec{z}) \succeq 0$ with enclosure $I_{w \mapsto f(e)}$, it suffices to show that $\abs{w - \G_{f(e)}(e)} \leq \frac{\delta - \norm{\F}_I}{2} \implies \abs{w - \F_{f(e)}(e)} \leq \delta$. To this end, let $w \in \R$ be a witness to $\abs{w - \G_{f(e)}(e)} \leq \frac{\delta - \norm{\F}_I}{2}$. Since $\G$ is $(I, \frac{\delta - \norm{\F}_I}{2})$ admissible, this necessarily implies that $\abs{w - f(e)} < \delta - \norm{\F}_I$. Consequently, the triangle inequality gives
        \[\abs{w - \F_{f(e)}(e)} \leq \abs{w - f(e)} + \norm{\F}_I < \delta\]
        which completes the proof by induction.
    \end{itemize}
    To obtain corresponding results for existential perturbations, one could either follow the proof above but use $\pert^\exists$ instead, or utilize Lemma \ref{lem: perturbation of negated formulas} and directly apply duality.  
\end{proof}

\begin{proof}[Proof of Lemma \ref{lem: encode predicates as formulas}]
    Let $R \subseteq \N^n$ be a computable predicate. First construct a computable function $f: [0, 1]^n \to \R$ satisfying $f(\alpha(\vec{m})) > 0 \iff R(\vec{m})$ as follows:
    \begin{enumerate}
        \item For $\vec{m} \in \N^n$, define $f(\alpha(\vec{m})) = \frac{1}{\norm{\vec{m}}}\chi_{R}(\vec{m})$, where $\norm{\vec{m}}$ denotes the standard Euclidean norm, and $\chi_R : \N^n \to \{0, 1\}$ is the characteristic function of $R$. I.e. $\chi_R(\vec{m}) = 1 \iff \vec{m} \in R$.

        \item As $([\alpha(i), \alpha(i + 1)))_{i \in \N}$ forms a partition of $[0, 1)$, it follows that $\left(\prod_{1 \leq i \leq n} [\alpha(\vec{m}_i), \alpha(\vec{m}_i + 1))\right)_{\vec{m} \in \N^n}$ forms a partition of $[0, 1)^n$. For each box of the form $\prod_{1 \leq i \leq n} [\alpha(\vec{m}_i), \alpha(\vec{m}_i + 1))$, define $f$ on $\prod_{1 \leq i \leq n} [\alpha(\vec{m}_i), \alpha(\vec{m}_i + 1))$ by multilinear interpolation from the vertices. I.e. For $\vec{x} \in \prod_{1 \leq i \leq n} [\alpha(\vec{m}_i), \alpha(\vec{m}_i + 1))$, $f(\vec{x})$ is the weighted average of the values of $f$ at the vertices of the box $\prod_{1 \leq i \leq n} [\alpha(\vec{m}_i), \alpha(\vec{m}_i + 1))$, which are already defined, note that this interpolation procedure is computable. Doing this for all boxes defines a computable function $f: [0, 1)^n \to \R$. 

        \item For $\vec{x} \in [0, 1]^n \setminus [0, 1)^n$, define $f(\vec{x}) = 0$. Note that this preserves the continuity of $f$, since for any sequence of points $(z_k)_k \to \vec{x}$ with $(z_k)_k \subseteq [0, 1)^n$, it is necessarily the case that the vectors $\vec{m}$ in part (1) of the construction which forms the vertices of the box that $z_i$ belongs to diverges to $\infty$ in norm. From this it follows by construction that $f(z_i)_i \to 0$. 
        
        It remains to verify that $f$ is still computable. To this end, let $\vec{x} \in [0, 1]^n$ be arbitrary, $(\vec{q}_i)_{i \in \N} \subseteq \Q^n$ be a name of $\vec{x}$ and $k \in \N$ be arbitrary. To prove that $f$ is computable, it suffices to (uniformly) compute some $q \in \Q^n$ such that $\abs{q - f(\vec{x})} < 2^{-k}$. This can be achieved by first searching for some finite index $n_1 \in \N$ such that at least one of the following is true:
            \begin{itemize}
                \item $B(\vec{q}_{n_1}, 2^{-n_1}) \cap [0, 1]^n \subseteq [0, 1)^n$ I.e. $\vec{q}_{n_1}$ witnesses $\vec{x} \in [0, 1)^n$, $f(\vec{x})$ can then be computed by the computable procedure in construction (2).

                \item For all $\vec{y} \in B(\vec{q}_{n_1}, 2^{-n_1}) \cap [0, 1)^n$, it is the case that $\norm{\vec{m}} > 2^{k}$, where $\vec{m} \in \N^n$ characterizes the box that $\vec{y}$ belongs to in the sense of (2). In this case, we can simply output the desired value $q$ as $q = 0$. This is because $\vec{x} \in B(\vec{q}_{n_1}, 2^{-n_1})$, so either $\vec{x} \in [0, 1]^n \setminus [0, 1)^n$, in which case $f(\vec{x})$ is defined to be $0$. Or $\vec{x} \in B(\vec{q}_{n_1}, 2^{-n_1}) \cap [0, 1)^n$, so the value of $f$ at the vertices of $\vec{x}$ will all be smaller than $2^{-k}$ as $f(\vec{x}) \leq \frac{1}{\norm{\vec{m}}} < 2^{-k}$.
            \end{itemize}
        It therefore remains to show that such an $n_1$ exists. Notice that if $\vec{x} \in [0, 1)^n$ holds, then some such $n_1$ necessarily exists. So suppose that $\vec{x} \in [0, 1]^n \setminus [0, 1)^n$, then the sequence $\vec{q}_n$ converges to some element in $[0, 1]^n \setminus [0, 1)^n$. Suppose for the sake of contradiction that the second condition does not hold for all $n_1$, then this gives a sequence $(\vec{y}_i)_i \subseteq [0, 1)^n$ converging to $\vec{x}$ such that the norms $\norm{\vec{m}_i}$ does not diverge where $\vec{m}_i \in \N^n$ are the corners of $\vec{y}_i$'s, contradicting the construction of $\alpha(\vec{m})$. Hence, such a $n_1$ necessarily exists, and the search is therefore computable. 
    \end{enumerate} 
    The construction yields a computable function $f : [0, 1]^n \to \R$ such that $R(\vec{m}) \iff f(\alpha(\vec{m})) > 0$. However, recall that $\L_C$ only allows for univariate functions, yet $f$ is $n$-ary. To this end, let $\rho : [0, 1] \to [0, 1]^n$ denote the computable space-filling curve given by Lemma \ref{lem: space-filling curve}, and denote by $\rho_i$ the natural (computable) projections for $1 \leq i \leq n$. As $f$ is computable, it follows that the function $h \coloneqq f \circ \rho : [0, 1] \to \R$ is computable as well. Without loss of generality, we may further computably extend $h$'s domain to all of $\R$ \cite[Extension Theorem]{Pour-El_Richards_2017}. It then follows by construction that the following equivalences hold:
    \[R(\vec{m}) \iff f(\alpha(\vec{m})) > 0 \iff \exists^{[0, 1]}y \left(\bigwedge_{1 \leq i \leq n} \rho_i(y) = \alpha_i(\vec{m}) \land h(y) > 0\right)\]
    which is a bounded formula in the language $\L_C$, completing the proof. 
\end{proof}

\bibliographystyle{ACM-Reference-Format}
\bibliography{references_arxiv}

\end{document}